\PassOptionsToPackage{
		pagebackref
	}{hyperref}

\documentclass[format=acmsmall, review=false, screen, nonacm]{acmart}
\usepackage{booktabs} %
\usepackage[ruled]{algorithm2e} %

\SetAlFnt{\small}
\SetAlCapFnt{\small}
\SetAlCapNameFnt{\small}
\SetAlCapHSkip{0pt}
\IncMargin{-\parindent}

\usepackage{amsmath}
\usepackage{amsthm}
\usepackage{microtype}
\usepackage{thmtools}
\usepackage{thm-restate}

\AtEndPreamble{%
	\theoremstyle{acmdefinition}
	\newtheorem{remark}[theorem]{Remark}
	
	}

\usepackage[nameinlink]{cleveref}
\usepackage{crossreftools}
\pdfstringdefDisableCommands{%
	\let\Cref\crtCref
	\let\cref\crtcref
}

\usepackage{subcaption}
\captionsetup[subfigure]{subrefformat=simple,labelformat=simple}

\usepackage{tikz}
\usetikzlibrary{positioning}
\usepackage{xcolor}
\usepackage{nicefrac}
\usepackage{paralist}
\usepackage[colorinlistoftodos,textsize=footnotesize]{todonotes}
\tikzset{/tikz/notestyleraw/.append style={rounded corners=0pt,inner sep=0.6ex}}

\usepackage{graphicx} %
\usepackage{multirow} %

\usepackage{wrapfig}
\usepackage{etoolbox}
\makeatletter
\patchcmd\WF@putfigmaybe{\lower\intextsep}{}{}{\fail}%
\AddToHook{env/wrapfigure/begin}{\setlength{\intextsep}{0pt}}
\makeatother

\usepackage{xcolor}
\usepackage{ifthen}
\definecolor{alt-a}{HTML}{ffadad}
\definecolor{alt-b}{HTML}{ffd6a5}
\definecolor{alt-c}{HTML}{cdd1fa}
\definecolor{alt-c'}{HTML}{cdd1fa}
\definecolor{alt-x}{HTML}{fad4d7}
\definecolor{alt-d}{HTML}{fad4d7}
\definecolor{alt-e}{HTML}{b9d2fa}
\definecolor{alt-f}{HTML}{cbc2ff}
\definecolor{alt-g}{HTML}{ffc6ff}
\newcommand{\alternative}[2][default]{%
	\tikz[transform shape,scale=0.85]{
		\def\colorparam{#1}
		\ifthenelse{\equal{#1}{default}}{
			\def\colorparam{alt-#2}
		}{}
		\node[anchor=base, yshift=0.5pt, text height=6pt,inner sep=1pt,fill=\colorparam, circle,minimum width=12pt] {};
		\node[anchor=base, text height=6pt, inner sep=1pt,minimum width=12pt,scale=1.15] {$#2$};
	}
}
\newcommand{\borderalternative}[2][default]{%
	\tikz[transform shape,scale=0.85]{
		\def\colorparam{#1}
		\ifthenelse{\equal{#1}{default}}{
			\def\colorparam{alt-#2}
		}{}
		\node[anchor=base, yshift=0.5pt, text height=6pt,inner sep=1pt,fill=\colorparam, circle,minimum width=12pt,draw=black!70,very thin] {};
		\node[anchor=base, text height=6pt, inner sep=1pt,minimum width=12pt,scale=1.15] {$#2$};
	}
}
\newcommand{\weakorder}[2][noborder]{%
	\begin{tikzpicture}
		[yscale=0.5]
		\foreach \row [count=\y] in {#2} {
			\ifthenelse{\y>1 \OR \equal{#1}{noborder}}{
				\node [anchor=south, inner sep=0.5pt] at (0, -\y) {
					\tikz[xscale=0.37]{
						\foreach \val/\color [count=\x] in \row {
							\ifthenelse{\equal{\val}{\color}}
								{\def\color{default}} %
								{}
							\node [inner sep=0] at (\x, 0) {\alternative[\color]{\val}};
						};}
				};
			}{
				\node [anchor=south, inner sep=0.5pt] at (0, -\y) {
					\tikz[xscale=0.37]{
						\foreach \val/\color [count=\x] in \row {
							\ifthenelse{\equal{\val}{\color}}
								{\def\color{default}} %
								{}
							\node [inner sep=0] at (\x, 0) {\borderalternative[\color]{\val}};
						};}
				};
			}
		}
	\end{tikzpicture}
}
\newcommand{\votermultiplicity}[2]{%
	\begin{tikzpicture}
		\node[inner sep=1pt] at (0,0) {#1};
		\node[anchor=north,inner sep=0] at (0,-0.3) {#2};
	\end{tikzpicture}
}

\newcommand{\ExternalLink}{%
	\tikz[x=1.2ex, y=1.2ex, baseline=-0.05ex]{%
		\begin{scope}[x=1ex, y=1ex]
			\clip (-0.1,-0.1) 
			--++ (-0, 1.2) 
			--++ (0.6, 0) 
			--++ (0, -0.6) 
			--++ (0.6, 0) 
			--++ (0, -1);
			\path[draw, 
			line width = 0.5, 
			rounded corners=0.5] 
			(0,0) rectangle (1,1);
		\end{scope}
		\path[draw, line width = 0.5] (0.5, 0.5) 
		-- (1, 1);
		\path[draw, line width = 0.5] (0.6, 1) 
		-- (1, 1) -- (1, 0.6);
	}%
}

\newcommand{\xmark}{%
	\tikz[scale=0.23,draw=black!50!red] {
		\draw[line width=0.7,line cap=round] (0,0) to [bend left=6] (1,1);
		\draw[line width=0.7,line cap=round] (0.2,0.95) to [bend right=3] (0.8,0.05);
}}
\newcommand{\cmark}{%
	\tikz[scale=0.23,draw=green!50!black] {
		\draw[line width=0.7,line cap=round] (0.25,0) to [bend left=10] (1,1);
		\draw[line width=0.8,line cap=round] (0,0.35) to [bend right=1] (0.23,0);
}}

\renewcommand{\epsilon}{\varepsilon}
\renewcommand{\le}{\leqslant}

\renewcommand{\ge}{\geqslant}

\newcommand{\pref}{\succcurlyeq}
\renewcommand{\succeq}{\succcurlyeq}

\usepackage{rotating}
\newcommand{\rot}[1]{\begin{turn}{90}#1\enspace\end{turn}}

\usepackage{tikz}
\usepackage{pgfplots}
\pgfplotsset{compat=1.15,
legend image code/.code={
\draw[mark repeat=2,mark phase=2]
plot coordinates {
(0cm,0cm)
(0.15cm,0cm)        %
(0.3cm,0cm)         %
};%
}}
\usepgfplotslibrary{groupplots}
\setcitestyle{authoryear}

\title{Generalizing Instant Runoff Voting to Allow Indifferences}

\author{Théo Delemazure}
\affiliation{%
	\institution{CNRS, LAMSADE, Université Paris Dauphine - PSL}
	\country{France}
}
\email{theo.delemazure@lamsade.dauphine.fr}

\author{Dominik Peters}
\affiliation{%
	\institution{CNRS, LAMSADE, Université Paris Dauphine - PSL}
	\country{France}
}
\email{dominik.peters@lamsade.dauphine.fr}

\begin{abstract}
	{\large Manuscript: April 2024}
	
	\bigskip
	\noindent
	Instant Runoff Voting (IRV) is used in elections for many political offices around the world. It allows voters to specify their preferences among candidates as a ranking. We identify a generalization of the rule, called Approval-IRV, that allows voters more freedom by allowing them to give equal preference to several candidates. Such weak orders are a more expressive input format than linear orders, and they help reduce the cognitive effort of voting.
	
	Just like standard IRV, Approval-IRV proceeds in rounds by successively eliminating candidates. It interprets each vote as an approval vote for its most-preferred candidates among those that have not been eliminated. At each step, it eliminates the candidate who is approved by the fewest voters. Among the large class of scoring elimination rules, we prove that Approval-IRV is the unique way of extending IRV to weak orders that preserves its characteristic axiomatic properties, in particular independence of clones and respecting a majority's top choices. We also show that Approval-IRV is the unique extension of IRV among rules in this class that satisfies a natural monotonicity property defined for weak orders.
	
	Prior work has proposed a different generalization of IRV, which we call Split-IRV, where instead of approving, each vote is interpreted as splitting 1 point equally among its top choices (for example, 0.25 points each if a vote has 4 top choices), and then eliminating the candidate with the lowest score. Split-IRV fails independence of clones, may not respect majority wishes, and fails our monotonicity condition.
	
	The multi-winner version of IRV is known as Single Transferable Vote (STV). We prove that Approval-STV continues to satisfy the strong proportional representation properties of STV, underlining that the approval way is the right way of extending the IRV/STV idea to weak orders.
\end{abstract}

\begin{document}

\begin{titlepage}
\maketitle
\vspace{5pt}
\hrule
\vspace{10pt}
\tableofcontents
\vspace{-18pt}
\hrule
\end{titlepage}
\addtocounter{page}{1}

\section{Introduction}
\begin{wrapfigure}{r}{0.25\textwidth}
	\includegraphics[width=\linewidth]{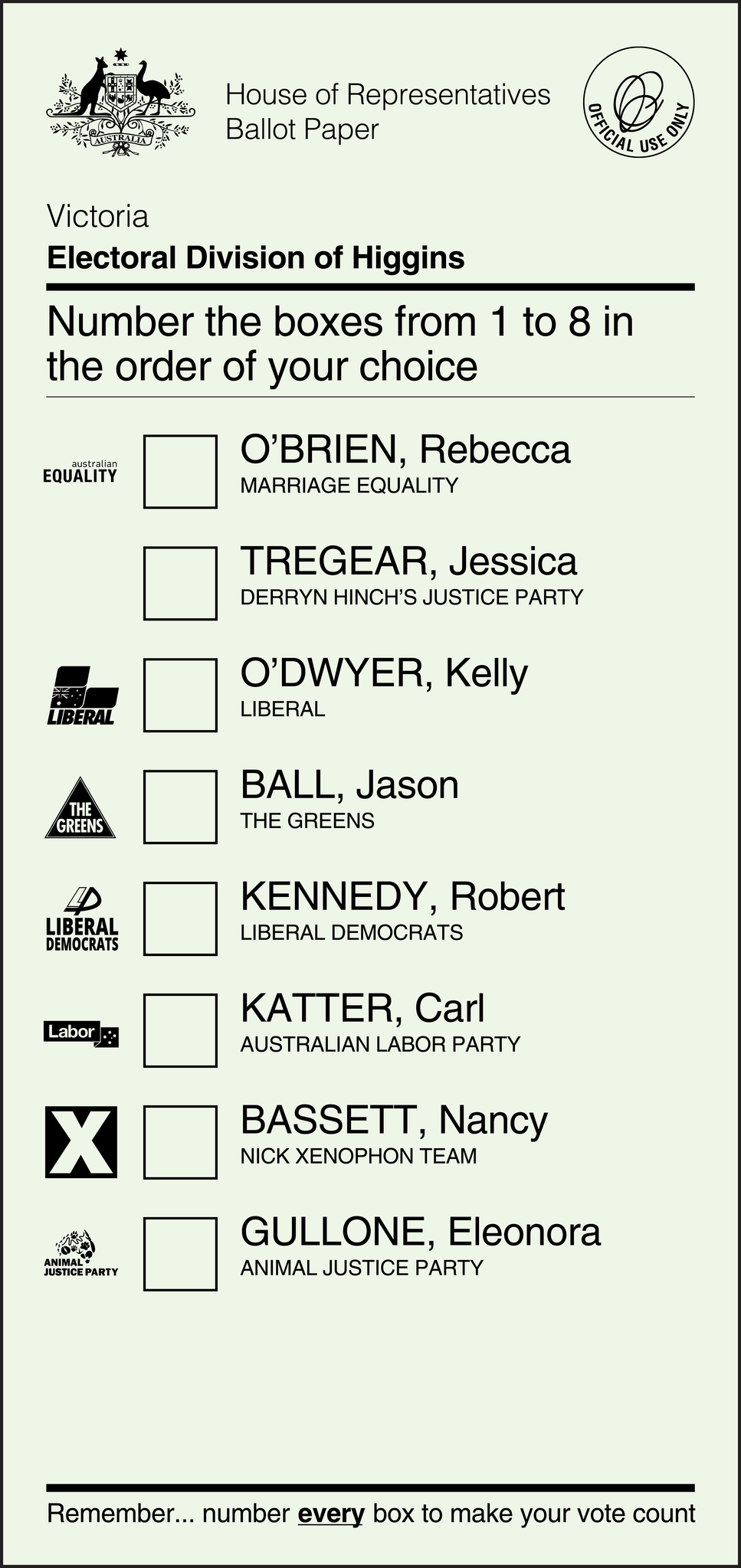}
	\vspace{-16pt}
	\caption{Ballot from 2016 Australian federal election}
	\label{fig:australia-ballot}
\end{wrapfigure}

Instant Runoff Voting (IRV) is a voting rule for selecting a single winner, based on voters ranking the available candidates or alternatives. It works in rounds by sequentially eliminating candidates: the rule repeatedly identifies the candidate $c$ for whom the fewest voters currently vote (in the sense that $c$ is their highest-ranked candidates among those that have not been eliminated), and eliminates that candidate. The last candidate remaining is the winner.

While a wide variety of ranking-based voting rules have been studied over the centuries, IRV is essentially the only ranking-based voting rule that has been adopted for elections to political offices. Australia has used the rule for electing its House of Representatives since 1918 (\Cref{fig:australia-ballot}), Ireland uses it to elect its President, and it is used in Alaska and Maine for several offices. It is also in use at the local level in many jurisdictions, as well as within some societies and political parties. Electoral reform advocates such as \href{https://fairvote.org/}{FairVote \ExternalLink} are pushing for further adoption, arguing that IRV and its ranking-based input leads to election outcomes that better reflect voters' preferences.

An important drawback of IRV is the burden it imposes on voters who may need to rank-order a large number of candidates. This is particularly severe in Australia, where voting is compulsory and a ballot is invalid if it fails to rank every candidate. As we can see in \Cref{fig:australia-candidate-number}, 20\% of Australian voters need to rank 10 or more candidates.

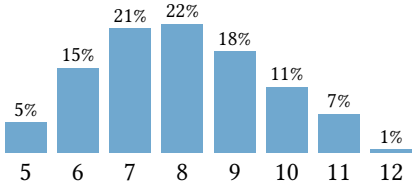
\begin{wrapfigure}{l}{0.42\textwidth}
	\centering
\begin{tikzpicture}

\definecolor{darkgray176}{RGB}{176,176,176}
\definecolor{skyblue}{RGB}{112,168,207}

\begin{axis}[
tick align=outside,
xtick=data,
xmin=4.21, xmax=12.79,
xtick style={draw=none},
xtick distance=1,
xticklabel shift=-3,
xtick pos=lower,
ymin=0, ymax=27.0,
ymajorticks=false,
axis line style={draw=none},
xscale=0.87,
yscale=0.36
]
\draw[draw=none,fill=skyblue] (axis cs:4.6,0) rectangle (axis cs:5.4,5.19739507455496);
\draw[draw=none,fill=skyblue] (axis cs:5.6,0) rectangle (axis cs:6.4,14.6863119525913);
\draw[draw=none,fill=skyblue] (axis cs:6.6,0) rectangle (axis cs:7.4,21.4995153492098);
\draw[draw=none,fill=skyblue] (axis cs:7.6,0) rectangle (axis cs:8.4,22.3127883962412);
\draw[draw=none,fill=skyblue] (axis cs:8.6,0) rectangle (axis cs:9.4,17.5263946044147);
\draw[draw=none,fill=skyblue] (axis cs:9.6,0) rectangle (axis cs:10.4,11.406868691559);
\draw[draw=none,fill=skyblue] (axis cs:10.6,0) rectangle (axis cs:11.4,6.7324205259767);
\draw[draw=none,fill=skyblue] (axis cs:11.6,0) rectangle (axis cs:12.4,0.638305405452467);
\draw (axis cs:5,6.19739507455496) node[
  scale=0.8, yshift=10,
  anchor=north,
  text=black,
  rotate=0.0
]{5\%};
\draw (axis cs:6,15.6863119525913) node[
  scale=0.8, yshift=10,
  anchor=north,
  text=black,
  rotate=0.0
]{15\%};
\draw (axis cs:7,22.4995153492098) node[
  scale=0.8, yshift=10,
  anchor=north,
  text=black,
  rotate=0.0
]{21\%};
\draw (axis cs:8,23.3127883962412) node[
  scale=0.8, yshift=10,
  anchor=north,
  text=black,
  rotate=0.0
]{22\%};
\draw (axis cs:9,18.5263946044147) node[
  scale=0.8, yshift=10,
  anchor=north,
  text=black,
  rotate=0.0
]{18\%};
\draw (axis cs:10,12.406868691559) node[
  scale=0.8, yshift=10,
  anchor=north,
  text=black,
  rotate=0.0
]{11\%};
\draw (axis cs:11,7.7324205259767) node[
  scale=0.8, yshift=10,
  anchor=north,
  text=black,
  rotate=0.0
]{7\%};
\draw (axis cs:12,1.63830540545247) node[
  scale=0.8, yshift=10,
  anchor=north,
  text=black,
  rotate=0.0
]{1\%};
\end{axis}

\end{tikzpicture}
 	\vspace{-20pt}
	\caption{The fraction of voters in the 2022 Australian federal election that needed to rank each number of candidates between 5 and 12.}
	\label{fig:australia-candidate-number}
\end{wrapfigure}

Most other jurisdictions allow voters to submit a truncated ranking, where the voter may stop after ranking only some of the candidates, and the vote is not taken into account (or ``exhausted'') after all the candidates that were ranked have been eliminated. However, voters who wish to rank some disfavored candidates in low positions must rank the whole field. It also forces voters to clearly distinguish all candidates that they favor, even when they may not have sufficient information to do so. In particular, voters cannot just submit a simple ``approval vote'' where they indicate several candidates as acceptable.

A possible solution to these issues is to allow voters to express indifferences, that is, to assign several candidates  an equal rank. 
Because IRV is only defined for \emph{linear orders} (rankings without indifferences), to implement this solution we need to decide how to generalize IRV to \emph{weak orders} (rankings with indifferences), and the right way to generalize it is not obvious.

In this paper, we will argue that the right generalization is what we call \emph{Approval-IRV}, which combines the ideas of IRV and of approval voting. This rule interprets each weak order as an approval vote for the highest-ranked candidates that have not yet been eliminated. It then repeatedly eliminates the candidate with the lowest approval score (i.e., the candidate who is top-ranked by the fewest voters), until only one candidate remains and is declared the winner. \Cref{fig:approval-irv-example} shows an example of how the rule works.

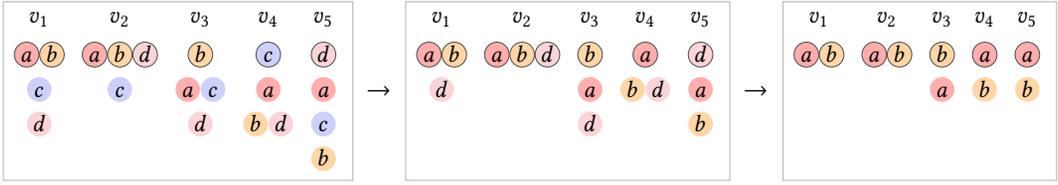
\begin{figure}[t]
	\makebox[\textwidth][c]{\scalebox{0.91}{
		\begin{tikzpicture}
		\node at (0,0) [draw=black!30] (step1) {	
			\votermultiplicity{$v_1$}{\weakorder[border]{{a,b},{c},{d},{}}}
			\votermultiplicity{$v_2$}{\weakorder[border]{{a,b,d},{c},{},{}}}
			\votermultiplicity{$v_3$}{\weakorder[border]{{b},{a,c},{d},{}}}
			\votermultiplicity{$v_4$}{\weakorder[border]{{c},{a},{b, d},{}}}
			\votermultiplicity{$v_5$}{\weakorder[border]{{d},{a},{c},{b}}}
		};
		\node (elim1) [right=0.07cm of step1, minimum height=2cm] {$\rightarrow$};
		\node at (5.7,0) [draw=black!30] (step2) {%
			\votermultiplicity{$v_1$}{\weakorder[border]{{a,b},{d},{},{}}}
			\votermultiplicity{$v_2$}{\weakorder[border]{{a,b,d},{},{},{}}}
			\votermultiplicity{$v_3$}{\weakorder[border]{{b},{a},{d},{}}}
			\votermultiplicity{$v_4$}{\weakorder[border]{{a},{b, d},{},{}}}
			\votermultiplicity{$v_5$}{\weakorder[border]{{d},{a},{b},{}}}
		};
		\node (elim2) [right=0.07cm of step2, minimum height=2cm] {$\rightarrow$};
		\node at (10.85,0) [draw=black!30] (step3) {%
			\votermultiplicity{$v_1$}{\weakorder[border]{{a,b},{},{},{}}}
			\votermultiplicity{$v_2$}{\weakorder[border]{{a,b},{},{},{}}}
			\votermultiplicity{$v_3$}{\weakorder[border]{{b},{a},{},{}}}
			\votermultiplicity{$v_4$}{\weakorder[border]{{a},{b},{},{}}}
			\votermultiplicity{$v_5$}{\weakorder[border]{{a},{b},{},{}}}
		};
		\end{tikzpicture}
	}}
	\caption{An example of Approval-IRV with voters $v_1, \dots, v_5$. The first eliminated alternative is $c$, which is ranked on top only once. Then $d$ is eliminated, and finally $a$ wins the majority vote against $b$. Thus, $a$ is the winner.}
	\label{fig:approval-irv-example}
\end{figure}

There are several potential arguments for moving from standard IRV (allowing only rankings or truncated rankings) to a system like Approval-IRV that allows for weak orders.

(1) \emph{Less effort}. Allowing voters to give equal preference to several candidates reduces the problems with rankings that we discussed above. In particular, voters can give a high ranking to some candidates and a low ranking to others, without having to rank all candidates in between (see \Cref{fig:sfballots:veto}), and it is possible to submit a simple approval vote  (see \Cref{fig:sfballots:approval}). Moreover, voters who do not have enough information to strictly rank some candidates can still vote sincerely.

(2) \emph{More expressive}. Allowing equal rankings gives voters more ways to express their preferences. For example, some people might be truly indifferent between candidates, and forcing them to rank such candidates would be distorting. (For this reason, some have argued that Australia's compulsory voting forces voters to lie \citep{rydon1968compulsory,orr1997choice}.)
The additional expressive power is also useful in U.S. jurisdictions that allow voters to only use 3--5 ranks. Weak orders allow voting for more than 3--5 candidates within the same number of ranks.

(3) \emph{Fewer invalid ballots.} Compared to plurality elections (``choose your favorite candidate''), voting in an IRV election is more difficult, and the voting instructions are more complicated to follow. In particular, some voters may not realize that they must only place one candidate in each rank, and submit a ballot which encodes a weak order and which will be counted as invalid.
In the American context, this mistake is known as an ``overvote''.%
\footnote{In many U.S. jurisdictions using IRV, such ballots are partially counted: if the top ranks contain unique choices, the ballot is counted as a vote for them. Once those candidates are eliminated and a rank with several candidates is reached, the vote is counted as exhausted and ignored in the following rounds.}
Allowing weak orders could reduce the number of inadvertent invalid ballots.

\begin{figure}[ht]
	\centering
	\begin{subfigure}{0.37\textwidth}
		\centering
		\includegraphics[height=2.93cm]{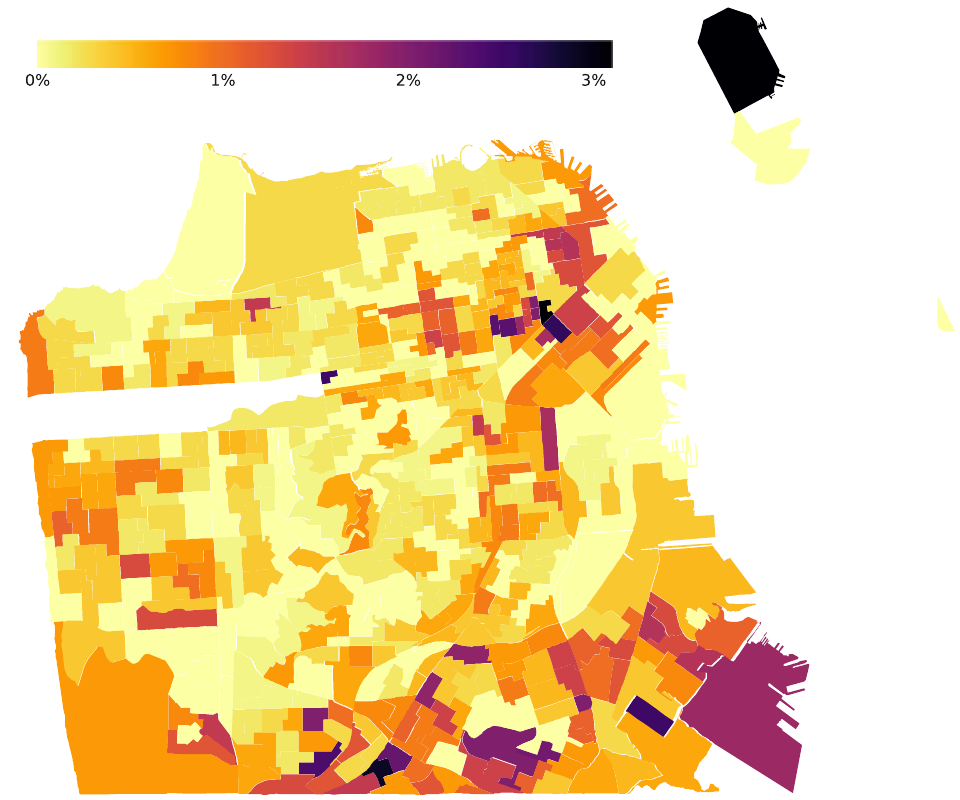}
		\caption{Map of San Francisco election precincts, colored by the fraction of votes that could be interpreted as a weak order with indifferences.}
		\label{fig:sfmap}
	\end{subfigure}
	\qquad
	\begin{subfigure}{0.43\textwidth}
		\centering
		\scalebox{0.88}{
\begin{tikzpicture}

\definecolor{darkgray176}{RGB}{176,176,176}
\definecolor{steelblue31119180}{RGB}{31,119,180}

\begin{axis}[
tick align=outside,
tick pos=left,
x grid style={darkgray176},
xmajorgrids,
xmin=0, xmax=260000,
xtick style={color=black},
xtick={0,100000,200000,300000},
xticklabels={\$0k,\$100k,\$200k,\$300k},
scaled x ticks=false,
y grid style={darkgray176},
ymajorgrids,
ymin=0, ymax=3.25,
ytick style={color=black},
ytick={0,1,2,3},
yticklabels={0\%,1\%,2\%,3\%},
height=4.6cm, 
width=7.7cm 
]
\addplot [draw=steelblue31119180, fill=steelblue31119180, mark=*, only marks, opacity=0.5, mark size=1.3pt]
table{%
x  y
72927.362 1.1
112542.316 1.3
74992.007 1.5
90474.74 0.4
96346 0.7
109351.999 1.2
114821 0.6
72936 0.7
69916.561 1.1
91444.026 1.2
98763.593 2.9
18660.417 1.4
97284.474 1
63626.094 0.7
108465.697 0.2
96867.651 0.3
75221 0.6
123466 0.5
94631 0.6
90665.519 2.2
72387.585 0
85621.28 2.4
119061.208 0.7
56518.981 2.1
98355.036 0.3
87777.84 0.6
108194.622 0.2
97898.008 0.7
57493.562 1.2
38281.181 1.4
81432.056 0.9
98578.813 0.9
28536.853 2
114807.64 1.1
54934.521 0
108307.439 1.2
112687.894 0.3
87768.943 0
11205 0
116250 0
70244.096 1.5
111223.1 0.4
79769.082 0.2
87672.967 0.2
86188.509 1.1
143845.864 0.3
91878.294 0.6
88668.051 2
81485.987 0.6
100662.248 0.9
78300.352 1.1
95979.2 1.4
107208.877 0.8
108978.031 1
113622.228 0.4
107578.092 0.6
36841.635 0.3
73980.516 0.8
74591.574 0.8
76745.28 2
89011.696 0.1
11205 0
109792.096 0.5
118384.503 0.9
2576.462 0
51574.767 0
68121.179 1.9
56983.115 0.4
102426.234 0.5
128807.2 0.3
37963.182 1.4
122117.245 0.2
117973 0.3
98141.369 0.7
66978.581 1.2
98757.51 0.4
141722.548 0
127778.064 0
182087.629 0.7
86250 0.8
173401.351 0
103818.652 0.5
68676.604 0.5
124433.64 0.3
115814.348 0.6
124246.013 0.8
144036.987 0.4
112293.608 0
40820.085 1.6
72122.983 1.9
29589.928 0
47719.74 0
189106.941 0.2
56184.124 0
97683.864 0.7
93811.363 1.4
159270.988 0
114801.45 0.7
98000.392 0.1
83392.953 1.3
20740.5 1.3
158417.96 0
95602.694 1.1
179248.921 0.2
179294.139 0
68601.931 0
98583.52 0.5
169681.315 0.6
120921.06 0.8
199739.947 0.4
56580.957 0.7
67425.545 0.4
89875.252 1.8
196720.1 0.4
113170.23 0.3
202133.977 0.1
126222.6 0.3
149783.498 0.6
37906 0
75805.597 1
84634.072 0.3
183775.119 0
173816 0.6
139002.1 0.4
27187.075 1.1
131694.365 0
322.656 0
55766.997 1.6
128212.906 0
236477.035 0
188770.76 0.1
161702.712 0.4
89975.857 1.3
138770.481 0.6
165788.933 0.5
227660.852 0.6
181392.02 0.4
214667.156 0
136012.982 0
13800.605 1.3
175781.119 0.9
168521.152 0
103267.561 0.4
232340.399 0
196234.388 0.2
65095.517 0.4
118825.456 0
124106.837 0.4
154125 0.2
160750 0
160750 0.4
145599.35 0
165987 0
134827.918 0.4
38746.161 0.4
109576.698 1
195248.744 0.6
100255.322 0.4
212370.478 0.3
165905.432 0
191313.47 0
93904.064 0.5
112133.481 1.3
113528.476 0.7
236384.133 0
95754.153 0
111943.93 0.8
150618.694 0.2
86096.782 0.1
141064.07 0.2
184772.381 0.5
204257.25 0.5
105618.338 0
63309.6 0.4
142097.103 0.8
115611.778 0.3
138743.918 0.2
196482.447 0
104155.44 0.7
152751.185 0.2
168140.976 0.5
98423.621 0.1
166626.189 0.4
31738.112 0
36472.37 0.2
150612.26 0.2
125401.54 0.3
140157.376 0
183749.721 0
124369.016 0.9
121310.995 0.2
124185.215 0.2
27294.836 0
123373.41 0.5
186878.997 0.3
87196.189 0.6
168080.862 0.2
116815.575 0.6
104.358 0.3
104357.348 1.1
165184.096 0
7604.757 0.9
168422.96 0
183439.456 0.2
226787.892 0.3
111544.641 0.3
119909.756 0.9
95283.818 0
98306.851 0.1
140771.3 0.3
115789.536 0
104989.181 0.9
139613.296 0.5
170247.523 0.8
199864.285 0.3
168662.292 0.2
115843.6 0.3
131102.863 0.4
199871.26 0.3
172509.072 1
133636.292 0
107265.105 0.8
170704.174 0
104953.125 0.7
127179.296 0.2
138017 0.2
146859.572 0.3
147134.301 0.2
155379.61 1
206430.734 0.2
141108.167 1
192811.3 0.1
206096.464 0
155688.345 0.1
66061.37 0.9
195922.765 0
168281 0.6
153137.36 0
164666.369 0.2
146225.285 0.1
131771.358 0.9
135836.955 0.6
135424.719 0.8
117787.439 0.2
41298.34 0
109349.051 0.4
127326.312 0.4
191250 0.1
152738 0
165391.017 0
130052.375 0.6
211890.011 0.4
148094.17 0.9
223734.074 0.1
137959.78 0.9
137350.983 0
156289 0
121795 0.3
181732.35 0.1
135682 0
146232.94 0.6
111570.8 0.3
169938.182 0.3
125426.781 0.4
110315.862 0.6
210794.514 0.1
152426.834 0.5
153255 0
151326.035 0.1
127940.974 0
240235.566 0.1
185188.47 0.6
160271.608 0.1
178874.339 0.3
83709.553 0.7
119993.13 0
121001.704 0.3
112139.338 0.3
90634.338 0.4
233352.852 0
151595.06 0.2
70458.081 0.3
223225.863 0.1
185916.175 0
131732.698 0
126428.999 0.7
116831.473 0.3
145531.893 0.2
118400.733 0.3
97209.443 0
142267.491 0.5
97040.112 0.5
102801.088 0.5
114560.756 0.4
112834.647 0.4
126343.397 0
157408.6 0.1
127966.474 0
125524 0.2
175600.676 0.1
134341 0
99864.2 0.4
147217.904 0.7
78353.225 1.2
160681.33 0.2
192036.723 0
209147.373 0
133047.969 0
124651.128 0.4
104850 1.5
150132 0
208681.51 0
220626.608 0
198339.358 0
177394.89 0.5
164269.162 1.7
93426.104 0
86704.914 0.2
129949.285 0
85756.751 0.2
123909.457 0.3
224411.724 0
203392.951 0.6
176143 0.2
142572.496 0.8
116010.864 0.3
187225.242 0.4
159348.58 0.5
160430.85 0
195997.122 0.5
139714 0.4
318.394 0.7
120546.8 0.4
179773 0.6
144458.237 0.3
113032.5 0.4
182671.024 0
101250.798 0.5
171905.636 0.6
136281 0.7
141011.295 0.3
189719 2.5
103908 0.3
175118.698 0
167777.32 0
157940.82 0.2
99248.214 0.3
101459.25 0.8
145612.585 0.6
154291.104 0
124068.875 0.2
108102.913 0.9
114734.308 0.6
125092.226 0.2
172541.5 0.3
116003.303 0
107716.306 0.8
221663.432 0
176391.624 0.1
41912.312 0.7
160579.503 0
120921 0.1
174736.032 0.1
126352.332 0.3
160588.833 0.2
151974.45 0.1
150319.243 0.2
100099.879 0.8
118339 0.2
181331 0.3
121449.25 0.6
120014.488 0.3
55570.932 0.4
178162.353 0
119592.194 0.3
4911.88 0.4
155015.783 0
124214.786 0
139602.58 0.8
100723.046 0
170488.159 0.3
155547.635 0.3
118821.698 0.1
22083 1
103075.994 0
112388.716 0
136185.118 0.9
132115.569 0.3
148303.796 0
99264.85 0.6
69090.01 0
91119.17 0.3
89501.75 0.3
138437.06 0.3
113797.505 0.3
148520.992 0
127289.367 0.2
125752.037 0.3
107289.6 0.3
116230.032 0.3
34287.25 1.2
122176.6 0.7
116681.727 0.2
67993.004 0.9
162319.768 0.9
21371.628 1.4
108383.775 0.4
50579.761 1.7
112703.056 0.6
38051.059 0.9
102825.397 0.3
114970.802 0.5
95278.889 0
127852.12 0.2
163471.452 0.6
1843.656 0.8
62578.762 2.3
1905.776 0
28252.13 1.8
87786.366 0
94764.966 0.3
26132 1.4
82048.12 0.1
964.125 0
80240.025 0.3
124980.018 0.4
89750.359 0
146501.764 0.4
114489.672 1.1
57965.04 0.5
75183.336 0.9
165194 0
160487.357 0.4
14576 1.3
133451.923 0.3
78945.215 1.1
14603.694 2
9234.004 0.3
214.464 1.1
24767.332 0.3
57535.105 0.8
131895.906 0.3
165194 0.6
111389 1.1
111499.905 0.1
167031.46 0.3
25429.897 3.1
144150.778 0.5
144281.904 0.4
141950.105 1.5
202562.977 0
120705.338 1.2
42911.144 1.3
139126.037 0
108816.381 0.1
41363.796 0.1
33469.17 2
47910.5 0.4
54413.716 0.7
170636.671 0
43040 0.7
234937.42 0
205063.721 0
188438 0
133750 1.2
147730.552 0
86662.615 0.8
61402.476 0.3
158149.004 0
88695.131 1.4
136047.047 0.2
109230.085 0
124308.154 0.7
87816.73 0
81537.599 0.5
180104.995 0.3
6262.55 0.7
228267.356 0
80966 0.3
194883 0.2
129373.707 0.3
21629.84 1
223260.476 0
83670.158 0.5
39805.224 0.3
88298.944 0.1
3177.445 0.3
92273.725 0.5
248418.605 0
130810.13 0.6
224852.345 0.2
168634.294 0.3
193630.72 0
222552.236 0
106673.678 0.6
163098.112 0.3
37559.483 1.3
132906.099 0.7
159622.15 0
211399.672 0.3
194946.536 0.3
117871.367 0.5
141569.872 0.5
27530.266 1.3
138300.58 0.1
111401.079 0
71280.625 0
130224.031 0
165729.725 0
108737.76 0.7
191961.495 0
21437.594 1.6
110431.914 0
104758.5 0.7
124837.394 0
250001 0
21424.172 1.6
91842.209 0.3
193209.864 0
56044.308 1.5
41258.59 0
126648.47 0.2
33239.002 1.5
158998 0
132646.448 0.2
18022.075 1.2
165322.709 0.2
165468.85 0.3
173782.5 0.5
179740.149 0
118133.611 0.6
142745.589 0
171690.8 0
18863.124 0.6
156449.5 1
63797.382 0.4
189537.618 0.2
45282.76 0.4
145922.735 0.3
37601.393 0
69127.144 0
143206.73 0.2
140162.362 0
210436.605 0.2
166164.546 0.2
168951.017 0.3
147151.242 0.2
14297.269 0.3
129066.038 0
78867.404 0
143111.686 1
170558.798 0.2
205991.595 0
209603.351 0.3
17149.614 0.3
138167.544 0.2
194424.707 0
250001 0.3
11650.942 0.7
59211 0
59211 3.1
};
\addplot [line width=1pt, red!80!black] coordinates {
    (104.358, 0.928125794593659) 
    (250001.0, -0.0257468637869019)
};
\end{axis}

\end{tikzpicture}
		}
		\vspace{-16pt}
		\caption{Among election precincts, median household income (horizontal axis) is negatively correlated with percent of ballots showing a weak order (vertical axis; $r = -0.4$, $p < 0.001$).}
		\label{fig:income-scatterplot}
	\end{subfigure}
	\caption{Ballot data from the 2019 mayoral election in San Francisco.}
\end{figure}

For how many ballots would this make a difference, in practice?
One can try to quantify this via the number of overvotes reported by election officers. \citet{mccune2023oakland} reports that in a very close 2021 City Council election in Portland, Maine, a different treatment of overvotes would have changed the winner.
One issue with depending on reported overvotes is that not all ballots containing overvotes encode valid weak orders -- in particular they might assign multiple ranks to the same candidate and are therefore definitely invalid.
Conveniently, San Francisco makes very detailed vote data \href{https://sfelections.sfgov.org/november-5-2019-election-results-detailed-reports}{publicly available \ExternalLink} in a CVR JSON format, and even complements these with image scans of every ballot.
We analyzed this data to see how many ballots could be interpreted as a weak order with at least one indifference between candidates.
For the 2019 mayoral election in San Francisco, we found 899 such ballots out of 206\,117 ballots submitted (0.4\%); see \Cref{fig:sfballots} for some examples. Voters casting such weak order ballots are geographically concentrated (\Cref{fig:sfmap}) and the fraction of ballots that encode weak orders in specific election precincts correlates negatively with median income (\Cref{fig:income-scatterplot}).
This mirrors previous finding of higher rates of invalid ballots in precincts with lower incomes in San Francisco \citep{neely2015overvoting} and New York City \citep{cormack2023}, as well as of the frequency of overvotes \citep{pettigrew2023ballot}. In Scotland's 2017 local elections, 1.6\% of ballots were rejected because of multiple top choices \citep{scotland2017}.

\begin{figure}[t]
	\begin{subfigure}{0.32\textwidth}
		\includegraphics[width=\linewidth]{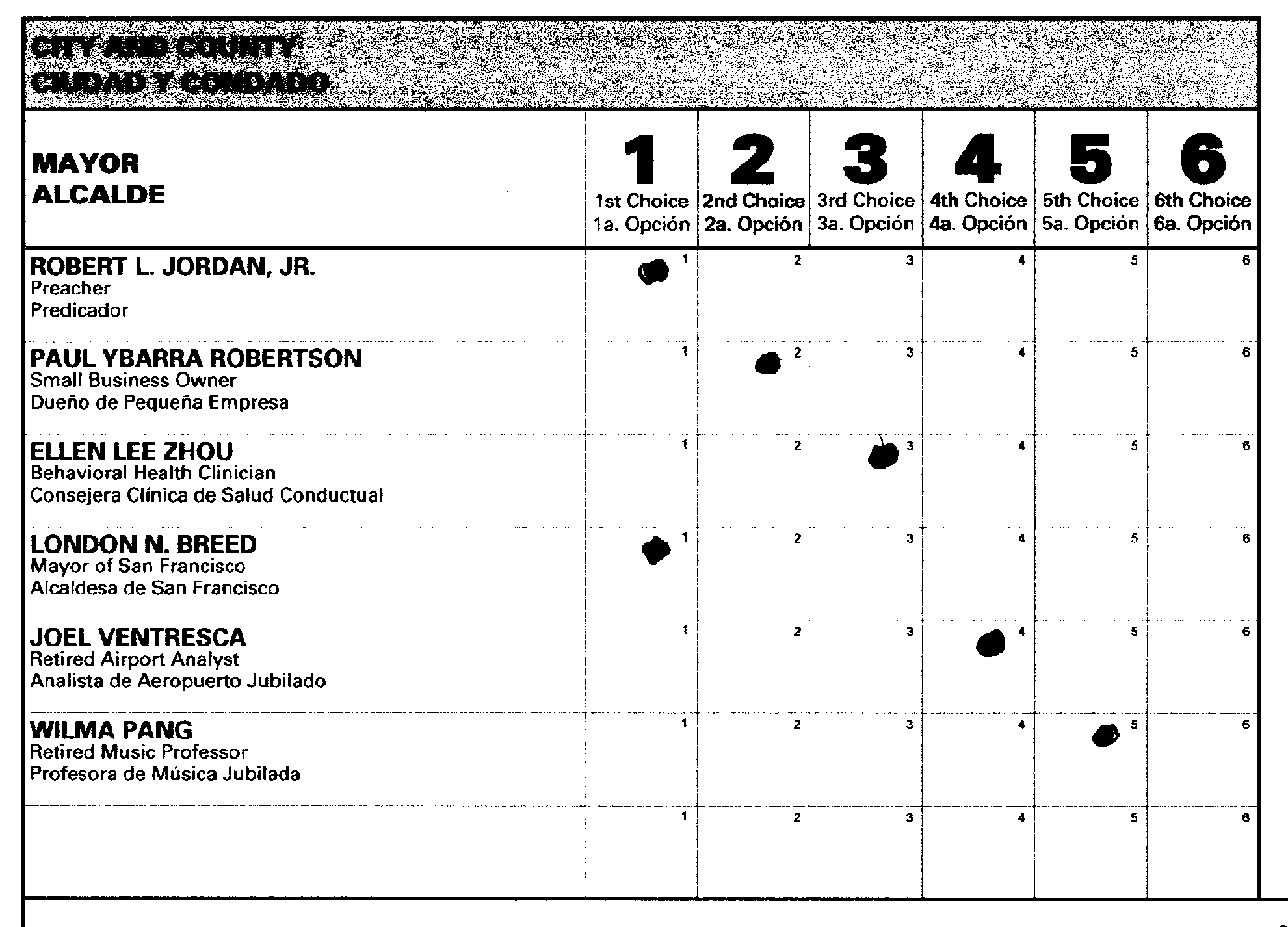}
		\caption{Two top choices}
	\end{subfigure}\hspace{5pt}
	\begin{subfigure}{0.32\textwidth}
		\includegraphics[width=\linewidth]{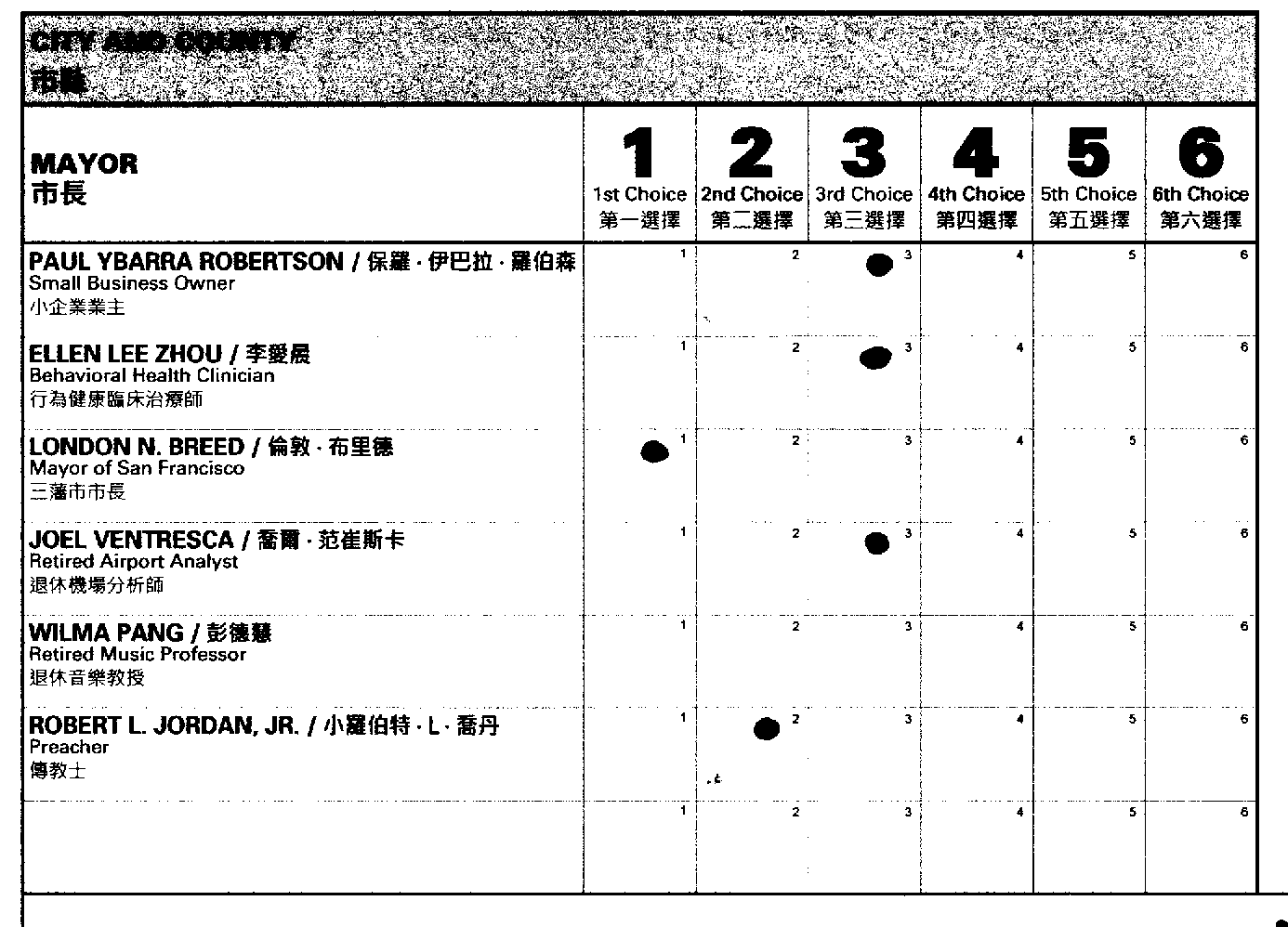}
		\caption{Vetoing a candidate}
		\label{fig:sfballots:veto}
	\end{subfigure}\hspace{5pt}
	\begin{subfigure}{0.32\textwidth}
		\includegraphics[width=\linewidth]{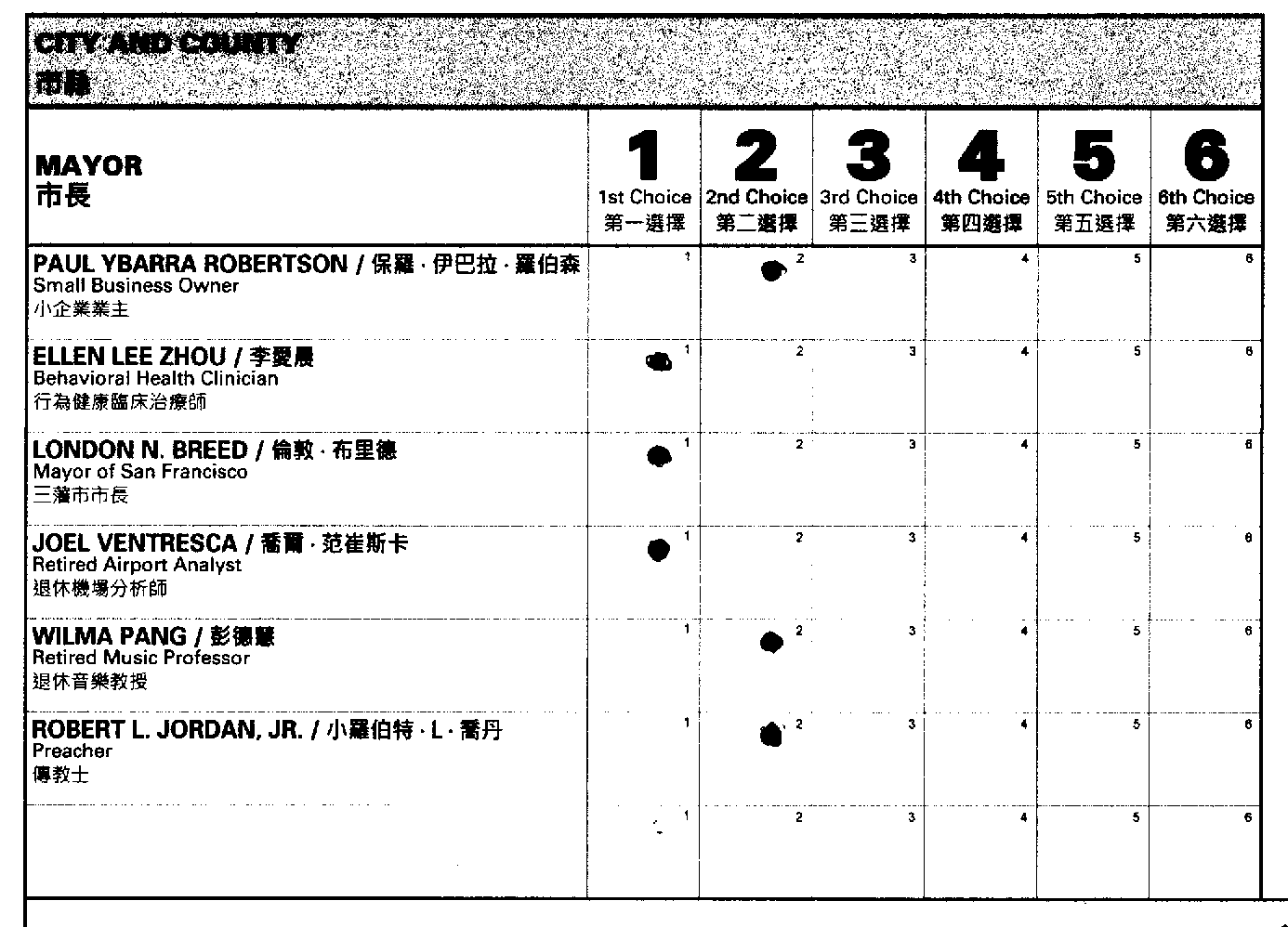}
		\caption{An approval vote}
		\label{fig:sfballots:approval}
	\end{subfigure}
	\vspace{-18pt}
	\caption{Examples of ballots that can be interpreted as weak orders (2019 mayoral election in San Francisco).}
	\label{fig:sfballots}
\end{figure}

(4) \emph{Better alignment with candidate campaign communications.} 
In practice, many voters decide how to vote based on candidate campaigns and advertisements. In Australia (where truncated rankings are not allowed), political parties issue recommended rankings. But in the American context (for example in New York City), most campaigns only ask supporters to rank their candidate \#1, without providing guidance about how to rank other candidates \citep{cormack2023}. Voters who are convinced by several candidates might therefore be interested to rank all of them \#1, which Approval-IRV allows them to do.

(5) \emph{Better incentives to support one's favorite.} 
Under plurality voting, it is common strategic advice to not vote for one's favorite candidate $x$, but rather for one's favorite $y$ among front-runners. One advantage of IRV is that one can honestly vote for $x$, because should $x$ be eliminated, the vote is transferred to $y$. However, there is some risk with this honest approach, because it may be that one's vote is ``stuck'' at $x$ while $y$ gets eliminated (and $x$ also ends up losing). Insincerely ranking $y$ top could get $y$ elected. This phenomenon is sometimes called ``favorite betrayal'' \citep{small2010geometric}, or ``compromise strategic voting'' \citep{green2014strategic,graham2023examination}. Approval-IRV partially mitigates this issue by allowing voters to rank both $x$ and $y$ in top position.%
\footnote{There are still some situations where putting only $y$ on top is a successful manipulation; however we can at least show that under Approval-IRV, \emph{if} the sincere vote for $x$ leads to a win for $y$, then putting both $x$ and $y$ on top still leads to a win for $y$ (a certain monotonicity property), so the latter strategy is quite safe. See \Cref{sec:monotonicity}.}

(6) \emph{A compromise between Ranked Choice Voting and Approval Voting.}
As of 2024, the two most successful strands of the electoral reform movement in America are advocates for IRV (or ``ranked choice voting'') such as \href{https://fairvote.org/}{FairVote \ExternalLink}, and advocates for Approval Voting such as the \href{https://electionscience.org/}{Center for Election Science \ExternalLink}. Approval-IRV is a rule that combines properties of both of these rules, and may in some situations serve as a compromise position.

\medskip
While there are several advantages to moving to weak orders, there are also some issues that need to be considered. For example, Approval-IRV inherits the well-known downsides of linear-order IRV, such as monotonicity violations and failures to elect Condorcet winners. In addition, voter instructions explaining how to fill out a weak-order ballot correctly could become more complicated than current instructions for linear-order ballots, which could cause misunderstandings and offset some of the gains in terms of avoiding invalid ballots. 

\medskip\noindent
\textbf{Multi-winner voting.}\:
IRV also has a version that allows for multi-winner elections, known as Single Transferable Vote (STV), which is used in Ireland, New Zealand, the Australian senate, and elsewhere. Since the number of candidates in a multi-member district can be even higher than for single-winner elections,%
\footnote{In the 1983 Australian senate election, voters in New South Wales had to strictly rank 62 candidates for 10 seats. 11.1\% of ballots were invalid. In 1984, the election law for the senate was reformed to allow truncated rankings and to rank parties instead of candidates. This reduced the rate of invalid votes to 3.5\% [\href{http://psephos.adam-carr.net/countries/a/australia/1983/1983senatensw.txt}{1983 results \ExternalLink}, \href{https://www.abc.net.au/news/2015-09-23/the-origin-of-senate-group-ticket-voting-and-it-didnt-come-from-/9388658}{abc news article \ExternalLink}].}
moving to weak orders makes sense for STV also. We define \emph{Approval-STV} in a natural way, and show that it preserves the proportionality properties of linear-order STV.

\medskip\noindent
\textbf{Prior discussion in the literature.}\:
The advantages of allowing indifferences in an IRV or STV election have long been recognized. In a series of articles in the journal \emph{Voting Matters}, \citet[Section 6]{meek1994}, \citet{warren1996}, and \citet{hill2001} developed a way in which IRV and STV (and in fact, every ranking-based voting rule) can be generalized to weak orders. Their idea was to replace every weak order by several (weighted) ranking votes for all possible ways in which the indifferences can be broken. For example, a voter who has a complete ranking, except for an indifference between $a$ and $b$, would be replaced by a weight-$\frac12$ vote with $a$ ranked above $b$ and a weight-$\frac12$ vote with $b$ above $a$. Similarly, a voter who reports indifference between all candidates would be replaced by $m!$ votes each with weight $\smash{\frac{1}{m!}}$. After this replacement operation, linear-order IRV or STV is applied.

\citet[footnote 8]{aziz2020expanding} note that, as described, this leads to an algorithm that may take exponential time. However, there is an equivalent description of this rule that is polynomial-time and easy to understand: for a voter who currently ranks $t$ candidates on top, the voter assigns $1/t$ points to each of these candidates, and the rule repeatedly eliminates the candidate with the fewest points. Accordingly, we call this rule \emph{Split-IRV}. Note that Split-IRV is different from Approval-IRV; in the example shown in \Cref{fig:approval-irv-example}, $a$ is the winner under Approval-IRV, but under Split-IRV, $a$ is the first alternative to be eliminated (receiving only $\frac12 + \frac13$ points), and $b$ is the final winner. Split-IRV has some intuitive appeal since it encodes the idea that every voter has a ``single'' vote.%
\footnote{The idea of splitting $1$ point uniformly between approved candidates has been discussed in other contexts under the terms ``Satisfaction Approval Voting'' \citep{brams2015satisfaction} or as ``equal and even cumulative voting'' \citep[e.g.,][Section~3.2.2]{bardal2023eecv} and this scoring system is used by Peoria, Illinois, for its city council elections \citep[p.~6]{brockington2003peoria}.}
The multi-winner version, Split-STV, is in fact in practical use: \citet{mollison2023fair} reports that Split-STV ``was first used by the \href{https://www.johnmuirtrust.org/assets/000/001/714/JMT_Articles_of_Association_2021_original.pdf?1624286801\#page=32}{John Muir Trust \ExternalLink} (for Trustee elections) in 1998, and by the \href{https://www.lms.ac.uk/sites/lms.ac.uk/files/files/Single\%20Transferable\%20Vote.pdf}{London Mathematical Society \ExternalLink} in 1999'' and both still use Split-STV today. It is implemented in the \texttt{vote} package for R \citep[p.~682]{r-vote-package}.

To our knowledge, the only previous scholarly discussion of Approval-IRV and Approval-STV is by \citet[Section~18.2]{janson2016}.%
\footnote{Since about 1996, there have been sporadic discussions of Approval-IRV on internet forums, see e.g., the \texttt{election-methods} mailing list (\href{http://lists.electorama.com/pipermail/election-methods-electorama.com/1996-July/098704.html}{1996  \ExternalLink}, \href{https://www.mail-archive.com/election-methods-electorama.com@electorama.com/msg03363.html}{2004  \ExternalLink}), \href{https://electowiki.org/wiki/Single_transferable_vote\%23Ways_of_dealing_with_equal_rankings}{electowiki \ExternalLink}, and reddit (\href{https://www.reddit.com/r/EndFPTP/comments/e5h2uu/equalrank_stv/}{2019 \ExternalLink}). A 2004 \href{http://condorcet.ericgorr.net}{webtool} implements both Approval-IRV and Split-IRV.}
He discovered the possibility of generalizing IRV and STV via approval based on his historical analysis of voting methods used in Sweden in the early 1900s. He called this approach ``Phragm\'en's principle'' after the Swedish mathematician Lars Edvard Phragmén (1863--1937) who around 1903 had proposed a similar approach to extend certain voting rules to work with votes that have two levels of approval. Janson did not analyze the properties of Approval-IRV or Approval-STV, but wrote that ``it would be interesting to compare the two ways to handle weakly ordered lists. It seems that Phragm\'en’s principle [...] might have some advantages over splitting the vote between total orderings as described [by Meek and Hill].''
In this paper, we take up Janson's research program and confirm his prediction: Approval-IRV is a better generalization than Split-IRV.

\medskip\noindent
\textbf{Our results.}\:
We prove two characterization results showing that Approval-IRV is the unique extension of IRV to weak orders that satisfies certain desirable properties. Our characterizations operate within the large class of \emph{elimination scoring rules} which use sequential elimination, at each step eliminating the candidate with the lowest score, where the scores of the candidates are determined by some system of positional scoring rules.%
\footnote{There are no known axiomatic characterizations of elimination scoring rules among the class of all possible voting rules, and obtaining one has been a very long-standing open problem [\citealp{smith1973aggregation} (``we have nothing approaching a satisfactory characterization of point runoff systems''), \citealp{conitzer2009preference} (Conjecture~1); see \citealp{freeman2014axiomatic} for a related result].}
In the world of weak orders, positional scoring rules have a lot of flexibility since they can depend in arbitrary ways on the \emph{order type} of a weak order (i.e., on the number of alternatives in each indifference class). Both Approval-IRV and Split-IRV are examples of elimination scoring rules.

Our first result shows that Approval-IRV is the unique elimination scoring rule satisfying the axioms of \emph{independence of clones} and \emph{respect for cohesive majorities}. These axioms are generalizations to the weak order context of two properties that are characteristic features of IRV, and we feel that they should therefore be satisfied by a reasonable generalization of IRV to weak orders.

{
\renewcommand{\thetheorem}{\ref{thm:clonemajority}}
\begin{theorem}
	Approval-IRV is the unique elimination scoring rule satisfying independence of clones and respect for cohesive majorities.
\end{theorem}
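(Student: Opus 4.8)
The plan is to prove the two halves separately: that Approval-IRV satisfies the two axioms (\emph{soundness}), and that it is the only elimination scoring rule that does (\emph{uniqueness}). Soundness is the shorter half. For respect for cohesive majorities, I would argue that if a set $C$ is supported at the top by a strict majority $S$ (every voter in $S$ ranks every member of $C$ above every non-member), then in any round in which $C$ still has a surviving candidate, every voter of $S$ approves only candidates of $C$; hence the surviving members of $C$ collect at least $|S|>n/2$ approvals in total, while every surviving candidate outside $C$ is approved by at most $n-|S|<n/2$ voters. A lone surviving member of $C$ then strictly outscores every candidate outside $C$ and cannot be eliminated, so $C$ is never emptied before its complement and the winner lies in $C$. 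For independence of clones, I would use that a clone set forms a block in every weak order, so that the number of approvals falling inside the block equals the number of voters whose top class meets the block --- a quantity unaffected both by the internal arrangement of the clones and by deleting one of them --- and then run a short induction over the elimination rounds to conclude that the outcome is clone-invariant.

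For uniqueness, let $R$ be an elimination scoring rule with scoring system $(s^k_\tau)$ satisfying both axioms, where $s^k_\tau(\ell)$ is the score assigned to a candidate lying in the $\ell$-th indifference class of a weak order of order type $\tau$ over $k$ surviving candidates, weakly decreasing in $\ell$. The first observation is that $R$ is unchanged if, for each $k$, the vectors $s^k_\bullet$ are multiplied by a common positive constant and, for each order type $\tau$, a constant is added to $s^k_\tau$ (both operations shift every candidate's round score by the same amount). Using this freedom, normalize the bottom class of every order type to $0$, so that $s^k_\tau(\ell)\ge 0$. The target is then: for every $k$ there is a constant $c_k>0$ with $s^k_\tau(\ell)=c_k$ when $\ell=1$ and $s^k_\tau(\ell)=0$ otherwise; for such a scoring system the rule eliminates, in every round, exactly a candidate whom the fewest voters rank on top, i.e.\ it is Approval-IRV.

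I would establish the target by induction on $k$. The base case $k=2$ follows from cohesive majority alone: the only non-trivial order type is $(1,1)$, and if $s^2_{(1,1)}(1)=0$ then the rule always returns a tie, contradicting cohesive majority on a profile with a strict majority winner, so $s^2_{(1,1)}(1)=c_2>0$, which is Approval-IRV on two candidates. For the inductive step, assume $R$ coincides with Approval-IRV on all profiles with at most $k-1$ surviving candidates and examine a round with $k$ candidates, where the two axioms play complementary roles. Applying independence of clones to profiles built around a two-element clone set --- so that deleting a clone or transposing the pair yields, by the induction hypothesis, a known outcome on $k-1$ candidates --- I would extract two structural facts: (i) within every order type the non-top classes all receive the same score as the bottom class, i.e.\ $s^k_\tau(\ell)=0$ for $\ell\ge 2$ (the weak-order analogue of the fact that independence of clones forces a plurality-type scoring vector on linear orders); and (ii) the score of a top-class candidate is independent of the size of that class and of the rest of the order type --- this is exactly the property that fails for Split-IRV, since cloning a top candidate into its own class would, under Split-IRV, reduce the scores of its former classmates. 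Respect for cohesive majorities then anchors the common value: comparing, on carefully chosen profiles, a majority whose top class is a single candidate against one whose top class has several candidates shows that the per-member weight of the former cannot exceed, nor fall short of, that of the latter, which forces $s^k_\tau(1)$ to be a single positive constant $c_k$. Combining (i), (ii) and the anchoring gives the target scoring system and closes the induction; hence $R$ is Approval-IRV.

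The step I expect to be the main obstacle is precisely (i)--(ii): because cloning a candidate changes the number of surviving candidates, the constraints one derives from clone-invariance relate the size-$k$ scoring vectors to the already-known size-$(k-1)$ behavior rather than constraining one vector in isolation, which is what makes the induction on $k$ unavoidable and forces a careful design of probe profiles. One must arrange the clone sets and the supporting voters so that the very first elimination in the probe profile is decided exactly by the comparison one wants to control, while every subsequent round is governed by the induction hypothesis; and one must ensure that the probe profiles, subject to the constraint that every clone set be a block in each ballot, are rich enough to reach all order types $\tau$. I expect this to be achievable using two-element clone sets (possibly several disjoint ones at once), but the combinatorial bookkeeping is where the real work lies.
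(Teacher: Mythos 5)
Two genuine gaps. First, your soundness argument for respect for cohesive majorities proves a different property than the one in the statement. The axiom assumes only that a majority $S$ has some \emph{common} top-ranked candidate $c$ (possibly tied with others that differ from voter to voter) and concludes that the winner lies in $\bigcup_{i\in S}\top_i$. Your argument instead posits a set $C$ that \emph{every} voter in $S$ ranks above every non-member. Read strictly, such a set need not exist for a cohesive majority (e.g.\ in \Cref{fig:split-fails-majority} the $19$ voters with tops $\{a,b,c\},\{a,b\},\{a,c\}$ admit no nontrivial strictly dominating set), so the argument gives nothing in exactly the cases the axiom is about; read weakly and applied to $C=\bigcap_{i\in S}\top_i$, the invariant ``voters of $S$ approve only candidates of $C$'' is false, and the conclusion ``winner in $C$'' is even false for Approval-IRV (in \Cref{fig:majority-alternative} the majority's common top is $a$ but Approval-IRV elects $b$). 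The correct argument is short but different: as long as $c$ survives, every $i\in S$ approves only candidates in $\top_i$, so $c$ has score $>n/2$ while every surviving candidate outside $\bigcup_{i\in S}\top_i$ has score $<n/2$; hence all such candidates are eliminated before $c$, and the eventual winner is top-ranked by some member of $S$.

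Second, and more seriously, your uniqueness induction on the number $k$ of surviving candidates cannot get past $k=3$ as structured. The theorem requires $m\ge 4$, and for a reason: restricted to profiles with at most $3$ alternatives, Split-IRV satisfies \emph{both} axioms, yet its scoring vectors have $s(\tau_{21})_1=\tfrac12\neq 1=s(\tau_{111})_1$. So your claims (i)--(ii) at level $k=3$ are simply not consequences of the axioms applied to $\le 3$-candidate probe profiles together with the $k=2$ base case; any valid derivation of $s(\tau_{21})=(1,0)$ must invoke profiles with a fourth alternative. The paper's proof does exactly this, handling the $3$- and $4$-candidate order types simultaneously: the majority axiom on $3$-candidate profiles only yields $s(\tau_{21})_1\in[0,\tfrac12]\cup\{1\}$, a $4$-candidate clone construction forces $s(\tau_{21})_1=s(\tau_{31})_1$, and a carefully built $4$-candidate profile (mixing $\tau_{211}$ and $\tau_{31}$ votes) forces $s(\tau_{21})_1\ge\tfrac59$, which together pin the value to $1$. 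Your proposal never uses $m\ge4$, so if it worked as described it would also prove the (false) $m=3$ statement; the ``anchoring'' step you describe as comparing majorities with singleton versus larger top classes is precisely where the axioms are too weak level-by-level and where the real quantitative work lies. Beyond this, your overall strategy (two-element clone sets whose decloned profiles are controlled by an inductive hypothesis on smaller instances, plus majority constraints to fix the top score, then further inductions over order types) is genuinely in the same spirit as the paper's proof, and the clone-soundness sketch matches the paper's key lemma that non-clone approval scores are unchanged by decloning; but the induction must be reorganized so that constraints can flow from $4$-candidate profiles down to the $3$-candidate vectors before the general inductive machinery takes over.
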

}

The first condition is \citeauthor{tideman1987independence}'s \citeyearpar{tideman1987independence} independence of clones, which requires that if a candidate is cloned by inserting new candidates into the preference profile in such a way that all clones are ranked adjacently by all voters, then this should not affect the outcome: If a winner is cloned, then one of the clones should still be a winner; if a loser is cloned, the identity of the winner should not change at all. This property encodes a kind of resistance to the spoiler effect. \citet{tideman1987independence} showed that this axiom is satisfied by IRV for linear orders. We show that Approval-IRV continues to satisfy it for weak orders, while Split-IRV fails the axiom.

The second axiom is an axiom encoding the idea that a majority's preference should be followed. For IRV with linear orders, this is easy to define: if a majority of voters ranks a candidate in top position, then this candidate should be a winner. For weak orders, the definition is less straightforward. At a minimum, when a majority of voters has the exact same \emph{set} of top choices, then one of those candidates should win. Both Approval-IRV and Split-IRV satisfy this. But Approval-IRV satisfies a stronger axiom, guaranteeing influence to majorities even if they have some disagreement in their top choices. Our axiom of \emph{respect for cohesive majorities} requires that if a majority of voters all rank a candidate $x$ in top position (possibly among others), then the winning candidate must be top ranked by at least one of the voters in this majority. In the example of \Cref{fig:approval-irv-example}, this property applies to the left three voters who all rank $b$ on top, and would require that the winner is one of $a$, $b$, or $d$ (but not $c$ since none of those three voters ranks $c$ top). Split-STV fails this condition. So do many other voting rules, including Condorcet extensions such as Schulze and ranked pairs.

Our second result shows that Approval-IRV is the unique elimination scoring rule which extends IRV in a monotonic way.

{
\renewcommand{\thetheorem}{\ref{thm:monocharacterization}}
\begin{theorem}
	Approval-IRV is the unique elimination scoring rule that agrees with IRV on profiles of linear orders and satisfies indifference monotonicity.
\end{theorem}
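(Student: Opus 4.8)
The statement has two directions, and the first is short. Approval-IRV is the elimination scoring rule whose positional rule gives, for a ballot whose current top indifference class has size $t$, the score $1$ to each of those $t$ candidates and $0$ to every other surviving candidate; on a profile of linear orders the top class is always a singleton, so this is plurality-elimination, i.e.\ exactly IRV, and indifference monotonicity is established for Approval-IRV in \Cref{sec:monotonicity}. For the uniqueness direction, let $f$ be any elimination scoring rule that agrees with IRV on linear profiles and satisfies indifference monotonicity. Since $f$ and Approval-IRV are both elimination scoring rules with a common tie-break, it is enough to show that at every stage they eliminate the same candidate; an induction on the number $k$ of surviving candidates then yields $f = \text{Approval-IRV}$. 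I therefore aim to pin down, for every $k$, the stage-$k$ scoring system of $f$ precisely enough to force this.

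Two degrees of freedom simplify the bookkeeping. Scoring vectors matter only up to a positive affine transformation; and, because a weak-order ballot ranks every surviving candidate, adding a constant to all entries of the scoring vector of a single order type shifts every candidate's running total equally and so changes no argmin — hence the vector of each order type may be shifted independently. Using this freedom together with agreement with IRV on linear profiles, the scoring vector of the linear order type on $k$ candidates is forced to $(1, 0, \dots, 0)$: an elimination scoring rule that reproduces plurality-elimination on every linear profile must, at each size, use a plurality vector — a fact about linear-order scoring rules that I would either cite or prove by a short argument forcing all non-top positions to carry equal score. Fix these normalizations.

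The heart of the proof is to show that, with this normalization, the scoring vector of \emph{every} order type $\tau$ is, up to an additive shift, $(1, 0, \dots, 0)$: every top-class candidate scores exactly $1$ more than every other candidate, and all non-top candidates score the same. Granting this, the $f$-score of a candidate $c$ in any profile equals a profile-dependent constant plus the number of voters placing $c$ in their top class — the Approval-IRV score up to an additive constant — so $f$ and Approval-IRV eliminate the same candidate at every stage and the induction closes. To prove the claim I would induct on $k$ (so that $f$'s behaviour on the smaller sub-profiles arising after an elimination is already known to match Approval-IRV), and for each order type $\tau$ on $k$ candidates apply indifference monotonicity to profiles built from a large block of ``filler'' voters fixing a known, knife-edge scoring landscape, plus a single pivotal voter who performs an indifference merge on the intended winner. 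Tuning the filler block so that exactly one comparison is pivotal extracts, in turn: that the top-class entry of $s^\tau$ is the strict maximum; that the non-top entries are mutually equal (a strict inequality among them could be converted, by merging the winner across the pivotal voter, into a monotonicity violation); and that the gap between the top entry and the rest is exactly $1$, the linear gap (too large a gap lets the merge overshoot and crown a non-winner, too small lets it undershoot — either way contradicting indifference monotonicity applied to the pre-merge profile).

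I expect the last step to be the main obstacle, for a structural reason: when the pivotal voter changes its order type by a merge, \emph{every} candidate's contribution from that voter changes, not only the promoted candidate's, so the inequalities furnished by indifference monotonicity are not a priori comparisons of two scores. Making them usable requires (i) engineering the filler block so that the other affected candidates are ``parked'' safely away from being pivotal, and (ii) tracking the recursive elimination structure carefully enough to pin down who wins after the later rounds. A secondary and more routine difficulty is the base fact about linear-order scoring rules, together with a uniform treatment of tie-breaking throughout.
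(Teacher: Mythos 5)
Your overall strategy---pin the linear order types to plurality via IRV-consistency, then use hovers of the intended winner in carefully engineered profiles to force every order type's vector to the approval vector, and conclude that the two rules produce the same elimination sequences---is essentially the route the paper takes in \Cref{app:monocharacterization-proof}. However, two of your execution choices leave genuine gaps in the uniqueness direction. First, your only stated induction is on the number $k$ of surviving candidates, and your pivotal voter performs a \emph{single} merge. Since \Cref{def:indifference-monotonicity} only licenses hovers of the winning candidate, and a $c$-hover requires $c$ to sit alone in its indifference class, one merge applied to a linear ballot can only produce order types with a single doubleton class; a general type $\tau$ on $k$ candidates is unreachable this way. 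You need a second induction \emph{within} each cardinality---the paper inducts on the number of candidates lying in non-singleton classes, passing from $\tau$ to the type $\tau'$ obtained by splitting one candidate out of the first non-singleton class, so that the pre-hover ballots have an already-characterized scoring vector. Without this ordering of order types, the pre-merge winner in your construction cannot be computed, and the argument does not get started.

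Second, and more seriously, a single pivotal voter cannot amplify small deviations. Once the fillers and the pre-merge type are characterized, every other ballot contributes $0/1$ scores, so all margins are integers; a deviation such as $s(\tau)_1 = 1 \pm \epsilon$ with $\epsilon < 1$ can change the elimination only by breaking an exact tie, and under parallel-universe tie-breaking, breaking a tie merely prunes branches. Manufacturing the required violation ($a \in f(P)$ yet $a \notin f(\hat P)$) through ties is very delicate---for instance, a first-round tie involving $a$ itself in the merged profile typically makes $a$ strictly minimal, hence eliminated, in the pre-merge profile, destroying the hypothesis $a \in f(P)$. The paper sidesteps all of this by using a block of $q$ identical ballots of the unknown type with $q$ chosen larger than $1/|s(\tau)_1-1|$ (respectively $1/s(\tau)_2$), so that any nonzero deviation produces a score shift exceeding one and no ties are needed. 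Two smaller points: since the class of elimination scoring rules is defined over \emph{monotone} scoring systems with last entry $0$, it suffices to prove $s(\tau)_1 = 1$ and $s(\tau)_2 = 0$, so your plan to show all non-top entries mutually equal is unnecessary extra work; and the linear-order base case you defer as a ``short argument'' is not an off-the-shelf citation but a genuine inductive construction (a cyclic profile detecting any positive second coordinate), which the paper carries out explicitly.
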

}

\begin{wrapfigure}[8]{r}{0.32\textwidth}
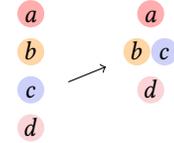

	\centering
	\weakorder{{a},{b},{c},{d}}
	\tikz{
		\path (0,0) -- (0.7,1);
		\draw[->] (0.1,0.8) -- (0.6,1);
	}
	\weakorder{{a},{b,c},{d},{}}
	\vspace{-8pt}
	\caption{Indifference monotonicity: if $c$ is the winner and a voter makes this change, then $c$ stays winning.}
	\label{fig:indifference-monotonicity}
\end{wrapfigure}

This statement may sound surprising, since IRV has long been famous for failing monotonicity axioms \citep{smith1973aggregation}; in particular, if $c$ is the winner and some voter moves up $c$ in their ranking by swapping its place with another candidate $b$, then $c$ may stop being the winner. In a sense, the reason for the monotonicity failure is that $b$ is now ranked lower by that voter, and may therefore be eliminated earlier than previously. We show that Approval-IRV satisfies a weak form of monotonicity: if $c$ is the winner, and a voter moves $c$ up so it is now in the same rank as $b$, then $c$ should remain the winner. We call this property \emph{indifference monotonicity}, since it only applies to the case where a voter introduces an indifference by shifting up some alternative that the voter previously did not have indifferent with any other alternative (see \Cref{fig:indifference-monotonicity}). Approval-IRV satisfies indifference monotonicity. As predicted by our characterization result, Split-IRV fails it.

For our third result, we turn to multi-winner elections. For linear orders, the Single Transferable Vote (STV) provides proportional representation to voters, which has been formalized by an axiom known as \emph{Proportionality for Solid Coalitions} (PSC). \citet{aziz2020expanding} generalized this axiom to weak orders, calling the resulting axiom \emph{generalized PSC}. They introduced a multi-winner voting rule called Expanding Approvals Rule (EAR) which works for weak orders and satisfies this axiom. We generalize STV to Approval-STV using the same ideas used in Approval-IRV, and prove that Approval-STV, thus defined, satisfies generalized PSC.
{
\renewcommand{\thetheorem}{\ref{thm:approval-stv-satisfies-gen-psc}}
\begin{theorem}
	Approval-STV satisfies generalized Proportionality for Solid Coalitions.
\end{theorem}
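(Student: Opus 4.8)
The plan is to adapt the classical argument showing that STV satisfies PSC on linear orders --- and that the Expanding Approvals Rule of \citet{aziz2020expanding} satisfies generalized PSC --- to the approval-with-eliminations setting. Recall that generalized PSC asks the following: whenever a set $N'$ of voters with $|N'| \ge \ell q$ (with $q$ the quota) is a solid coalition for a candidate set $C'$, in the sense that every voter in $N'$ ranks all members of $C'$ strictly above all candidates outside $C'$ (so that $C'$ is a union of that voter's top indifference classes), then at least $\min(\ell,|C'|)$ members of $C'$ must be elected. Recall also how Approval-STV runs: each ballot carries a weight, initially $1$; in each round every ballot contributes its full current weight to each of its top remaining candidates; if some candidate's score reaches the quota $q$ it is elected and the ballots supporting it are down-weighted so that exactly $q$ units of weight are consumed; otherwise the lowest-scoring candidate is eliminated, its ballots keeping their weight; the process halts once $k$ candidates have been elected.

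The engine of the proof is a weight-accounting observation for a solid coalition $N'$ for $C'$. As long as at least one member of $C'$ is still available (neither elected nor eliminated), every voter $i\in N'$ has its top remaining candidates inside $C'$ --- in fact inside the still-available part of $C'$ --- because $i$ strictly prefers all of $C'$ to everything else. Hence a ballot of $N'$ can lose weight only when one of its currently approved candidates, necessarily a member of $C'$, is elected; and electing any single $c\in C'$ reduces the total weight of $N'$ by at most $q$, since the $q$ units consumed are spread over all ballots approving $c$, some of which may lie outside $N'$. Consequently, while some member of $C'$ remains available, if $j$ members of $C'$ have been elected so far then $N'$ still carries total weight at least $\ell q - jq = (\ell - j)q$, all of it resting on the available members of $C'$.

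Now suppose for contradiction that fewer than $\min(\ell,|C'|)$ members of $C'$ are elected over the whole run, say $E$ of them, so $E \le \ell - 1$ and $E \le |C'| - 1$. Since $E < |C'|$, some member of $C'$ is never elected, hence (modulo the termination subtlety below) removed by elimination; let $c^*$ be the last member of $C'$ removed from contention. Just before $c^*$ is removed it is the unique available member of $C'$, so every ballot of $N'$ is contributing its full weight to $c^*$; by the accounting above, $c^*$ then has score at least $(\ell - j)q \ge (\ell - E)q \ge q$, where $j \le E$ is the number of members of $C'$ elected by that point. Thus at that round $c^*$ meets the quota, which forces Approval-STV to elect some candidate (possibly $c^*$) rather than eliminate $c^*$ --- a contradiction. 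Iterating this over successive removals of members of $C'$ forces at least $\min(\ell,|C'|)$ of them to be elected, which is generalized PSC.

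I expect the bookkeeping to be the real work. First, surplus transfer in the approval setting must be handled with care: a ballot can support several candidates simultaneously, so one has to fix exactly how Approval-STV apportions the $q$ consumed units among the supporting ballots and then verify the clean bound that each election of a member of $C'$ costs $N'$ at most $q$ (and that electing a candidate supported by both $N'$ and other ballots cannot drain $N'$ in a way this accounting misses). Second, the termination edge case: the committee may fill to $k$ seats while some member of $C'$ is still available and perhaps already at quota, and one must argue this cannot in fact cost the coalition a seat --- for example, using that elections take precedence over eliminations, and that the closing rounds elect all remaining candidates once the number of available candidates drops to the number of open seats, to conclude that such a candidate would already have been elected earlier. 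A final point to nail down is the exact rule ensuring Approval-STV never eliminates a candidate whose current score already meets the quota, on which the contradiction rests. One cannot avoid these issues by refining each weak order to a linear order and invoking ordinary PSC, since Approval-STV's elimination step does not coincide with STV's on any refinement, so the argument must be carried out directly on weak orders.
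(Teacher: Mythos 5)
Your proposal proves the wrong (weaker) statement. You restate generalized PSC as requiring every coalition voter to rank all of $C'$ \emph{strictly} above every outside candidate, and you ask for winners from $C'$ itself. That is exactly the weakened variant that the paper dismisses in a footnote as being satisfied even by Split-STV, so it cannot be the content of the theorem. The actual axiom (Aziz--Lee generalized PSC) only requires $T \succcurlyeq_i C \setminus T$ for each coalition voter, so a voter may have one ``mixed'' indifference class containing both $T$-candidates and outsiders, and the guarantee is $\ell$ winners from the closure $\mathsf{closure}_S(T)$, which can strictly contain $T$. Under that definition your central accounting step --- ``a ballot of $N'$ can lose weight only when one of its currently approved candidates, necessarily a member of $C'$, is elected'' --- fails: while a $T$-candidate is still available, a coalition voter's top class can contain closure candidates outside $T$, and the voter may well pay for their election. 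The paper's proof repairs this asymmetry deliberately: it maintains the invariant $s + r \ge \ell$, where $s = |W \cap \mathsf{closure}_S(T)|$ counts satisfaction \emph{in the closure} while $r$ counts remaining candidates \emph{of $T$}, using that coalition money is spent only on closure candidates as long as some $T$-candidate survives. Your budget argument would need to be recast in exactly this two-set form to prove the theorem as stated.

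There is also a structural gap in the contradiction, even for your strict version. You derive the contradiction at the \emph{last removal} of a member of $C'$, where it is the unique available member and all coalition weight rests on it. But the last removal may be an election, and the offending elimination can occur earlier, when several members of $C'$ are still available; at such a moment the coalition's weight is split among them, the candidate being eliminated may individually be below quota, and no contradiction follows from inspecting that candidate alone. What is needed --- and what the paper supplies --- is a pigeonhole step: at a tight moment when a $T$-candidate is about to be eliminated, the coalition's residual budget exceeds $rq$ and is carried entirely by supporters of the $r$ remaining $T$-candidates, so at least one of them has supporters holding strictly more than $q$, and the rule must then select rather than eliminate. Your remark about ``iterating over successive removals'' gestures at this, but the unique-available-member accounting does not deliver it. (Quota conventions also need aligning: with the Droop quota the coalition has strictly more than $\ell q$ budget and selection requires strictly more than $q$; these strict inequalities are what make the pigeonhole bite.)
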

}
\noindent
On the other hand, Split-STV fails this axiom, even in the simple case of selecting only one winner.

\begin{table}[t]
	\centering
	\begin{tabular}{lcc}
		\toprule
		& Approval-IRV & Split-IRV \\
		\midrule
		\hyperref[sec:clones]{Independence of clones} & \cmark & \xmark \\
		\hyperref[sec:majority]{Respecting cohesive majorities} & \cmark & \xmark \\
		\hyperref[sec:monotonicity]{Indifference monotonicity} & \cmark & \xmark \\
		\bottomrule
	\end{tabular}
	\vspace{5pt}
	\caption{Comparison of properties satisfied by the rules.}
	\vspace{-15pt}
	\label{tbl:axioms}
\end{table}

\section{Preliminaries}
Let $C = \{c_1, \dots, c_m\}$ be a set of $m$ \emph{candidates} or \emph{alternatives} and $N = \{1, \dots, n\}$ be a set of $n$ \emph{voters}. 

\paragraph{Weak orders.}
A \emph{weak order} $\succeq$ is a complete and transitive binary relation over $C$, where we write $a \succ b$ if $a \pref b$ and $b \not\pref a$ (strict preference) and $a \sim b$ if both $a \pref b$ and $b \pref a$ (indifference). 
For sets $A, B \subseteq C$, we say $A \succ B$ if $a \succ b$ for all $a\in A$ and $b \in B$.

\begin{wrapfigure}{r}{0.29\textwidth}
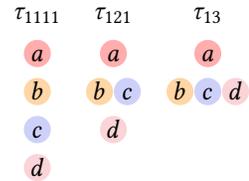

	\centering
	\votermultiplicity{$\tau_{1111}$}{\weakorder{{a},{b},{c},{d}}}\:
	\votermultiplicity{$\tau_{121}$}{\weakorder{{a},{b,c},{d},{}}}\:
	\votermultiplicity{$\tau_{13}$}{\weakorder{{a},{b,c,d},{},{}}}
	\caption{Examples of weak orders with different order types.}
	\label{fig:order-types}
\end{wrapfigure}

We will often write a weak order $\pref$ in the following format: $\{c_1, c_2\} \succ \{c_3\} \succ \{c_4\}$.
Formally, every weak order partitions $C$ into \emph{indifference classes} $C_1, \dots, C_k$ such that $C_1 \succ \dots \succ C_k$ and whenever $x, y \in C_j$ then $x \sim y$.
The \emph{order type} of a weak order $C_1 \succ \dots \succ C_k$ is the ordered list $\tau = (|C_1|,\dots,|C_k|)$ of the sizes of its indifference classes. We usually denote an order-type using the notation $\tau_{|C_1|\dots|C_k|}$ if this causes no ambiguity; for example $\tau_{121}$ denotes the order type $(1,2,1)$. See \Cref{fig:order-types} for examples. 
A weak order with order type $(1,1,\dots,1)$ (i.e., no indifferences) is called a \emph{linear order}, and a weak order with an order type $\tau$ with $|\tau| = 2$ (e.g., $\tau_{23}$) is called \emph{dichotomous}.
A \emph{profile} is a collection of weak orders $P = (\succeq_1,\dots,\succeq_n)$ where $\succeq_i$ is the weak order of voter $i$.

\paragraph{Scoring systems.}
We need the notion of a (positional) scoring system for weak orders. These are used to let weak orders assign scores to alternatives. A \emph{scoring system} is a map $s$ that maps order types $\tau$ to scores $s(\tau) = (s_1, \dots, s_{|\tau|}) \in \mathbb{R}_{\ge 0}^{|\tau|}$ where $s_{|\tau|} = 0$ without loss of generality.
Given a scoring system $s$ and a weak order $C_1 \succ \dots \succ C_k$ with order type $\tau$, an alternative $c \in C_j$ is assigned a score of $S_{\pref, s}(c) = s(\tau)_j$.
Given a profile $P = (\succeq_1,\dots,\succeq_n)$, the \emph{score} of alternative $c$ is $S(c) = S_{P, s}(c) = \sum_{i\in N} S_{\pref_i, s}(c)$. (We often drop the subscript if $P$ and $s$ are clear from the context.)

Examples of scoring systems are the \emph{approval scoring system} where $s(\tau) = (1,0,\dots,0)$ for all order types $\tau$ and the \emph{split scoring system} where $s(\tau) = (1/\tau(1), 0, \dots, 0)$. But the class of scoring systems is broad and includes systems like variants of Borda scores but also pathological systems such as those where weak orders with an odd number of indifference classes hand out approval scores, and other weak orders give out 0 score to all alternatives.%
\footnote{This definition gives the most general way to define positional scoring for weak orders in a neutral way (i.e., that doesn't depend on alternative names). It can also be derived from the Myerson--Pivato theorem \citep{myerson1995axiomatic,pivato2013variable} from which one can deduce that the voting rules for weak orders that satisfy reinforcement are exactly those selecting the highest-scoring alternatives based on a scoring system as we have defined it \citep[see also][]{Mork82a}.}
A scoring system is \emph{monotone} if for all order types $\tau$, we have  $s(\tau)_1 \ge s(\tau)_2 \ge \dots \ge s(\tau)_{|\tau|}$ and where $s(\tau)_{|\tau|} = 0$. We will only consider monotone scoring systems in this paper.%
\footnote{This is not a completely benign restriction, since an elimination scoring rule defined on a non-monotone score system can still satisfy axioms like unanimity \citep{freeman2014axiomatic}, but they seem very unnatural.}

\paragraph{Voting rules}
A \emph{voting rule} is a function $f$ that takes as input a profile $P$ and outputs a non-empty set of tied winners $W \subseteq C$ (usually a singleton). 
In this paper, we will focus on a particular family of voting rules, namely \emph{elimination scoring rules} \citep[Section 4]{smith1973aggregation}. 
The elimination scoring rule associated to a (monotone) scoring system $s$ selects a winning candidate by repeatedly removing a candidate with the lowest $s$-score from the profile, until only one candidate remains, who is the winner. More formally, and carefully handling the possibility of ties, the rule can be defined inductively as follows. Consider a profile $P$, defined on candidate set $C'$. If $C' = \{c\}$, we set $f(P) = \{c\}$. If $|C'| \ge 2$, let $L' = \{ c \in C' : S_{P, s}(c) \le S_{P, s}(d) \text{ for all $d\in C'$} \}$ be the set of lowest-scoring alternatives in $P$. Then we set $f(P) = \bigcup_{c \in L'} f(P|_{C' \setminus \{c\}})$, where $P|_{C' \setminus \{c\}}$ refers to the profile obtained from $P$ by restricting all the weak orders in it to the candidate set $C' \setminus \{c\}$ where $c$ has been deleted.%
\footnote{\label{footnote:put}This way of breaking ties is known as \emph{parallel-universe tie-breaking} \citep[Section 7]{conitzer2009preference}, and leads to an axiomatically well-behaved rule. Deciding whether a candidate is a winner under this tie-breaking is NP-complete \citep[Theorem 6.2]{boehmer2023rank}, but this should not be a problem in political elections with moderately low $m$ since the problem can be solved using an $O(2^m\cdot nm^2)$ time algorithm \citep[Theorem 6.1]{boehmer2023rank}.}

If we restrict ourselves to profiles of linear orders, various elimination scoring rules have already been defined and studied in the literature. The most well-known is Instant Runoff Voting (IRV), that uses plurality scores $s(\tau_{11\dots1}) = (1, 0, \dots, 0)$, but rules like the Baldwin rule and the Coombs rule (based on Borda and veto scores, respectively) have also been studied. 

For profiles of weak orders, we are particularly interested in two elimination scoring rules that generalize IRV in natural ways: \emph{Approval-IRV} is the elimination scoring rule based on the approval scoring system, and \emph{Split-IRV} is the one based on the split scoring system. We can interpret Approval-IRV as each voter assigning a full point to each top-ranked alternative. In  Split-IRV, the voter splits one point evenly among top-ranked alternatives -- for example if there are 3 top alternatives, then each receives a score of $\frac13$. Then repeatedly the lowest scoring alternative is eliminated, until only one alternative remains.

\section{Single-winner voting: Independence of clones and respecting cohesive majorities}

In this section we will axiomatically study elimination scoring rules and prove that Approval-IRV is the only elimination scoring rule to satisfy two axioms characteristic of IRV (when suitably generalized for weak orders): independence of clones and a condition about respecting majorities.

\subsection{Independence of clones}
\label{sec:clones}

Among social choice theorists, IRV stands out from most other ranking-based voting rules because it satisfies the independence of clones axioms \citep{tideman1987independence}. This axiom requires that adding new candidates to an election who are very similar to existing candidates (so that all voters rank them in adjacent positions) should not change the outcome. This is a desirable property in many contexts, including in political elections where some candidates may be running on similar policy platforms. Independence of clones is a way to avoid some forms of the ``spoiler effect'' that plagues elections using plurality voting. Apart from IRV, among standard voting rules only certain Condorcet extensions (such as Schulze's rule, ranked pairs, split cycle, and some tournament solutions) satisfy independence of clones \citep{schulze2011,holliday2023split}, and they are much more complicated than IRV and therefore harder to ``sell'' to voters and politicians.

\begin{figure}[t]
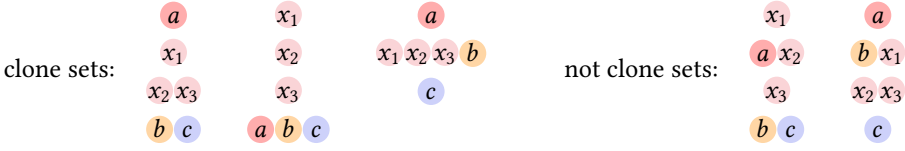

	\centering
	\tikz{\node[minimum height=2cm] {clone sets: \:}}
	\weakorder{{a},{x_1/alt-x},{x_2/alt-x,x_3/alt-x},{b,c}}
	\quad
	\weakorder{{x_1/alt-x},{x_2/alt-x},{x_3/alt-x},{a,b,c}}
	\quad
	\weakorder{{a},{x_1/alt-x,x_2/alt-x,x_3/alt-x,b},{c},{}}
	\qquad
	\tikz{\node[minimum height=2cm] {not clone sets: \:}}
	\weakorder{{x_1/alt-x},{a,x_2/alt-x},{x_3/alt-x},{b,c}}
	\quad
	\weakorder{{a},{b, x_1/alt-x},{x_2/alt-x,x_3/alt-x},{c}}
	\caption{Examples of $X = \{x_1, x_2, x_3\}$ being a clone set or not being a clone set.}
	\label{fig:clone-set-examples}
\end{figure}

Let us formally define independence of clones for weak orders. Given a profile $P$, a set of candidates $X \subseteq C$ is a \emph{clone set} if for every voter $i \in N$ and every candidate $c\not\in X$, we have either
\[
x \succ_i c \text{ for all $x\in X$,} 
\hspace{17pt}\text{or}\hspace{17pt} 
x \sim_i c \text{ for all $x\in X$,}
\hspace{17pt}\text{or}\hspace{17pt}
c \succ_i x \text{ for all $x\in X$.}
\]
See \Cref{fig:clone-set-examples} for examples and non-examples of clone sets.
We can now define the axiom:

\begin{definition}%
\label{def:independence-of-clones}
A voting rule $f$ satisfies \emph{independence of clones} if for all profiles $P$ with clone set $X \subseteq C$, letting $\hat P$ be the profile obtained by removing all but one candidate $\hat x$ from $X$, it holds that
\begin{enumerate}
    \item for every $c \notin X$, we have $c\in f(P)$ if and only if $c \in f(\hat P)$, and
    \item $\hat x\in f(\hat P)$ if and only if there exists $x \in X$ such that $x \in f(P)$.
\end{enumerate}
\end{definition}

Informally, this property states that adding or removing clones should not alter the result of the election. In particular, non-clones either stay winning or stay losing. The only thing that might change is that if a clone is winning, then another clone can win instead. 
\Cref{def:independence-of-clones} is equivalent to the definition for weak orders given by \citet{schulze2011} and by \citet[Section 5.3.2]{holliday2023split}. 
Restricted to dichotomous orders, it is equivalent to the version of the axiom defined by \citet{brandl2022approval}. Restricted to linear orders, it is equivalent to the original definition of \citet{tideman1987independence}. However, it is stronger than the way \citet[p.~186]{tideman1987independence} proposed to define independence of clones for weak orders (in his definition, it is not allowed for a clone to be ranked equally to a non-clone, so the third weak order shown in \Cref{fig:clone-set-examples} would not qualify).

Approval-IRV satisfies independence of clones. We can show this by adapting the standard proofs that linear-order IRV satisfies the axiom \citep{tideman1987independence,freeman2014axiomatic}, using the fact that cloning an alternative does not change the score of any non-clone alternative in any round of Approval-IRV. We defer the detailed proof to the appendix (\Cref{app:AVIRVclones-proof}).

\begin{restatable}{theorem}{AVIRVclones} \label{thm:AVIRVclones}
    Approval-IRV is independent of clones.
\end{restatable}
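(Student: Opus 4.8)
The plan is to prove directly the bidirectional statement of \Cref{def:independence-of-clones}: writing $f$ for Approval-IRV and $P-c$ for the profile obtained from $P$ by deleting $c$, I claim that for every profile $P$ with clone set $X$ and every contracted profile $\hat P$ (obtained by keeping a representative $\hat x\in X$ and deleting the rest of $X$), (i) $c\in f(P)\iff c\in f(\hat P)$ for each non-clone $c$, and (ii) $f(P)\cap X\neq\emptyset\iff \hat x\in f(\hat P)$. I would argue by strong induction on the number of candidates $m$, unwinding the recursive definition $f(P)=\bigcup_{c\in L'(P)}f(P-c)$. If $|X|\le 1$ then $P=\hat P$ and there is nothing to prove (this also disposes of $m=1$), so assume $|X|\ge 2$; the degenerate case in which every candidate is a clone is also immediate, since then $f(P)\subseteq X$ is nonempty and $\hat P$ consists of a single candidate.

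Before the induction I would record three elementary facts about approval scores, valid for any profile $Q$ with clone set $Y$ and contraction $\hat Q$ (representative $\hat y$). \emph{(F1)} Cloning preserves non-clone scores: $S_Q(c)=S_{\hat Q}(c)$ for every $c\notin Y$, because by the clone-set condition whether voter $i$ top-ranks $c$ depends only on whether $c$ is strictly above, tied with, or strictly below all of $Y$, which is unaffected by how many members of $Y$ survive. \emph{(F2)} The representative's score dominates the clones': $S_{\hat Q}(\hat y)\ge\max_{y\in Y}S_Q(y)$, since any voter top-ranking some member of $Y$ in $Q$ still top-ranks $\hat y$ in $\hat Q$. \emph{(F3)} Representative-independence: deleting all of $Y$ but $\hat y$ yields a profile isomorphic (swap $\hat y\leftrightarrow\hat y'$) to the one obtained by deleting all of $Y$ but $\hat y'$, again by the clone-set condition, and elimination scoring rules are neutral. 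In particular, if $c$ is a non-clone then the contraction of $P-c$ is exactly $\hat P-c$, whereas if $c\in X$ then $P-c$ has clone set $X\setminus\{c\}$, whose contraction is $\hat P$ up to renaming the representative.

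For the forward inclusions, fix $c\in L'(P)$ and apply the inductive hypothesis to $P-c$. If $c$ is a non-clone, (F1)--(F2) give $S_{\hat P}(c)=S_P(c)\le\min_{x\in X}S_P(x)\le S_{\hat P}(\hat x)$ and $S_{\hat P}(c)\le S_{\hat P}(d)$ for every non-clone $d$, so $c\in L'(\hat P)$ and hence $f(\hat P-c)\subseteq f(\hat P)$; combined with the inductive hypothesis on $P-c$ (whose contraction is $\hat P-c$), this shows that any $w\in f(P-c)$ relates to $f(\hat P)$ as (i)--(ii) require. If $c\in X$, apply the inductive hypothesis to $P-c$ directly, using (F3) and that $w\in f(P-c)$ forces $w\neq c$. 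Taking the union over $c\in L'(P)$ establishes the ``$\Rightarrow$'' halves of (i) and (ii).

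For the reverse inclusions, take $w\in f(\hat P)$ and a branch $d\in L'(\hat P)$ with $w\in f(\hat P-d)$. This is where I expect the main difficulty: parallel-universe tie-breaking is not symmetric, because (F2) bounds the clone scores only \emph{from above} by $S_{\hat P}(\hat x)$, so a lowest-scoring candidate of $\hat P$ can fail to be lowest-scoring in $P$, and we cannot always replay $\hat P$'s first elimination in $P$. I would split into three cases. (a) $d$ is a non-clone with $d\in L'(P)$: replay it and recurse on $P-d$ (contraction $\hat P-d$). (b) $d$ is a non-clone with $d\notin L'(P)$: any candidate scoring strictly below $d$ in $P$ must be a clone (a non-clone cannot, by (F1) and minimality of $d$ in $\hat P$), so $L'(P)\subseteq X$; eliminate any $x_1\in L'(P)$ and recurse on $P-x_1$, whose contraction agrees with $\hat P$ up to renaming, so the inductive hypothesis transfers the status of $w$ directly. (c) $d=\hat x$ (so $w$ is a non-clone): (F2) gives $\max_{x\in X}S_P(x)\le S_{\hat P}(\hat x)$, and the right-hand side is the minimum score in $\hat P$, hence at most $S_{\hat P}(c)=S_P(c)$ for every non-clone $c$; so some clone lies in $L'(P)$, and we eliminate it and recurse as in (b). In every case we descend to $m-1$ candidates, and the inductive hypothesis closes the argument.
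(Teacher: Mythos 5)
Your proof is correct and follows essentially the same route as the paper's: your facts (F1)--(F3) are exactly the paper's clone-score lemma plus its neutrality observation, and your induction on $m$ with the case split on whether the eliminated candidate is a clone (and, in the reverse direction, whether a clone is lowest-scoring in $P$ so that removing it leaves the contraction unchanged up to renaming) mirrors the paper's argument for its hypotheses $H_1$ and $H_2$. The only difference is the cosmetic reorganization of the reverse direction around the branch $d$ chosen in $\hat P$ rather than directly on whether $L'(P)$ contains a clone.
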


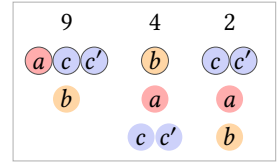
\begin{wrapfigure}{r}{0.26\textwidth}
	\centering
	\begin{tikzpicture}
		\node at (0,0) [draw=black!30] (step1) {	
			\votermultiplicity{9}{\weakorder[border]{{a,c,c'},{b},{}}}
			\votermultiplicity{4}{\weakorder[border]{{b},{a},{c,c'}}}
			\votermultiplicity{2}{\weakorder[border]{{c,c'},{a},{b}}}
		};
	\end{tikzpicture}
	\vspace{-8pt}
	\caption{Split-IRV fails independence of clones.}
	\label{fig:split-fails-clone}
\end{wrapfigure}

However, Split-IRV fails independence of clones. A simple counter-example is shown in \Cref{fig:split-fails-clone}, where $a$ is eliminated with a score of $3$ while other alternatives have a score of $4$; then finally the clones $c$ and $c'$ win. However, when removing the clone $c'$ of $c$ from the profile, we get a profile where the score of $a$ is $4.5$, the score of $b$ is $4$, and the score of $c$ is $6.5$. Thus, $b$ gets eliminated, after which $a$ wins the majority vote against $c$. Thus, the winner changed from $c$/$c'$ to $a$, which is not allowed by independence of clones.
Still, there are elimination scoring rules other than Approval-IRV that satisfy independence of clones. One example is the rule that is like Approval-IRV, but with scoring vector $s(\tau) = (\frac{1}{2}, 0, \dots, 0)$ for order types $\tau$ such that $\tau(1) \ge 2$ or such that $|\tau| = 2$. Note that this rule still generalizes IRV on linear orders and approval voting on dichotomous orders.

\subsection{Respecting cohesive majorities}
\label{sec:majority}

\begin{wrapfigure}[9]{r}{0.27\textwidth}
	\centering
	\begin{tikzpicture}
		\node at (0,0) [draw=black!30] (step1) {	
			\votermultiplicity{47\%}{\weakorder{{a,b},{c},{d},{}}}
			\votermultiplicity{4\%}{\weakorder{{a},{b},{c},{d}}}
			\votermultiplicity{25\%}{\weakorder{{c},{b},{d},{a}}}
			\votermultiplicity{24\%}{\weakorder{{d},{b},{c},{a}}}
		};
	\end{tikzpicture}
	\vspace{-7pt}
	\caption{A problem with electing majority alternatives.}
	\label{fig:majority-alternative}
\end{wrapfigure}
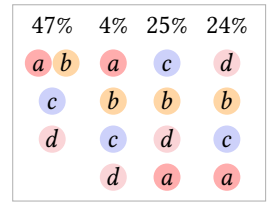
Another characteristic property of linear-order IRV (often known as the \emph{majority criterion}) is that if a majority of voters places some candidate in top position, then that candidate wins. There are several ways to define such a condition for weak orders. A weak version would be \emph{respect for unanimous majorities} \citep{brandl2022approval} which would say that if a majority of voters rank the same set $T$ of alternatives in top position, then only alternatives in $T$ should be winners. This is satisfied by both Approval-IRV and Split-IRV. 
An axiom that is intuitive but undesirably strong is \emph{select some majority alternative} which demands that if some alternative $c \in C$ is a majority alternative (i.e., is in the top indifference class of a majority), then all winning alternatives must be majority alternatives (noting that there could be several). To see why this might be undesirable, consider the profile in \Cref{fig:majority-alternative} where this axiom would demand that $a$ is the winner. However, alternative $b$ is strictly preferred over $a$ by $49\%$ of voters, while only $4\%$ of voters have the opposite strict preference. Thus, there is a good argument that $b$ should be the winner, and indeed it is the winner under Approval-IRV (and it is also the Condorcet winner). In contrast, Split-IRV selects $a$.

\begin{table}[t]
	\centering
	\scalebox{0.8}{
	\begin{tabular}{lcc}
		\toprule
		& Approval-IRV & Split-IRV \\
		\midrule
		Respect for unanimous majorities & \cmark & \cmark \\
		Respect for cohesive majorities & \cmark & \xmark \\
		Select some majority alternative & \xmark & \xmark \\
		\bottomrule
	\end{tabular}}
	\vspace{5pt}
	\caption{Comparison of majority properties satisfied by the rules.}
	\vspace{-15pt}
	\label{tbl:majority-axioms}
\end{table}

We propose an axiom called ``respect for cohesive majorities'' that logically lies between these two axioms (see \Cref{tbl:majority-axioms}). It says that if there is a majority of voters who rank some alternative $c$ on top (so they are ``cohesive''), possibly among others, then the winning alternative must be ranked top by at least one member of that majority. 

\renewcommand{\top}{\mathrm{top}}
Given a profile $P$, write $\top_i = \{c \in C : c \pref_i d \text{ for all $d\in C$}\}$ for the top alternatives of voter $i$.
\begin{definition}
	\label{def:majority}
	A voting rule $f$ \emph{respects cohesive majorities} if for all profiles $P$ and all subsets of voters $S \subseteq N$ such that $|S| > \frac{n}{2}$ and $\bigcap_{i\in S} \top_i \neq \emptyset$, we have $f(P) \subseteq \bigcup_{i \in S} \top_i$.
\end{definition}

Intuitively, if a group of $\frac{n}{2} + t$ voters is a cohesive majority that jointly ranks $c$ on top, then the group has a ``default claim'' that $c$ should be elected. If a rule wants to override this claim, the axiom says that it must choose some alternative $x$ that at least $t$ voters in that group think is at least as good as $c$ (and that they hence rank on top). Making this choice justifiably overrides the claim because at most $\frac{n}{2}$ voters remain in the group, who do not form a majority on their own.

\Cref{fig:split-fails-majority} shows an example profile where a majority of voters (19 out of 37) have top sets $\{a,b,c\}$, $\{a,b\}$, and $\{a,c\}$. They are cohesive as they agree on $a$. Therefore, the axiom demands that the winning candidates are either $a$, $b$, or $c$, but not $d$. On the shown profile, $d$ is the winner under Split-IRV, which therefore fails our axiom. Approval-IRV selects $a$, consistently with our axiom.

Respect for cohesive majorities is inspired by proportionality axioms, and is in fact a special case of ``generalized PSC'' axiom (with $|T|=k=1$) which we discuss in \Cref{sec:multi-winner}. Our axiom also implies the ``weak defensive strategy criterion'' \citep{ossipoff2000}, which says that if a majority of voters strictly prefers $a$ over $b$, then there should be votes that the majority can submit that ensure that $b$ loses. The submitted votes should be \emph{sincere} (they only differ from the true preferences by including extra indifferences). Respect for cohesive majorities implies this property, since each member of the majority can report all candidates that they strictly prefer to $b$ as their top alternatives.

\begin{figure}[t]
	\begin{tikzpicture}
		\node at (0,0) [draw=black!30] (step1) {	
			\votermultiplicity{9}{\weakorder[border]{{a,b,c},{d},{}}}
			\votermultiplicity{5}{\weakorder[border]{{a,b},{d},{c}}}
			\votermultiplicity{5}{\weakorder[border]{{a,c},{d},{b}}}
			\votermultiplicity{8}{\weakorder[border]{{b,c,d},{a},{}}}
			\votermultiplicity{10}{\weakorder[border]{{d},{a,b,c},{}}}
		};
	\end{tikzpicture}
	\caption{Split-IRV violates respect for cohesive majorities because it eliminates $a$, then $b$ and $c$, and elects $d$.}
	\label{fig:split-fails-majority}
\end{figure}
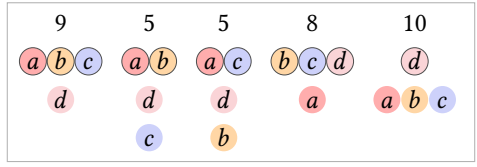

As we mentioned, Split-IRV fails to respect cohesive majorities. 
In addition, all standard Condorcet extensions fail the axiom, as the following result implies (see \citealp{zwicker2016} for definitions):

\begin{proposition}
	\label{prop:c2-majority}
	No C2 voting rule (i.e., one that depends only on the pairwise majority margins $|\{ i \in N : a \succ_i b \}| - |\{ i \in N : b \succ_i a \}|$, like Schulze or ranked pairs) respects cohesive majorities.
\end{proposition}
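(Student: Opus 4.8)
The plan is to turn the defining feature of a C2 rule — that its output depends only on the matrix of pairwise majority margins $m(x,y) = |\{i : x \succ_i y\}| - |\{i : y \succ_i x\}|$ — against it. I will build three weak-order profiles $P_1, P_2, P_3$ over $C = \{a,b,c\}$, with five voters each, that all induce the \emph{same} margin matrix, namely the fully symmetric Condorcet cycle $m(a,b) = m(b,c) = m(c,a) = 1$. Since the three profiles have identical margins, any C2 rule $f$ must satisfy $f(P_1) = f(P_2) = f(P_3) =: W$. I will arrange the profiles so that in $P_1$ there is a cohesive majority whose union of top sets is $\{a,c\}$, in $P_2$ a cohesive majority with union $\{a,b\}$, and in $P_3$ one with union $\{b,c\}$. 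Respecting cohesive majorities would then force $W \subseteq \{a,c\} \cap \{a,b\} \cap \{b,c\} = \emptyset$, contradicting the requirement that a voting rule always returns a non-empty set, so $f$ must violate the axiom on at least one of the three profiles.

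Concretely, I would let $P_1$ consist of one voter reporting $a \succ b \succ c$, two voters reporting $\{a,c\} \succ \{b\}$, and two voters reporting $b \succ c \succ a$; a check of the six pairwise comparisons confirms $m(a,b) = m(b,c) = m(c,a) = 1$. The three voters whose top set contains $a$ form a strict majority, are cohesive on $a$, and the union of their top sets is exactly $\{a,c\}$, so respecting cohesive majorities forces $f(P_1) \subseteq \{a,c\}$. Because the target margin matrix is invariant under the cyclic relabelings $a \mapsto b \mapsto c \mapsto a$ and $a \mapsto c \mapsto b \mapsto a$, I obtain $P_2$ and $P_3$ by applying these relabelings to the ballots of $P_1$; they inherit the same margin matrix, while their cohesive majorities now pin $W$ inside $\{a,b\}$ and $\{b,c\}$ respectively. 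Combining the three containments gives the empty intersection, which finishes the argument; I would then note that Schulze, ranked pairs, split cycle and the other standard Condorcet extensions are C2 rules, so the proposition applies to all of them.

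The delicate point — and the reason three profiles are needed rather than two — is the design of the cohesive majority in $P_1$. A ``naive'' cohesive majority in which every member ranks all of $\{a,c\}$ on top will not work, since then both $a$ and $c$ would beat $b$ by a strict majority, which is incompatible with the cyclic margin pattern; the remedy is to split the majority into one voter with top set $\{a\}$ and two with top set $\{a,c\}$ and to choose the minority's linear ballots precisely so as to restore the cyclic balance of the margins, which is the only nontrivial computation. I would also explain why two profiles cannot suffice: if $P$ and $P'$ had equal margins and cohesive majorities with disjoint unions of top sets $U$ and $U'$, then the common top $c^\ast$ of the first majority (since ${c'}^\ast \notin U$) would beat the common top ${c'}^\ast$ of the second by a strict majority in $P$, while ${c'}^\ast$ would symmetrically beat $c^\ast$ in $P'$, forcing $m(c^\ast,{c'}^\ast)$ to be simultaneously positive and negative. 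It is exactly the Condorcet-cycle structure that lets three profiles ``rotate'' the excluded candidate without triggering such a sign clash.
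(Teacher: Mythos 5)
Your proof is correct and is essentially the paper's argument: your profile $P_1$ is the paper's five-voter profile (up to swapping $b$ and $c$), and both proofs exploit the fact that its margin matrix is the fully symmetric Condorcet cycle, so the C2 property forces the same winning set under cyclic relabelings while the cohesive majority rotates which candidate is excluded. The only difference is presentational: the paper compresses your three explicit profiles into a single ``without loss of generality, $c$ is a winner'' renaming step.
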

\begin{proof}
	Consider the following profile:
	\begin{center}
		\begin{tikzpicture}
			[vertex/.style={inner sep=0pt},
			 margin/.style={inner sep=0.8pt, fill=white,transform shape,scale=0.85}]
			\node at (-0.7,0) [draw=black!30] (step1) {	
				\weakorder{{a},{c},{b}}
				\weakorder{{a, b},{c},{}}
				\weakorder{{a, b},{c},{}}
				\weakorder{{c},{b},{a}}
				\weakorder{{c},{b},{a}}
			};
			\node at (3.9,0) {with weighted majority graph};
			\node[vertex] at (7.177,0.5) (a) {\alternative{a}}; %
			\node[vertex] at (7.754,-0.5) (c) {\alternative{c}}; %
			\node[vertex] at (6.6,-0.5) (b) {\alternative{b}};
			\draw (a) edge[->] node[margin] {$1$} (c);
			\draw (c) edge[->] node[margin] {$1$} (b);
			\draw (b) edge[->] node[margin] {$1$} (a);
		\end{tikzpicture}
	\end{center}
	Without loss of generality, because the voting rule is C2, we may assume $c$ is a winner in this profile (otherwise we can rename alternatives in the profile but retain the same weighted majority graph). However, a majority of voters places $a$ in top position and none of these voters puts $c$ in top position. Therefore, by respect for cohesive majorities, $c$ must not be a winner, a contradiction.
\end{proof}

On the other hand, Approval-IRV does satisfy the axiom.%
\footnote{In fact, the proof can be adapted to show that Approval-IRV satisfies generalized PSC (\Cref{def:gen-psc}) which is stronger than respect for cohesive majorities (it is a type of ``mutual majority criterion''). 
This result does not follow immediately from \Cref{thm:approval-stv-satisfies-gen-psc} because there are slight differences between Approval-IRV and Approval-STV for $k = 1$, see \Cref{rem:no-majority-stopping}.}

\begin{restatable}{theorem}{AVIRVweakmaj}
Approval-IRV respects cohesive majorities.
\end{restatable}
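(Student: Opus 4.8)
The plan is to fix a subset $S \subseteq N$ with $|S| > \tfrac{n}{2}$ and $\bigcap_{i\in S}\top_i \neq \emptyset$, pick a common top candidate $c^* \in \bigcap_{i\in S}\top_i$, and set $T := \bigcup_{i\in S}\top_i$; the goal is $f(P) \subseteq T$. Recall that under parallel-universe tie-breaking $f(P)$ is exactly the set of candidates that remain after some full sequence of eliminations, each step removing a lowest-scoring (here: lowest-approval) candidate. So it suffices to fix one such run and prove that its unique surviving candidate lies in $T$. Throughout the run I will follow the fate of $c^*$.

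The key ingredient is a local observation about the approval scores. I claim that in any round of the run whose surviving candidate set $C'$ still contains $c^*$, every voter $i \in S$ assigns all of its approval points to candidates in $\top_i \cap C'$, and in particular gives a point to $c^*$. Indeed, since $c^*$ is a globally most-preferred candidate of $i$ and $c^* \in C'$, the set of $i$'s most-preferred candidates among the survivors $C'$ is precisely $\top_i \cap C'$ (any survivor weakly preferred to everything in $C'$ is weakly preferred to $c^*$, hence indifferent to $c^*$, hence in $\top_i$); and $\top_i \subseteq T$ by the definition of $T$. Computing scores in the profile restricted to $C'$, the immediate consequence is that $S(c^*) \ge |S| > \tfrac n2$, while any survivor $d \in C' \setminus T$ receives points only from voters outside $S$, so $S(d) \le n - |S| < \tfrac n2 < S(c^*)$.

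It remains to do a two-case analysis on the run. If $c^*$ is never eliminated, it is the last candidate standing, so the winner is $c^* \in T$ and we are done. Otherwise, consider the round in which $c^*$ is eliminated, with surviving set $C''$ (so $|C''|\ge 2$ and $c^*$ is a lowest-scoring candidate in $P$ restricted to $C''$). By the estimate above, if $C'' \setminus T$ were nonempty it would contain a candidate strictly cheaper than $c^*$, contradicting minimality of $c^*$'s score; hence $C'' \subseteq T$. But then every candidate set arising in the remainder of the run is a subset of $C'' \subseteq T$, so the winner again belongs to $T$. Ranging over all runs yields $f(P) \subseteq \bigcup_{i\in S}\top_i$, which is exactly respect for cohesive majorities.

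I expect the one genuinely delicate point to be the claim that voter $i$'s current top set equals $\top_i \cap C'$ while $c^*$ survives — this relies on $c^*$ being top for \emph{every} member of $S$, not merely preferred to the currently surviving candidates — together with the (mildly clever) realization that we do not actually need $c^*$ to win: it is enough that at the moment $c^*$ is eliminated the score estimate already forces all remaining candidates into $T$. Everything else is routine bookkeeping about the elimination recursion and the union-over-runs description of parallel-universe tie-breaking.
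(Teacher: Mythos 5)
Your proof is correct and follows essentially the same route as the paper: track the round in which the common top candidate $c^*$ would be eliminated, use that its approval score exceeds $\tfrac n2$ while voters in $S$ only approve within their original top sets as long as $c^*$ survives, and conclude that every remaining candidate lies in $\bigcup_{i\in S}\top_i$. The only cosmetic difference is that you bound the score of a candidate outside $T$ directly by $n-|S|<\tfrac n2$ instead of invoking the paper's ``two strict majorities intersect'' phrasing, which is the same observation in contrapositive form.
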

\begin{proof}
	Let $S$ be a subset of more than $\frac{n}{2}$ voters who all rank candidate $c$ top. We need to show that the winner under Approval-IRV is ranked top by some voter in $S$. If $c$ is the winner, we are done. Otherwise, consider the time just before $c$ gets eliminated. At that time, a majority of voters (namely at least $S$) ranks $c$ on top, so it has approval score more than $\frac{n}{2}$. Yet because it is about to be eliminated, this is the lowest score. Therefore, for every remaining candidate $x$, we see that $x$ is ranked top by a majority. Since two strict majorities must intersect, it follows that there is some voter $i \in S$ who ranks $x$ top at this time, and because $c$ is still present, $i$ must have ranked $x$ top from the start. Therefore, at this time all remaining alternatives are top alternatives for someone in $S$, and hence this will also be true of the eventual winner.
\end{proof}

\subsection{Characterization}

We have seen that Approval-IRV satisfies independence of clones and respects cohesive majorities, two properties that are characteristic of IRV in the linear-order context. Split-IRV fails both axioms. We will now prove that Approval-IRV is in fact characterized by these two properties within the class of elimination scoring rules.%
\footnote{\citet[Theorem 1]{freeman2014axiomatic} previously characterized linear-order IRV to be the only linear-order elimination scoring rule satisfying independence of clones. However, we cannot use their result since they interpreted IRV as a social welfare function that outputs a \emph{ranking} of alternatives (the elimination order). In this setting, independence of clones is a much stronger property. Indeed, \citet{freeman2014axiomatic} write that they ``do not know whether other nontrivial runoff scoring rules would satisfy the property'' when looking at social choice functions, because their ``proofs relied heavily on being able to alter \emph{some} position in the ranking'', while in our characterization we need to reason about the set of final winners.}
The proof is quite involved, since we need to characterize the scoring vector for every possible order type. It appears in \Cref{app:clonemajority-proof}.

\begin{restatable}{theorem}{clonemajority} \label{thm:clonemajority}
When there are at least $m \ge 4$ alternatives, Approval-IRV is the only elimination scoring rule that satisfies independence of clones and respects cohesive majorities.
\end{restatable}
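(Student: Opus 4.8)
The plan is to establish the ``only'' direction: every elimination scoring rule $f$, built from a monotone scoring system $s$, that satisfies independence of clones and respects cohesive majorities must agree with Approval-IRV on every profile (the reverse direction being immediate from \Cref{thm:AVIRVclones} and \Cref{sec:majority}). The key reduction is to a statement purely about the scoring vectors: $f$ coincides with Approval-IRV as soon as $s$ is \emph{approval-like}, meaning that (i) $s(\tau)_j = 0$ for every order type $\tau$ of total size at most $m$ and every position $j \ge 2$, so only top-ranked candidates ever earn points, and (ii) for each fixed candidate count $m' \le m$ there is a single positive constant $w(m')$ with $s(\tau)_1 = w(m')$ for all order types $\tau$ of total size $m'$. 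Crucially $w$ may depend on $m'$: within any one elimination round all voters rank exactly the current candidate set, so their order types all have the same total size, and the common rescaling $w(m')$ does not influence who is eliminated. Granting (i) and (ii), a short induction on the number of candidates finishes the job: in a round on $m'$ candidates each candidate's $s$-score is $w(m')$ times its approval score, so the set of $s$-minimal candidates equals the set of least-approved candidates, and since parallel-universe tie-breaking recurses over exactly this set, the inductive hypothesis on $m'-1$ candidates gives $f(P) = \text{Approval-IRV}(P)$.

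The substance is therefore proving (i) and (ii), and the two axioms pull in different directions. Independence of clones is what forces the ``shape'' condition (i): if some order type handed out positive weight to a position $j \ge 2$, one could take a candidate occupying such a position and clone it, so that the cloned copies soak up the down-ballot weight and shift the scores of non-clone candidates relative to one another, changing which candidate is eliminated and hence the winner, in violation of clone-independence. Respect for cohesive majorities is what forces the ``magnitude'' condition (ii), including positivity $s(\tau)_1 > 0$ (which rules out pathological systems that silently ignore certain order types): if two order types $\tau,\tau'$ of equal total size had $s(\tau)_1 \ne s(\tau')_1$, one builds a profile in which a majority all ranking a fixed candidate $c$ on top is deliberately split between voters of type $\tau$ and voters of type $\tau'$ so that some candidate $d$ outside every top set of that majority ends up $s$-maximal and wins, contradicting the axiom; the hypothesis $m \ge 4$ is used exactly to have enough spare candidates to realize the needed order types while still keeping such a $d$ available and the relevant coalition a strict majority.

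The main obstacle I anticipate is executing (i) in full generality: one must exclude positive scores at \emph{every} non-top position of \emph{every} order type of size at most $m$, not merely for linear orders, and the cloning gadget has to be engineered to interact correctly with indifference classes (cloning a candidate enlarges an indifference class and thereby changes the order types, hence the scores, of the other voters), so the argument must be run uniformly over this combinatorics. A second, more delicate point is that clone-independence alone does not single out Approval-IRV even once (i) holds -- the rule with $s(\tau)=(\frac{1}{2},0,\dots,0)$ for order types $\tau$ whose top class has size $\ge 2$, noted just after \Cref{thm:AVIRVclones}, is approval-like in sense (i) but not in sense (ii) and still satisfies clone-independence -- so the proof must invoke respect for cohesive majorities to eliminate precisely this family of ``in-between'' rules. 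Finally, all of this has to be carried out while tracking winner \emph{sets} rather than single winners, so that the equalities survive parallel-universe tie-breaking at each step of the induction.
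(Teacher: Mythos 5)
Your high-level reduction is sound and matches the paper in spirit: after rescaling the scoring vectors of each total size by a common positive constant (which is invisible to the elimination process, since within a round all voters' order types have the same size), it suffices to show that every order type carries the vector $(1,0,\dots,0)$, and then a routine induction compatible with parallel-universe tie-breaking gives agreement with Approval-IRV. The gap lies in how you propose to establish the two conditions. Your claim that respect for cohesive majorities forces the ``magnitude'' condition (ii) is false, even granted (i): the rule that coincides with Approval-IRV except that $s(\tau_{21})=(\tfrac12,0)$ puts all weight on top positions, has unequal top weights among order types of size $3$, and still respects cohesive majorities --- it is exactly the example used at the end of the characterization section to show the two axioms are independent. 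Equalizing the top weights genuinely requires the clone axiom (for instance, a cloning construction is what yields $s(\tau_{21})_1=s(\tau_{31})_1$), and this cross-size transfer is the real reason the theorem needs $m\ge 4$: on three alternatives Split-IRV satisfies \emph{both} axioms, so the proof must treat sizes $3$ and $4$ simultaneously rather than merely having ``spare candidates'' available. Symmetrically, your claim that independence of clones alone forces the ``shape'' condition (i) is unsubstantiated and likely out of reach: even for linear orders, whether clone-independence singles out plurality-style scores among elimination scoring rules viewed as \emph{choice functions} is precisely what \citet{freeman2014axiomatic} leave open (their characterization concerns elimination orders, as noted in the footnote accompanying \Cref{thm:clonemajority}). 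Accordingly, the paper derives the zero-tail conditions using the majority axiom as well (e.g., for the linear order type on three candidates and for types such as $\tau_{211}$, $\tau_{112}$, $\tau_{121}$), and uses clones for others such as $\tau_{212}$.

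So the decomposition ``clones give (i), majorities give (ii)'' cannot be carried out one axiom at a time; the actual argument in \Cref{app:clonemajority-proof} interleaves both axioms throughout, proceeding order type by order type (linear types first, then $\tau_{21},\tau_{31},\tau_{211},\tau_{12},\tau_{22},\tau_{13},\tau_{112},\tau_{212},\tau_{121}$, then all dichotomous types, then a general double induction), each step resting on an explicit counterexample profile tailored to the unknown score values. Your sketch supplies none of these constructions, and the difficulties you flag yourself --- that cloning changes the order types (hence the scores) of all other voters, and that winner \emph{sets} must be tracked --- are exactly where such explicit profiles are needed; without them, and with the division of labor between the axioms mis-assigned, the proposal does not yet constitute a proof.
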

\begin{proof}[Proof sketch]
	We start by showing that all linear order types $\tau_{11\dots1}$ have score vector $(1,0,\dots,0)$. Then, most of the proof is spent on showing that the score vector of all order types on 3 and 4 alternatives must be $(1,0, \dots,0)$. Respect for cohesive majorities allows us to deduce that only the first score value can be positive, and it also imposes some bounds on that value (e.g., we exhibit a profile showing that every scoring system where $\tau_{21}$ gives less than $\frac59$ points will exhibit a violation of the majority axiom). We can then repeatedly use independence of clones to deduce inequalities between the scores of an order type for 4 alternatives (such as $\tau_{31}$) and an order type for 3 alternatives (such as $\tau_{21}$). An interesting feature of this proof is that we need to consider the cases of $m = 3$ and $m = 4$ simultaneously, because when only 3 alternatives exist, Split-IRV satisfies both axioms; it only starts failing them once a fourth alternative is available.
	
	Once we know that all order types with 3 or 4 alternatives (as well as $\tau_{212}$) have score vector $(1,0, \dots,0)$, we then use a variety of induction steps to deduce the same result for all dichotomous order types. Finally, another induction characterizes the score vector for all order types.
\end{proof}

The proof depends on constructing many families of profiles that witness violations of one of the two axioms in the characterization. We found it helpful to use linear programs to obtain such profiles for particular non-approval scoring systems, and then generalize these examples. To do this, we guess (by iterating) the elimination order of the profile (for clones, the two profiles), and then build an LP that has a continuous variable for each possible weak order, indicating the fraction of the profile(s) made up by voters with that weak order. Since the scoring system and the elimination order are fixed, we can encode the behavior of the elimination scoring rule as linear constraints.

The axioms in the characterization are independent. An elimination scoring rule that respects cohesive majorities but fails independence of clones is the one that's like Approval-IRV except that $s(\tau_{21}) = (\frac12, 0)$ (i.e., it flips to Split-IRV when it gets down to $m=3$). An elimination scoring rule that satisfies independence of clones but fails to respect majorities is given at the very end of \Cref{sec:clones}. 
We leave for future work the question of whether there exists a natural rule satisfying both axioms (but not being an elimination scoring rule).

\section{Single-winner voting: Indifference monotonicity}
\label{sec:monotonicity}

Our second characterization of Approval-IRV has a somewhat different flavor: we are going to show that it is the unique monotonic way to extend IRV to weak orders. To make this claim precise, we need to be careful since IRV fails most monotonicity properties (which is in fact true for all elimination scoring rules \citep[Theorem 2]{smith1973aggregation}). However, we identify a natural notion of ``indifference monotonicity'' that is satisfied by Approval-IRV.

Suppose $c \in f(P)$ is a winner given some profile $P$. Monotonicity requires that if we change the profile $P$ to make $c$ look stronger, then $c$ should still be the winner. However, if we swap $c$ with an alternative $d$ above it (thereby making $c$ stronger), then under an elimination scoring rule we might now have caused an earlier elimination of $d$ which might lead to $c$ losing. In a sense, the reason for the failure is that not only did we make $c$ stronger, but we also made $d$ weaker. Our notion of indifference monotonicity considers a very restricted class of change that makes $c$ stronger without negatively affecting any other candidate. 

Formally, let's say that a \emph{$c$-hover} is the following transformation from one weak order to another:
\[
C_1 \succ \dots \succ C_j \succ \{c\} \succ C_{j+2} \succ \dots \succ C_k
\quad\longmapsto\quad
C_1 \succ \dots \succ C_j \cup \{c\} \succ C_{j+2} \succ \dots \succ C_k
\]
A $c$-hover starts from a weak order in which $c$ is in a singleton indifference class, and ends in the weak order where $c$ has joined the indifference class just above it (see \Cref{fig:indifference-monotonicity} in the introduction). Note that a $c$-hover cannot be applied to a weak order where $c$ is indifferent with another alternative. 
Our indifference monotonicity axiom applies only to changes corresponding to $c$-hovers.
\begin{definition}
	\label{def:indifference-monotonicity}
	A voting rule $f$ is \emph{indifference monotonic} if for every profile $P$ and every $c \in f(P)$, whenever $\hat P$ is obtained from $P$ by applying $c$-hovers to some votes in $P$, we have $c \in f(\hat P)$.
\end{definition}

\noindent
We now show that Approval-IRV satisfies indifference monotonicity.

\begin{restatable}{theorem}{AVIRVmono} \label{thm:AVIRVmono}
	Approval-IRV is indifference monotonic.
\end{restatable}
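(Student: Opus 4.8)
The plan is to track, over the course of the elimination process, the relationship between the run of Approval-IRV on $P$ and its run on $\hat P$, where $\hat P$ is obtained by applying $c$-hovers to some subset of voters. The key structural observation is that a $c$-hover only ever increases $c$'s approval score and never changes anyone else's approval score, \emph{as long as $c$ is still present}. Indeed, in a vote where the $c$-hover was applied, before the hover $c$ sits alone just below the top class $C_1$, so it contributes nothing to any approval score; after the hover $c$ joins $C_1$, so $c$ gains a point while the alternatives in $C_1$ keep their point. Once $c$ is eliminated, the modified vote and the original vote restrict identically to the remaining candidates (in both, the top surviving class is whatever $C_1$ has become, minus $c$). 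So the only asymmetry between the two runs is that $c$ is weakly harder to eliminate in $\hat P$.

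To make this precise I would argue by a coupling/induction on the number of eliminated candidates, comparing $f(P)$ with $f(\hat P)$ and using parallel-universe tie-breaking. The cleanest route is: show that for \emph{every} elimination sequence $e_1, e_2, \dots$ that Approval-IRV can follow on $\hat P$ with $c$ never eliminated, the same sequence (with the same rounds) is also a legal elimination sequence for Approval-IRV on $P$ — because at each round the set of surviving candidates is the same, the approval scores of all candidates other than $c$ are identical in the two profiles, and $c$'s score in $\hat P$ is at least its score in $P$, so if some $x \ne c$ is a lowest-scorer in $P$ it is still a lowest-scorer in $\hat P$ (here we use that $c$ is never the one eliminated, so we never need $c$ to \emph{be} a minimizer). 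Running this to the end: since $c \in f(P)$, there is by definition of parallel-universe tie-breaking a legal elimination sequence on $P$ that never eliminates $c$; I then need to lift it to a legal elimination sequence on $\hat P$ that never eliminates $c$ and ends with $c$ as the winner, which gives $c \in f(\hat P)$.

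The direction that requires care is precisely this lifting: from a $c$-avoiding run on $P$ I must produce a $c$-avoiding run on $\hat P$. The point is that at each round, if $x \ne c$ is eliminated in the $P$-run because $S_P(x) \le S_P(d)$ for all surviving $d$, then since $S_{\hat P}(d) = S_P(d)$ for $d \ne c$ and $S_{\hat P}(c) \ge S_P(c) \ge S_P(x) = S_{\hat P}(x)$ wait — that last inequality is the wrong way. Let me restate: I need $S_{\hat P}(x) \le S_{\hat P}(d)$ for all surviving $d$; for $d \ne c$ this is identical to the $P$-inequality, and for $d = c$ it follows because $S_{\hat P}(x) = S_P(x) \le S_P(c) \le S_{\hat P}(c)$. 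So $x$ is still a valid candidate to eliminate in $\hat P$, and we may legally follow the same choice. Inducting down to two candidates, the final survivor is $c$ (as in the $P$-run), so $c \in f(\hat P)$. The main obstacle, and the only subtle point, is getting the tie-breaking bookkeeping exactly right: one must invoke that $c \in f(P)$ \emph{does} guarantee a $c$-avoiding legal run on $P$ (this is what parallel-universe tie-breaking buys us, via the inductive definition of $f$ in the preliminaries), and then check that the monotone shift $S_{\hat P}(c) \ge S_P(c)$ is consistent at every round, in particular that applying several $c$-hovers simultaneously still only raises $c$'s score and leaves all other scores fixed until $c$ leaves — which is immediate from the per-vote analysis above.
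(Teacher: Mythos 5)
Your proposal is correct and follows essentially the same route as the paper's proof: fix a legal, $c$-avoiding elimination sequence on $P$ witnessing $c \in f(P)$, observe that at every round (while $c$ survives) the restricted top classes give identical scores to all candidates other than $c$ and a weakly higher score to $c$ in $\hat P$, and conclude the same sequence is legal on $\hat P$, so $c \in f(\hat P)$. The inequality you corrected mid-stream ($S_{\hat P}(x) = S_P(x) \le S_P(c) \le S_{\hat P}(c)$) is exactly the comparison the paper uses, so no substantive difference remains.
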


\begin{proof}
	Let $f$ be Approval-IRV, let $w \in f(P)$, and let $\hat P$ be obtained from $P$ by applying some $w$-hovers. Suppose that in $P$, Approval-IRV eliminates candidates in the order $c_1, \dots, c_{m-1}, w$. We will show that this is also a valid elimination order in $\hat P$, which implies that $w \in f(\hat P)$, as required.
	
	Suppose for a contradiction that this was not the case, and let round $1 \le t \le m-1$ be the first time when in $\hat P$, Approval-IRV cannot eliminate candidate $c_t$ (which can be eliminated in round $t$ under $P$). Let us compare the scores of candidates at this point in $P$ and $\hat P$ after the elimination of candidates $c_1, \dots, c_{t-1}$. 
	Note that, by definition of $w$-hover, every voter has the same current top indifference class under $P$ and $\hat P$, except that some voters may additionally have $w$ in their top indifference class under $\hat P$. (Some voters' top indifference class might be $\{w\}$ under both profiles, but under $\hat P$ it might be $\{w\}$ because $w$ was hovered into a bigger indifference class whose other members have by now been eliminated.)
	Thus, all candidates have the same scores under $P$ and $\hat P$, except that $w$ may have a higher score under $\hat P$. But under $P$, the score of $c_t$ was lowest (and thus weakly lower than the score of $w$), and so the same is true in $\hat P$, contradicting our assumption that $c_t$ could not be eliminated at this time by Approval-IRV.
\end{proof}

We can now state our second characterization of Approval-IRV. Here, we say that a voting rule $f$ is \emph{consistent with IRV} if for every profile $P$ of linear orders, $f(P)$ is the set of IRV winners. 

\begin{restatable}{theorem}{monocharacterization} \label{thm:monocharacterization}
	Approval-IRV is the unique elimination scoring rule that is consistent with IRV on profiles of linear orders and satisfies indifference monotonicity.
\end{restatable}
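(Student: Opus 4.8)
Approval-IRV is consistent with IRV on linear orders by definition (it uses plurality scores there) and is indifference monotonic by \Cref{thm:AVIRVmono}, so only uniqueness needs proof. Let $f$ be an elimination scoring rule with monotone scoring system $s$ that agrees with IRV on profiles of linear orders and is indifference monotonic; I want to show $f$ coincides, as a rule, with Approval-IRV. The first observation is a reduction: within any elimination round with $m'$ surviving candidates, every voter ranks exactly those $m'$ candidates, so only order types $\tau$ with $\sum_i \tau_i = m'$ can occur in that round. Hence it suffices to prove that for each $m' \ge 2$ there is a constant $c_{m'} > 0$ such that $s(\tau) = (c_{m'},0,\dots,0)$ for \emph{every} order type $\tau$ with $\sum_i \tau_i = m'$. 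Indeed, under this condition every candidate's $f$-score in every round equals $c_{m'}$ times its approval score, so $f$ and Approval-IRV eliminate exactly the same candidates at each step, with the same parallel-universe branching; the per-$m'$ constants are harmless since scaling all scores within a round uniformly changes nothing.

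\textbf{Base case and the shape of the induction.} The linear order types are pinned down by IRV-consistency: for each $m'$, if $s(\tau_{1^{m'}})$ were not of the form $(c_{m'},0,\dots,0)$ with $c_{m'}>0$, one can exhibit a small profile of $m'$ linear orders on which the induced elimination order — hence the winner set under parallel-universe tie-breaking — differs from IRV's (a positive value at a non-top position acts as a Borda-like correction that resolves a plurality tie which IRV must instead branch on; a zero top value makes everything tie). Fixing $m'$, I then argue by \emph{downward} induction on the number $k$ of indifference classes of $\tau$, the base $k = m'$ being the linear case. Assume every order type summing to $m'$ with more than $k$ classes is Approval-IRV-shaped, and let $\tau$ have $k < m'$ classes. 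Then $\tau$ has a class of size $\ge 2$; pick a candidate $w$ in such a class and let $\rho$ be the type with $k+1$ classes obtained by splitting $w$ off into its own singleton immediately below its class. A single $w$-hover turns a type-$\rho$ vote into a type-$\tau$ vote, and $\rho$ is covered by the inductive hypothesis (so it is Approval-IRV-shaped).

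\textbf{The engine, and the main obstacle.} The plan for the inductive step is to build a profile $P$ all of whose votes have Approval-IRV-shaped order types (so $f(P)=\text{Approval-IRV}(P)$, which is computable) and which is engineered so that (i) $w$ is a winner of $P$, and (ii) if $\hat P$ denotes the profile obtained by applying a single $w$-hover to the designated type-$\rho$ votes, then for \emph{any} value of $s(\tau)$ deviating from $(c_{m'},0,\dots,0)$ — top value too large, top value too small, or some non-top value positive — $w$ fails to be a winner of $\hat P$ under $f$. Indifference monotonicity forces $w \in f(\hat P)$, giving a contradiction and hence $s(\tau) = (c_{m'},0,\dots,0)$, which completes both inductions. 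Note that a $w$-hover also shifts the candidates below $w$ up by one position; these contribute $0$ before the hover (position $\ge 3$ of $\rho$, zero by the inductive hypothesis), and the inductive invariant is exactly what is needed to keep these side effects under control, just as in the proof of \Cref{thm:AVIRVmono}. The crux, and the step I expect to be hardest, is (ii): designing gadget profiles that make $w$ simultaneously \emph{robustly} winning in $P$ yet \emph{knife-edge} sensitive to each deviation of $s(\tau)$ in $\hat P$. In particular, since indifference monotonicity is one-directional (hovering $w$ up never costs $w$ the win), obtaining an \emph{upper} bound on $s(\tau)_1$ must exploit that the hover simultaneously rescales the top candidate of the hovered vote from $c_{m'}=s(\rho)_1$ to $s(\tau)_1$: overshooting then boosts a competitor by more than it boosts $w$ (when $w$ is hovered into a non-top class), while undershooting weakens $w$'s own supporting coalition. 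This likely requires several distinct gadgets and careful bookkeeping of ties, and — as the authors do for \Cref{thm:clonemajority} — searching for these profiles via linear programs (one continuous variable per weak order, with the fixed elimination order encoded as linear constraints) is the natural way to find them.
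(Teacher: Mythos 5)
Your plan is structurally the same as the paper's proof: reduce to showing that every order type has an approval-shaped scoring vector (up to a harmless per-size normalization), pin down the linear types via IRV-consistency, and for a general type $\tau$ split one candidate $w$ off into a singleton just below its class to get a type $\rho$ covered by the inductive hypothesis, then use a single $w$-hover together with indifference monotonicity to derive a contradiction from profiles in which $w$ wins before the hover but would lose after it whenever $s(\tau)$ deviates (top value too small, too large, or a positive second value). The mechanism you describe for the upper bound on $s(\tau)_1$ (the hover rescales the whole top class of the hovered votes, boosting a competitor) is exactly the one the paper exploits. However, there is a genuine gap: the counterexample profiles are never constructed, and they are the substance of the uniqueness argument. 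You explicitly defer "designing gadget profiles" to LP search, but the proof requires, for every order type $\tau$ on every number of candidates and for each of the three deviations $s(\tau)_1 < 1$, $s(\tau)_1 > 1$, $s(\tau)_2 > 0$, an explicit parameterized family of profiles (type-$\rho$ votes plus linear orders) together with a round-by-round verification that $w$ wins under Approval scoring before the hovers and that a specific rival wins after the hovers under the deviant score; the paper spends its entire appendix argument for this theorem on such constructions and their score bookkeeping. Asserting that suitable gadgets "likely" exist and can be found computationally does not establish that they exist for all $\tau$, so as written the proposal is a proof outline, not a proof.

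A second, fixable but real, flaw is in your reduction step. You claim that if all votes of $P$ have Approval-IRV-shaped order types then $f(P)=\text{Approval-IRV}(P)$, but this needs the shape condition in \emph{every} round: after eliminations the votes restrict to order types on fewer candidates, which your fixed-$m'$ downward induction on the number of indifference classes says nothing about. The same issue already arises in your base case for linear types, where after the first elimination one needs the smaller linear types to be plurality-shaped. The paper handles both points by an outer induction on the number of candidates $m$ (so that in all rounds after the first, every scoring vector is already known to be $(1,0,\dots,0)$), nested with an inner induction on the number of candidates not alone in their class; your argument needs the analogous outer induction added before the per-round computations in the gadgets can be carried out.
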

\begin{proof}[Proof Sketch]
	Consistency with IRV implies that linear orders must have score vector $(1,0, \dots,0)$. Observe that any order type $\tau$ can be obtained from a linear order by successively applying candidate-hovers, for example $\tau_{111111} \to \tau_{21111} \to \tau_{3111} \to \tau_{312}$. Using indifference monotonicity, this allows us to inductively deduce that every order type has score vector $(1,0, \dots,0)$ by constructing counterexample profiles ruling out all other score vectors.
	The full proof appears in \Cref{app:monocharacterization-proof}.
\end{proof}

The properties of this characterization result are logically independent. Split-IRV is an elimination scoring rule that is consistent with IRV but fails indifference monotonicity. An elimination scoring rule that satisfies indifference monotonicity (but is not consistent with IRV) can be constructed by extending the Baldwin rule instead of IRV. For this, take the elimination scoring rule which associates to each order type $\tau$ the Borda-style scoring vector $s(\tau) = (m-1, m-1-\tau(1), m-1-\tau(1)-\tau(2), \dots, m-1 - \sum_{i=1}^{k-1}\tau(i))$. One can use the same proof as in \Cref{thm:AVIRVmono} to show that this rule is indifference monotonic. 
Finally, there are rules that satisfy both properties of the characterization but that are not elimination scoring rules, such as the rule that returns the result of IRV on profiles of linear orders, and the whole set of candidates $C$ when the profile is not linear.

\section{Multi-winner voting}
\label{sec:multi-winner}

In a multi-winner election, our goal is to select a committee of $k$ candidates from $C$. The analog of IRV for multi-winner elections is called the Single Transferable Vote (STV), which is defined in a way that provides proportional representation (giving groups of voters an amount of representation in the committee that is proportional to their size). STV is used for political elections in Scotland, Ireland, and New Zealand, among other places.

STV can be extended to work for weak orders just like how we defined Approval-IRV. This gives rise to a rule we call Approval-STV. (We will not describe STV for linear orders separately, since it is just the restriction of Approval-STV to the case when everyone submits a linear order.)
Approval-STV works by assigning each voter a \emph{budget} of 1 monetary unit (often referred to as the \emph{weight} of the voter). Voters will spend their budgets on electing candidates that they rank highly, and because voters start out with equal budgets, this produces a proportional outcome. The cost (or \emph{quota}) of electing a candidate is $q = n/(k+1)$, where $n$ is the number of voters and $k$ is the number of seats.%
\footnote{The quota $q = n/(k+1)$ is known as the \emph{Droop quota} which we will use throughout. However, everything we say holds analogously for the Hare quota $q = n/k$, with strict inequalities appropriately replaced by weak inequalities. For the Droop quota, note that we require that supporters of a candidate have \emph{strictly more} than $q$ money, but we only charge the supporters \emph{exactly} $q$ money. This is more elegant and robust than taking $q = n/(k+1) + \epsilon$ or rounding the quota up.}
Repeatedly, Approval-STV looks for a candidate that appears in the top indifference class of voters who together have a remaining budget of strictly more than $q$. If such a candidate exists, one is selected and added to the committee $W$, and the voters placing it in their top indifference class pay $q$ for it (with this cost divided between them); then this winning candidate is removed from the profile. If no such candidate exists, some candidate is \emph{eliminated} and removed from the profile, and we go to the next iteration.
The method is described in pseudocode in \Cref{fig:approval-stv-code}.

\begin{figure}[ht]
	\centering
	\begin{minipage}{0.7\textwidth}
		\RestyleAlgo{plain}
		\newcommand{\setvars}[3]{%
			\makebox[7.5pt][l]{$#1$} 
			$\gets$ 
			\makebox[54pt][l]{#2} 
			(#3)}
		\newcommand{\remaining}{R}
		\newcommand{\supporters}{\mathsf{supp}}
		\begin{algorithm}[H]
			\SetAlgoNoEnd
			\setvars{b_i}{$1$ for all $i \in N$}{the budget of each voter} \\
			\setvars{\remaining}{$C$}{remaining candidates} \\
			\setvars{W}{$\emptyset$}{selected candidates} \\
			\While{$\remaining \neq \emptyset$}
			{   
				$\supporters(c) \gets \{i\in N : c \pref_i \remaining\}$ for all $c \in \remaining$ \\
				\eIf{there is $c \in \remaining$ with $\sum_{i\in \supporters(c)} b_i > q$}
				{
					select one such candidate $c$ \\
					reduce the budgets $(b_i)_{i \in \supporters(c)}$ by a total amount of $q$ \\
					$W \gets W \cup \{c\}$ \\
					$\remaining \gets \remaining \setminus \{c\}$ \\
				}
				{
					select some candidate $c \in R$ to eliminate \\
					$\remaining \gets \remaining \setminus \{c\}$
				}
			}
			\Return $W$
		\end{algorithm}
	\end{minipage}
	\vspace{-10pt}
	\caption{Approval-STV}
	\label{fig:approval-stv-code}
\end{figure}

Clearly this rule elects at most $k$ candidates, because after $k$ selections, $kq$ money has been spent. In the beginning, the total amount of money was $(k+1)q$, so only $q$ money is left, and therefore there can't be any additional candidates whose supporters have a remaining budget of \emph{strictly more} than $q$. The rule is in fact guaranteed to select at least $k$ (and therefore exactly $k$) candidates; this can be deduced from \Cref{thm:approval-stv-satisfies-gen-psc} below (with $S = N$, $T = C$, and $\ell = k$).

Our description of Approval-STV above and in \Cref{fig:approval-stv-code} left several details vague: it did not specify (1) how to select the next winning candidates should several candidates be eligible, (2) how to divide the cost $q$ of the candidate among its supporters, and (3) how to decide which candidate to eliminate. The reason we left these vague is that our main result (that Approval-STV satisfies an axiom providing proportional representation) holds no matter how these three issues are decided. However, it is worth noting how STV is commonly used, for example in Scotland: (1) we select the candidate whose supporters have the highest amount of remaining budget, (2) the cost is divided by multiplying the remaining budget of each supporter by a common factor (``Gregory method''), and (3) the candidate whose supporters have the lowest amount of remaining budget is eliminated. However, there are other possible choices, including mirroring the payment scheme of the Method of Equal Shares which has been proposed for approval-based multi-winner elections and for participatory budgeting \citep{peters2020laminar,peters2021pb}.%

\begin{remark}
\label{rem:no-majority-stopping} Note that Approval-IRV is the same thing as Approval-STV for $k=1$ with Hare quota, but not with Droop quota. For instance, In the example shown in \Cref{fig:approval-irv-example}, more than $q = n/2$ (the Droop quota for $k = 1$) voters rank $b$ on top. Thus, Approval-STV (with Droop quota) would not do any elimination and instead immediately select $b$. Approval-IRV performs eliminations and ends up selecting alternative $a$. For the reason we do it this way, see \Cref{fig:majority-alternative} and the surrounding discussion, and also note that Approval-STV fails independence of clones even for $k=1$ (since a majority alternative could stop being a majority alternative if cloned).
\qed
\end{remark}

One can also define \emph{Split-STV}. It is the rule that replaces weak orders by weighted linear orders breaking indifferences in all possible ways, and then applies linear-order STV \citep{warren1996}. Equivalently, one can adapt the algorithm in \Cref{fig:approval-stv-code} to say that if a voter supports $t$ candidates, then the voter is only willing to contribute $b_i/t$ (instead of $b_i$) to elect a supported candidate.

STV is a rule providing proportional representation. \citet[pp.~282--283]{dummett1984voting} formalized this using a property saying that groups of voters should have a representation in the winning committee proportional to the group's size. He argued via an example that STV satisfies his property. \citet[footnotes 1 and 2]{tideman1995single} termed this axiom \emph{proportionality for solid coalitions} (PSC) and gave a proof sketch that STV satisfies it. \citet{woodall1994properties,woodall1997monotonicity} refers to it as ``Droop proportionality''. \citet[Appendix C]{aziz2020expanding} give a formal proof that STV satisfies PSC.

\definecolor{T1}{HTML}{cfe8ef}
\definecolor{T2}{HTML}{cfe8ef}
\definecolor{T3}{HTML}{cfe8ef}
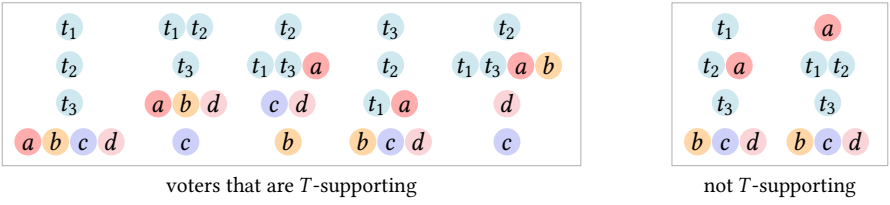
\begin{figure}[t]
	\centering
	\begin{tikzpicture}
		\node at (0,0) [draw=black!30] (step1) {	
			\weakorder{{t_1/T1},{t_2/T2},{t_3/T3},{a,b,c,d}}
			\weakorder{{t_1/T1,t_2/T2},{t_3/T3},{a,b,d},{c}}
			\weakorder{{t_2/T2},{t_1/T1,t_3/T3,a},{c,d},{b}}
			\weakorder{{t_3/T3},{t_2/T2},{t_1/T1,a},{b,c,d}}
			\weakorder{{t_2/T2},{t_1/T1,t_3/T3,a,b},{d},{c}}
		};
		\node[transform shape, scale=0.82] at (0,-1.35) {voters that are $T$-supporting};
		
		\node at (6.5,0) [draw=black!30] (step1) {	
			\weakorder{{t_1/T1},{t_2/T2, a},{t_3/T3},{b,c,d}}
			\weakorder{{a},{t_1/T1,t_2/T2},{t_3/T3},{b,c,d}}
		};
		\node[transform shape, scale=0.82] at (6.5,-1.35) {not $T$-supporting};
	\end{tikzpicture}
	\caption{An illustration of the definition of generalized PSC. For $T = \{t_1, t_2, t_3\}$, the 5 voters in the left box are $T$-supporting, since every alternative in $T$ is ranked weakly higher than every alternative not in $T$. The closure of $T$ with respect to the 5 voters is $\{t_1, t_2, t_3, a, b\}$, since some voters rank $a$ and/or $b$ on the same level as an alternative from $T$. The voters in the right box are not $T$-supporting because of where they rank $a$.}
	\label{fig:psc}
\end{figure}

To define the axiom, for a set $T \subseteq C$ of candidates, let us say that a weak order $\pref$ is \emph{$T$-supporting} if $T \pref C\setminus T$, i.e., if for every $t\in T$ and every $c \in C \setminus T$, we have $t \pref c$. A group $S \subseteq N$ of voters is $T$-supporting if every voter in $S$ is $T$-supporting. For linear orders, PSC is defined as follows.

\begin{definition}[Proportionality for solid coalitions]
	Let $P$ be a profile of linear orders. Let $q = n/(k+1)$. A committee $W \subseteq C$ of size $k$ satisfies \emph{proportionality for solid coalitions (PSC)} if for all $S \subseteq N$ with $|S| > \ell \cdot q$ and all $T\subseteq C$ with $|T| \ge \ell$ such that  $S$ is $T$-supporting, we have $|W \cap T| \ge \ell$.
\end{definition}

\citet{aziz2020expanding} propose a way to generalize PSC to weak orders. Consider a group $S$ that is $T$-supporting. Note that since we have weak orders, it is not necessary that every voter in $S$ \emph{strictly} prefers $T$ above all other alternatives, see \Cref{fig:psc}. Instead, being $T$-supporting also allows there to be one ``mixed'' indifference class that contains both alternatives from within and from outside $T$, provided that all indifference classes above the mixed one only contain $T$-alternatives, and all indifference classes below it do not contain any $T$-alternatives. 

For a $T$-supporting group $S$, let us define the \emph{closure} of $T$ with respect to $S$ as
\newcommand{\closure}{\mathsf{closure}}
\[
\closure_S(T) = \{ c \in C: \text{for some $i \in S$ and some $t \in T$, we have $c \pref_i t$} \}.
\]
That is, the closure of $T$ contains the alternatives in $T$ together with all alternatives that appear in a ``mixed'' indifference class, looking at the voters in $S$. The caption of \Cref{fig:psc} contains an example. Generalized PSC requires that sufficiently many alternatives are elected from the closure of $T$.%
\footnote{\citet[Section 4]{brillpeters2023} propose a stronger proportionality axiom than generalized PSC that also works for weak orders, called rank-PJR+. Their axiom is satisfied by the Expanding Approvals Rule of \citet{aziz2020expanding}, but it is failed by STV even for linear orders, and so it is also failed by Approval-STV.}
\begin{definition}[Generalized PSC, \textup{\citealp{aziz2020expanding}}]
	\label{def:gen-psc}
	Consider a profile of $n$ weak orders, and let $q = n/(k+1)$. A committee $W \subseteq C$ of size $k$ satisfies \emph{generalized PSC} if for all $S \subseteq N$ with $|S| > \ell \cdot q$ and all $T\subseteq C$ with $|T| \ge \ell$ such that  $S$ is $T$-supporting, we have $|W \cap \closure_S(T)| \ge \ell$.
\end{definition}

Generalized PSC for committee size $k = 1$ and for sets $T$ with $|T|=1$ is equivalent to respect for cohesive majorities (\Cref{def:majority}). Therefore, Split-STV fails generalized PSC even for $k = 1$, by the example shown in \Cref{fig:split-fails-majority}.%
\footnote{A much weaker way of defining PSC for weak orders is to define ``$T$-supporting'' as meaning $T \succ C\setminus T$, so that no voter is indifferent between alternatives in $T$ and outside $T$ (so only the two left-most examples in \Cref{fig:psc} count). In that case $\closure_S(T) = T$, so PSC just requires $|W \cap T| \ge \ell$. This definition is satisfied by both Approval-STV and Split-STV.}
On the other hand, Approval-STV satisfies the axiom.

\begin{theorem}
	\label{thm:approval-stv-satisfies-gen-psc}
	Approval-STV satisfies generalized PSC.
\end{theorem}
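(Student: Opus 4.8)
The plan is to follow a run of the procedure in \Cref{fig:approval-stv-code} and track the total remaining budget $b(S) := \sum_{i \in S} b_i$ of the ``cohesive'' coalition $S$, arguing that candidates from $\closure_S(T)$ keep getting elected until at least $\ell$ of them lie in $W$. Write $\bar T := \closure_S(T)$ and, for each $i \in S$, let $E_i := \{\, c \in C : c \pref_i t \text{ for some } t \in T \,\}$. Since $S$ is $T$-supporting we have $T \subseteq E_i \subseteq \bar T$ and $\bar T = \bigcup_{i \in S} E_i$, and since $|T| \ge \ell$ we get $|E_i| \ge \ell$ for every $i \in S$. Note that nothing below will depend on how ties among eligible candidates are broken, on how the cost $q$ is split among supporters (each supporter pays at most $q$ regardless), or on which candidate is chosen for elimination, so the proof covers all the variants left open after \Cref{fig:approval-stv-code}.

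The first step is the observation that, as long as some member of $E_i$ is still in the profile, voter $i$'s current top indifference class is contained in $E_i \subseteq \bar T$, so voter $i$ can only ever pay towards a $\bar T$-candidate; since $|E_i| \ge \ell$, this means that as long as every voter of $S$ still has a member of its $E_i$ in the profile, the entire coalition has only ever paid towards $\bar T$-candidates. As each elected candidate is charged exactly $q$, if $e$ candidates of $\bar T$ have been elected so far in this regime, then $b(S) \ge |S| - e q > (\ell - e)\,q$, using $|S| > \ell q$.

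The core of the argument is a potential function. Let $e$ be the number of $\bar T$-candidates elected so far, let $u$ be the number of $\bar T$-candidates still in the profile, and set $\Phi := e + u$. Electing a $\bar T$-candidate leaves $\Phi$ unchanged, eliminating one decreases $\Phi$ by exactly $1$, and removing any other candidate leaves $\Phi$ unchanged; initially $\Phi = |\bar T| \ge \ell$ and at the end $\Phi = e$. So it is enough to prove that $\Phi$ never drops below $\ell$, which gives $e \ge \ell$ at the end, i.e., generalized PSC. Suppose $\Phi$ is about to drop from $\ell$ to $\ell - 1$, so a $\bar T$-candidate is eliminated at a moment with $u = \ell - e$; we may assume $e < \ell$, hence $u \ge 1$. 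In the ``good'' case where every voter $i\in S$ still has a member of $E_i$ in the profile, the previous paragraph gives $b(S) > (\ell-e)\,q = u\,q$, while every voter of $S$ currently supports at least one of the $u$ remaining $\bar T$-candidates (its whole top class lies in $E_i\subseteq\bar T$); summing supporter-budgets over those $u$ candidates then exceeds $u\,q$, so one of them has supporter-budget more than $q$ and is eligible for selection --- contradicting that the procedure is eliminating rather than selecting at this step.

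The remaining ``bad'' case --- some voter $i_0 \in S$ already has \emph{all} of $E_{i_0}$ removed --- is where I expect the main difficulty to lie, because then $i_0$'s money may have leaked out of $\bar T$ and the clean budget bound above can fail. Note this case requires at least $|E_{i_0}| \ge \ell$ candidates of $\bar T$ to have been removed already. To handle it I would establish a boundary fact in the style of the proof that Approval-IRV respects cohesive majorities: whenever $b(S) > q$, the last surviving member of $T$ (and, more generally, the last surviving member of any $E_i$) must be \emph{elected}, not eliminated, since at that instant it is the unique current top choice of every voter of $S$ and so its supporter-budget is at least $b(S) > q$; equivalently, if such a candidate is ever eliminated, then at least $\ell$ candidates of $\bar T$ have already been elected by that time. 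Feeding this back should show that the ``bad'' case cannot arise before the potential has already reached $\ell$ (equivalently, before $e \ge \ell$, in which case the conclusion already holds), so the case analysis closes. The delicate part of the write-up is making the bookkeeping precise --- carefully relating the number of candidates of $\bar T$ \emph{removed}, the number \emph{eliminated}, the number elected, and the sizes $|E_i|$. Everything should go through verbatim for the Hare quota, with the strict budget inequalities replaced by weak ones.
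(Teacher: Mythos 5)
Your overall architecture (a monotone potential tracking elected-plus-remaining candidates, budget containment of the coalition's spending inside the closure, and a pigeonhole over remaining candidates at the dangerous elimination step) is the same as the paper's, and your ``good'' case is handled correctly. The genuine gap is the ``bad'' case, and it is not a minor bookkeeping issue: it is created by your choice of potential. You let $u$ count the remaining candidates of the \emph{closure} $\bar T$, so at the dangerous step no candidate of $T$ need be present, the containment ``voters of $S$ only ever paid inside $\bar T$'' can fail, and you must separately rule this out. Your proposed boundary fact does not do so: first, the parenthetical extension to ``the last surviving member of any $E_i$'' is wrong as stated (such a candidate is the unique remaining top choice of voter $i$ only, not of every voter of $S$, so its supporter budget need not be $b(S)$); second, and more importantly, the fact only constrains the removal of the \emph{last} $T$-candidate, whereas the bad case also hinges on what happened at earlier eliminations of $T$-candidates, when several of them were still present. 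To rule those out you need exactly the statement ``a $T$-candidate is never eliminated while (number elected from $\bar T$) $+$ (number of $T$-candidates remaining) equals $\ell$'', proved by the same budget-plus-pigeonhole argument over the remaining $T$-candidates --- i.e., the full inductive invariant, which your sketch defers as ``bookkeeping'' but which is the heart of the matter. (Indeed, ``all of $T$ removed $\Rightarrow$ at least $\ell$ elected from $\bar T$'', which is what your bad case needs, is precisely that invariant evaluated at $r=0$.)

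The clean fix is the paper's: define the potential as $s+r$ where $s=|W\cap \bar T|$ but $r$ counts the remaining candidates of $T$ itself, not of $\bar T$. Then the only dangerous transition is the elimination of a $T$-candidate, at which moment a $T$-candidate is by assumption still present; since $T\subseteq E_i$ for every $i\in S$, every voter's top class is still inside $\bar T$, the spending bound $b(S) > (\ell-s)q = rq$ holds, and partitioning $S$ according to which remaining $T$-candidate each voter currently tops (every voter of $S$ tops at least one, by $T$-supportingness) yields a remaining $T$-candidate with supporter budget exceeding $q$, so the algorithm would select rather than eliminate --- contradiction. With this single change your bad case never arises and the rest of your argument goes through; as written, however, the proposal has a real hole exactly where you predicted the difficulty would be.
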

\begin{proof}
	Let $W$ be the committee elected by Approval-STV.
	Let $S \subseteq N$ be a group of voters with $|S| > \ell q$ and let $T \subseteq C$ with $|T| \ge \ell$ be such that $S$ is $T$-supporting. Write $U = \closure_S(T)$.
	We have to prove that $|W \cap U| \ge \ell$.
	
	Throughout the execution of Approval-STV, let $s = |W \cap U|$ denote the current `satisfaction' of $S$ and let $r = |R|$ denote the current number of `remaining' (neither selected nor eliminated) candidates in $T$. We claim that the following invariant is true throughout the execution:
	\[
	s + r \ge \ell.
	\]
	Certainly this holds at the start because $r = |T| \ge \ell$. At the end of the execution, we have $r = 0$ (because at the end, all candidates have been either selected or eliminated), and hence the claim implies $s \ge \ell$, proving the theorem.
	To prove the claim, we will show that every `action' (selection or elimination of a candidate) of Approval-STV preserves the correctness of the claim.
	
	Whenever a candidate outside $T$ is selected, then $s$ may go up and $r$ will stay constant, so the claim stays correct. When a candidate in $T$ is selected, then $s$ goes up by 1 and $r$ goes down by 1, so $s + r$ stays constant, and the claim stays correct. If a candidate outside $T$ is eliminated, then $s + r$ stays constant, and the claim stays correct. If a candidate in $T$ is eliminated, but currently we have $s + r > \ell$, then $s$ stays constant and $r$ goes down by $1$, so the claim will hold true after elimination.
	
	So the only case to worry about is that a candidate from $T$ is eliminated at a time when $s + r = \ell$. We will prove that this cannot happen. For contradiction, suppose it does happen.
	Because a candidate in $T$ is about to be eliminated, at least one candidate in $T$ has always been present thus far. Therefore, by definition of $U$ as the closure of $T$, up to now, the top indifference class of every voter in $S$ has been a subset of $U$. Hence, thus far, voters in $S$ have only spent money on candidates in $U$. The current satisfaction is $s$. Therefore, the voters in $S$ can have paid at most $sq$ money until now. The voters in $S$ started out with more than $\ell q$ money. Hence, the voters in $S$ together have more than $(\ell - s)q = rq$ money left.
	
	We prove that at least one of the remaining $r$ candidates in $T$ has supporters who together have more than $q$ money left. It follows by definition of Approval-STV that in this step, a candidate is selected instead of eliminated, which means that the current case is impossible. Label the remaining candidates in $T$ as $c_1, \dots, c_r$. Suppose for a contradiction that for each of them, the voters who place that candidate in their top indifference class have at most $q$ money left. Write $S_1 \subseteq S$ for all the voters in $S$ who currently place $c_1$ top. Write $S_2$ for all voters in $S \setminus S_1$ who currently place $c_2$ top, and so on. Note that for each $j \in [r]$, the voters in $S_j$ have at most $q$ money left. However, because $T \pref_i C \setminus T$ for all $i \in S$, it is the case that every voter in $S$ places at least one of the remaining candidates in $T$ in their top indifference class. Hence $S = S_1 \cup \dots \cup S_r$. But the $S_j$ are pairwise disjoint, and so it follows that $S$ has at most $rq$ money left, a contradiction.
\end{proof}

This proof also shows that linear-order STV satisfies classic PSC, in an arguably clearer way than existing proofs in the literature.

\section{Experiments}
\label{sec:expe}
In this last section, we will experimentally compare the generalizations of IRV and STV to weak orders using synthetic and real data. %
We tested our rules on a variety of datasets, including synthetic data (sampled from various models such as impartial culture, mixture of Mallows and Euclidean), and real data (experiments on voting methods conducted during French presidential elections~\citep{bouveret_2018_1199545,delemazure_2024_10568799}, and the actual votes in elections that took place in Dublin, Ireland, from Preflib~\citep{MaWa13a}). Details on these datasets can be found in \Cref{app:experiments-datasets}.

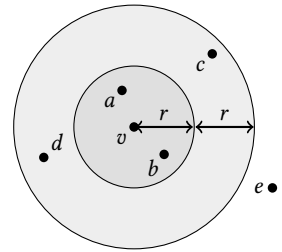
\begin{wrapfigure}[11]{r}{0.27\textwidth}
	\centering
	\begin{tikzpicture}[transform shape, scale=0.89]
		\def\r{0.9}

		\draw[fill=gray!13] (0,0) circle (2*\r cm);
		\draw[fill=gray!25] (0,0) circle (\r cm);

		\fill (0,0) circle (2pt) node[below left=-1pt] {$v$};

		\fill (-0.2*\r,0.6*\r) circle (2pt) node[below left=-1pt] {$a$};
		\fill (0.5*\r,-0.45*\r) circle (2pt) node[below left=-2pt] {$b$};

		\fill (1.3*\r,1.2*\r) circle (2pt) node[below left=-1pt] {$c$};
		\fill (-1.5*\r,-0.5*\r) circle (2pt) node[above right=-0.5pt] {$d$};

		\fill (2.3*\r,-1*\r) circle (2pt) node[left] {$e$};

		\draw[<->, thick] (0.02,0) -- node[midway, above=-1pt, sloped] {$r$} (\r-0.02,0);
		\draw[<->, thick] (\r+0.02,0) -- node[midway, above=-1pt, sloped] {$r$} (2*\r-0.01,0);
		
	\end{tikzpicture}
	\vspace{-5pt}
	\caption{A voter with preferences $\{a,b\} \succ \{c,d\} \succ \{e\}$.}
	\label{fig:radius-model}
	
\end{wrapfigure}
All datasets concern profiles of linear order, so we need a method to turn them into weak orders. 
We used two such methods. 
The ``coin-flip'' method with parameter $p\in [0,1]$ works as follows: in an order $\succ$, for each pair of consecutively ranked candidates $a$ and $b$, we add a tie between them (and thus put them in the same indifference class) with probability $p$. 
For example, for the linear order $a \succ b \succ c \succ d$ we throw 3 independent coins, one for each occurrence of the ``$\succ$'' symbol, and replace a strict preference by an indifference when the coin comes up heads (which happens with probability $p$). If the coins come up tails, heads, tails, the resulting weak order is $\{a\} \succ \{b, c\} \succ \{d\}$.
The ``radius'' method is specific to Euclidean models, in which voters $v$ and candidates $c$ are placed in random locations $p(v), p(c) \in \mathbb{R}^d$ in Euclidean space. The method is parameterized by a radius $r \ge 0$, which from the perspective of voter $v$ divides the candidates into sets $C_k = \{(k-1)r \le \|p(c)-p(v)\| <kr\}$. This produces the weak order $C_1 \succ C_2 \succ \dots$ for voter~$v$. \Cref{fig:radius-model} illustrates this model with an example.

\subsection{Evaluation of the winner} \label{sec:expe-eval}

\begin{figure}
	\centering
	\includegraphics[width=1\linewidth]{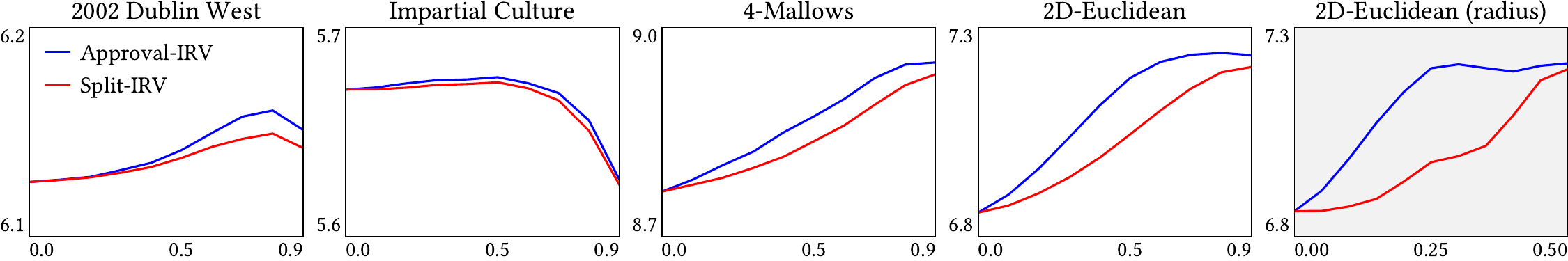}
	\vspace{-20pt}
	\caption{Average Borda score of the winner (normalized by dividing by $n$) for various datasets.}
	\label{fig:exp-borda-main}
\end{figure}

First, we compare the single-winner rules: Approval-IRV and Split-IRV. We randomly sample 10\,000 instances from each of our datasets (for real data, we sample voters at random with replacement). For the coin-flip method, we tested values of $p$ between $0$ (introducing no indifferences) and $0.9$; we exclude $p = 1$ since this leads to complete indifference. For the radius method, we tested values of $r$ between $0$ (no indifferences) and $0.5$. Each instance has $n=500$ voters and $m=10$ candidates, except for real data where we keep the original number of candidates $m$ (between $9$ and $14$).

In \Cref{fig:exp-borda-main}, we evaluate the winning candidate chosen by the two rules in terms of its Borda score, where the Borda score is computed with respect to the original profile of linear orders. While one might expect that introducing more indifferences (higher $p$ or $r$) would lead our rules to make lower-quality decision, in fact the opposite is the case. In almost all datasets, the Borda score of the candidate selected by linear-order IRV in the original profile ($p = 0$ or $r = 0$) is lower than the Borda score of the candidate selected by Approval-IRV and Split-IRV, on average over the instances. (The main exception is impartial culture.) We believe that the reason for this is that linear-order IRV depends mostly on plurality scores, whereas the weak orders allow the rules to identify candidates that are frequently ranked in high positions. \Cref{fig:exp-borda-main} also shows that the candidate selected by Approval-IRV has higher Borda score than Split-STV, consistently across datasets.

\begin{wrapfigure}{r}{0.46\textwidth}
	\includegraphics[width=\linewidth]{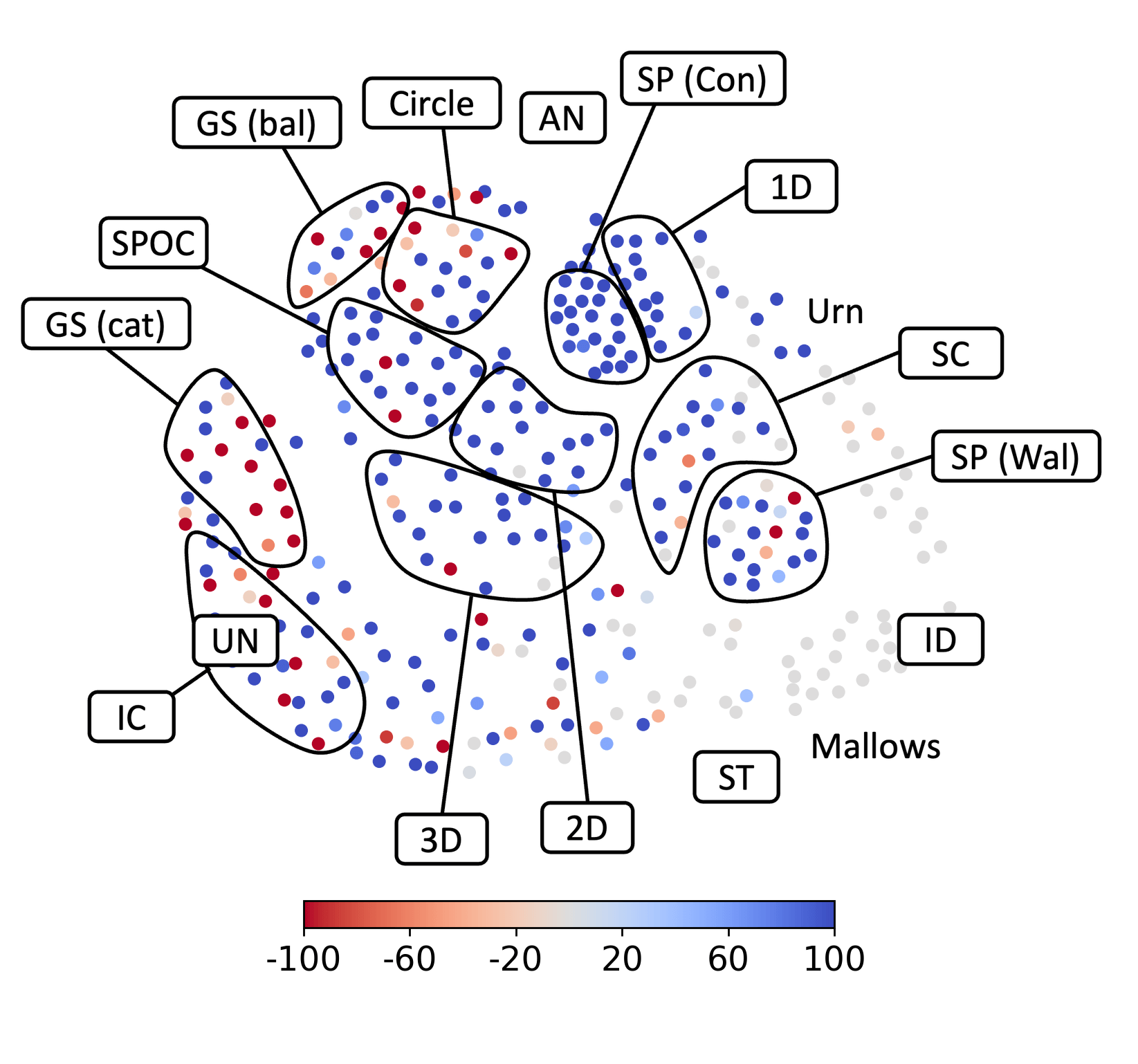}
	\vspace{-28pt}
	\caption{Map of elections, showing the difference in Borda score between the Approval-IRV and Split-IRV winner in the coin-flip model, with blue dots indicating that Approval-IRV selected a winner with higher Borda score.}
	\label{fig:exp-map-1}
\end{wrapfigure}
We checked that this observation is consistent over many random distributions over profiles, using the \emph{map of elections} framework \citep[Figure 1(a), 10$\times$50 isomorphic swap]{boehmer2022}. \Cref{fig:exp-map-1} shows a dot for each profile of the map, which is colored according to the difference in Borda score between the winner of Approval-IRV and Split-IRV, where we used the coin-flip method to transform the linear-order profile into a weak order one. For each dot, we averaged over the parameter $p = 0.1, 0.2, \dots, 0.9$ of the coin-flip method and over 50 random sampled weak order profiles for each $p$. In order to highlight the differences, we set the color scale to range from $-100$ to $100$, though some datasets have a difference higher than 2\,000. (In \Cref{app:map}, we show the same figure with a broader scale from $-800$ to $800$.) We observe that the Borda score of the Approval-IRV winner is generally higher than the one of the Split-IRV winner (blue dots), especially for structured preferences like single-peaked ones. For profiles close to those drawn from impartial culture, the difference is less pronounced (similar to what we see in \Cref{fig:exp-borda-main}, and some datasets show an advantage for Split-IRV. For profiles with little difference between voters (those close to identity) show almost no difference between the two rules, as there is often a clear winner that both rules will select.

\begin{wrapfigure}{r}{0.46\textwidth}
	\includegraphics[width=\linewidth]{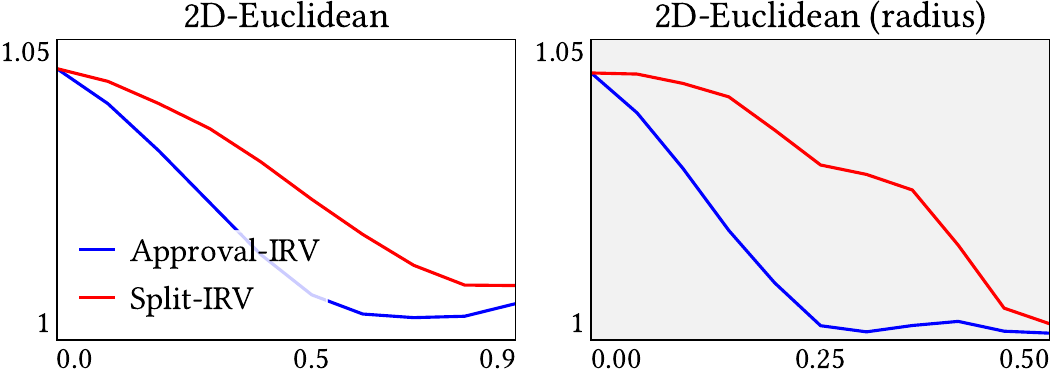}
	\vspace{-21pt}
	\caption{Average distortion of the winner.}
	\label{fig:exp-distortion-main}
\end{wrapfigure}
\Cref{fig:exp-distortion-main} confirms these findings with respect to the average \emph{distortion}, a quality measure of the selected candidate in terms of Euclidean distance instead of Borda scores (lower is better, see \Cref{app:experiments-results} for the definition). For large $p$ or $r$, Approval-IRV frequently selects the optimum candidate (distortion $=1$).
Finally, \Cref{fig:exp-similarity-main} shows the probability that the rules select the same candidate as linear-order IRV on the original profile. We see that Split-IRV is more ``similar'' to IRV than is Approval-IRV, consistently across datasets.
Additional discussion and more results (including about Condorcet winners) are shown in \Cref{app:experiments-results}.

\begin{figure}
	\centering
	\vspace{-5pt}
	\includegraphics[width=1\linewidth]{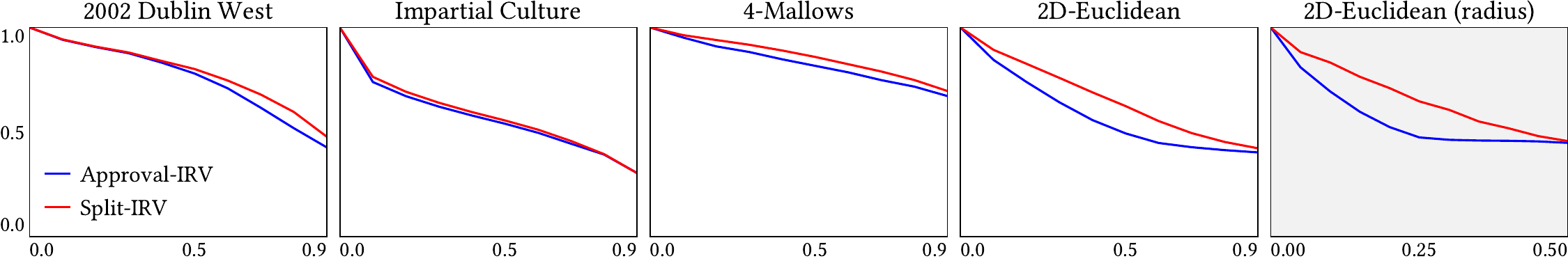}
	\vspace{-20pt}
	\caption{Frequency of agreement between the rule and linear-order IRV for various datasets.}
	\label{fig:exp-similarity-main}
\end{figure}

\subsection{Multi-winner} \label{sec:expe-multiwinner}

We now compare the multi-winner rules: Approval-STV and Split-STV. We also study the \emph{Expending Approvals Rules} (EAR) \citep{aziz2020expanding}. Approval-STV and EAR both satisfy the generalized PSC axiom for proportional representation, while Split-STV fails it. In our experiments, we wish to visually distinguish these rules, to allow us to judge their proportionality. We follow the experimental setup introduced by \citet{elkind2017multiwinner}, based on random 2D-Euclidean profiles. In our experiments, we used $n=100$ voters, $m=100$ candidates and $k=10$ seats.

\begin{figure}
	\footnotesize
	\setlength{\tabcolsep}{2pt}
	\begin{subfigure}{0.47\textwidth}
		\begin{tabular}{cccc}
			& Split-STV & Approval-STV & EAR \\
			\rot{\hspace{0.45cm}$r=0.1$} & \includegraphics[width=0.3\textwidth]{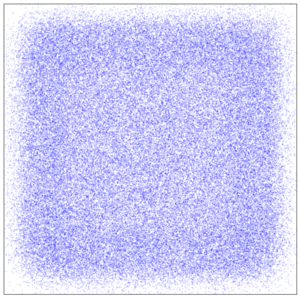}& \includegraphics[width=0.3\textwidth]{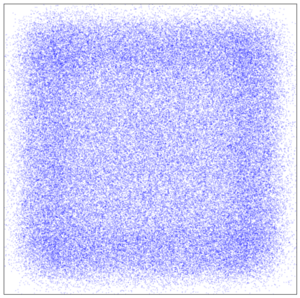}&\includegraphics[width=0.3\textwidth]{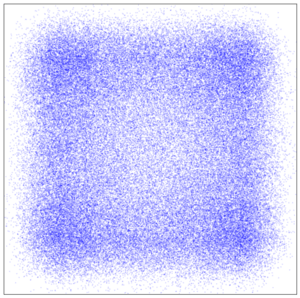} \\
			\rot{\hspace{0.45cm}$r=0.2$} & \includegraphics[width=0.3\textwidth]{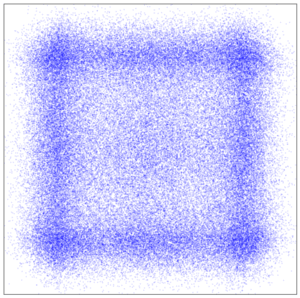}& \includegraphics[width=0.3\textwidth]{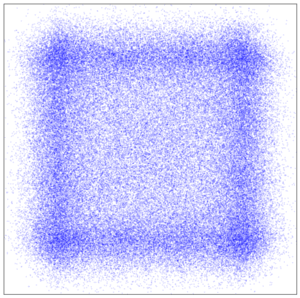}&\includegraphics[width=0.3\textwidth]{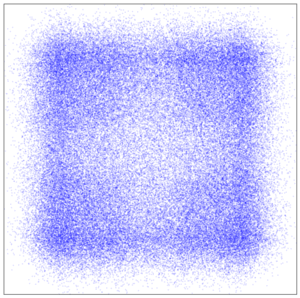} \\
			\rot{\hspace{0.45cm}$r=0.3$} & \includegraphics[width=0.3\textwidth]{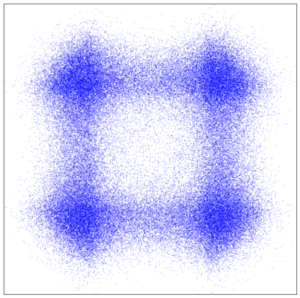}& \includegraphics[width=0.3\textwidth]{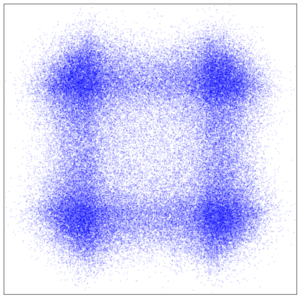}&\includegraphics[width=0.3\textwidth]{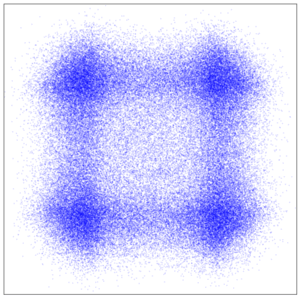} \\
		\end{tabular}
		\caption{Unit square}
	\end{subfigure}
	\quad
	\begin{subfigure}{0.47\textwidth}
		\begin{tabular}{c c c c}
			& Split-STV & Approval-STV & EAR \\
			\rot{\hspace{0.45cm}$r=0.1$} & \includegraphics[width=0.3\textwidth]{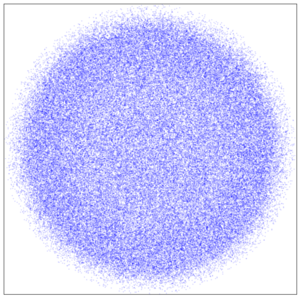}& \includegraphics[width=0.3\textwidth]{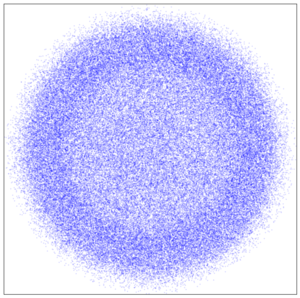}&\includegraphics[width=0.3\textwidth]{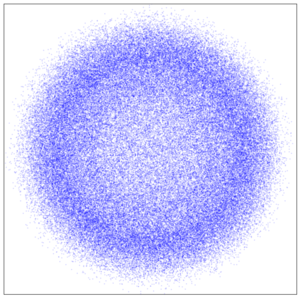} \\
			\rot{\hspace{0.45cm}$r=0.2$} & \includegraphics[width=0.3\textwidth]{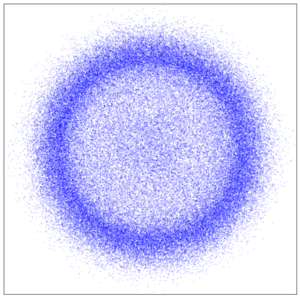}& \includegraphics[width=0.3\textwidth]{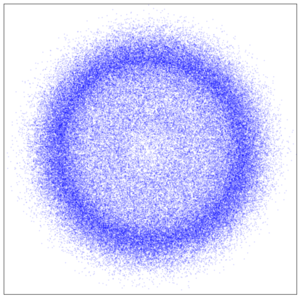}&\includegraphics[width=0.3\textwidth]{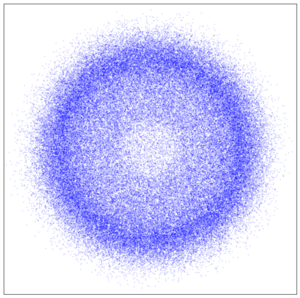} \\
			\rot{\hspace{0.45cm} $r=0.3$} & \includegraphics[width=0.3\textwidth]{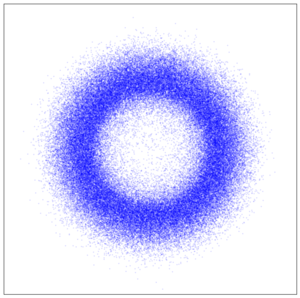}& \includegraphics[width=0.3\textwidth]{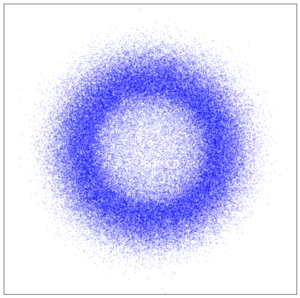}&\includegraphics[width=0.3\textwidth]{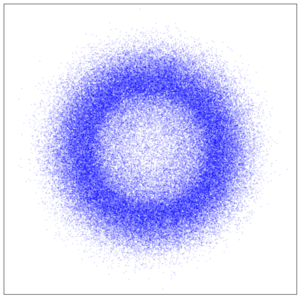} \\
		\end{tabular}
		\caption{Unit disc}
	\end{subfigure}	
	\caption{Comparing multi-winner rules on 2D-Euclidean profiles.}
	\label{tabfig:expe-multiwinner}
\end{figure}

We considered two models: sampling the voter and candidate positions from the unit square, or from the unit disc. For each model, we sample 10\,000 profiles. We convert them to weak order profiles using the radius method ($r = 0.1, 0.2, 0.3$), and compute the winning committees of the three rules. \Cref{tabfig:expe-multiwinner} shows the positions of the $10$ winning candidates, superimposing the results from all profiles. Intuitively, we expect a rule to be more proportional if the distribution of winners is closer to the distribution of voters (here, the uniform distribution over the unit square or disc). 

Our first observation is that the more indifference we introduce (i.e., a larger radius $r$), the further is the distribution of the positions of winning candidates from the distribution of voter positions. For $r=0.3$, winning candidates are concentrated roughly midway between the edges and the center of the distribution. A second observation is that for $r=0.1$, Split-STV appears most faithful to the voter positions, and EAR the least faithful; but when $r=0.3$, the order is reversed, as EAR is the most faithful and Split-STV the least faithful. In both cases, Approval-STV is between the two.

\section{Conclusion and Future Work}

We have studied generalizations of IRV and STV to weak orders and have given formal arguments why the ``Approval'' generalization behaves better than the ``Split'' generalization. However, there are additional avenues for comparing these two rules. For example, we have not looked into strategic aspects, and the two rules may well differ in how often voters can obtain a better outcome by misrepresenting their preferences. Another direction would be to study the utilitarian welfare provided by these rules, for example in the metric distortion model, where the worst-case performance of linear-order IRV has been studied \citep{anshelevich2018approximating,anagnostides2022dimensionality}, though the model would need to be generalized to weak orders.

Assuming there are no ties (see \Cref{footnote:put}), computing Approval-IRV can clearly be done in polynomial time, though just like for linear-order IRV the task cannot be efficiently parallelized \citep{mapreduce}, because counting needs to proceed in rounds. Since some jurisdictions (notably Australia) count IRV elections by hand using paper ballots, it makes sense to optimize the computation in terms of the number of times each ballot needs to be handled, which \citet{ayadi2019single} analyze using a query complexity model. It may be worth performing such an analysis for Approval-IRV. It seems that this rule can still be easily counted by hand, as follows: For each candidate, make a stack of ballot papers that vote for that candidate. If a ballot votes for $t \ge 2$ candidates, assign the ballot to any one of the $t$ stacks, and additionally take $t-1$ \emph{tokens} (e.g., specially colored pieces of paper) and add one to each of the other $t-1$ stacks. Determine the canddidate $c$ with the smallest stack and eliminate it, by throwing away all the tokens in the stack and reassigning the ballot papers in the stack. To reassign, if a ballot paper is indifferent between $c$ and some other non-eliminated candidate $d$, then add the ballot paper to the stack of $d$ in exchange for one of the tokens in that stack. Otherwise, we need to pass to the next-highest indifference class of the ballot, assign the ballot paper to one of the candidates in that indifference class, and add tokens to other stacks as appropriate. Note that Split-IRV does not admit a similar protocol, since we need to update the scores of all top-ranked candidates after each elimination.

As to whether it would be desirable to move from linear-order IRV to Approval-IRV, future work could address this question from several angles. User studies could explore whether voters understand how weak order ballots work, and whether they understand Approval-IRV. Generally, approval-based voting methods suffer from an intuitive (mis)impression that some voters get ``more votes'' than others by ranking several candidates first, and it would be interesting whether this objection can be answered in an intuitively persuasive way. Models of political ideology (such as strategic candidacy models) could be used to understand the impact on the political landscape of a move to Approval-IRV. Finally, it would be interesting to provide a theoretical foundation for our experimental finding that Approval-IRV selects higher-quality candidates (with respect to Borda scores or distortion) than linear-order IRV.

\subsection*{Acknowledgments}
We thank Jérôme Lang and Jannik Peters for useful feedback, and Chris Dong, Klaus Nehring, and Bill Zwicker for helpful discussions. This work was funded in part by the French government under management of Agence Nationale de la Recherche as part of the ``Investissements d’avenir'' program, reference ANR-19-P3IA-0001 (PRAIRIE 3IA Institute).

\bibliographystyle{ACM-Reference-Format}

\newpage
\appendix 

\addtocontents{toc}{\protect\setcounter{tocdepth}{1}}

\section{Delayed Proofs}
\label{app:proofs}

\subsection{Proof of \Cref{thm:AVIRVclones}}
\label{app:AVIRVclones-proof}

\AVIRVclones*

\begin{proof}
We show that Approval-IRV satisfies independence of clones. Write $f$ for Approval-IRV.

For a profile $P$, we will use the following shorthand notation: For $\ell \in C$, we write $P - \ell$ for the profile $P_{C \setminus \{\ell\}}$ with $\ell$ deleted. For a set $X \subseteq C$, we write $P - X$ for the profile $P|_{C\setminus X}$ with the alternatives in $X$ deleted. Similarly, for $x \in X$, we write $P - X + x$ for the profile $P|_{(C \setminus X) \cup \{x\}}$ with all alternatives in $X$ except for $x$ deleted.

The following lemma connects the scores of the alternatives in the profile $P$ and in the profile $P - X + x$, where $X$ is a clone set. It is the key to the proof working, and other rules like Split-IRV do not have the same property, explaining why the proof does not work for them.

\begin{lemma} \label{lem:clone-scores}
	Let $P$ be a profile defined on alternative set $C$ with clone set $X\subseteq C$. Let $x \in X$. Then
	\begin{itemize}
		\item every $c \in C \setminus X$ has the same approval score in $P$ and $P - X + x$, and
		\item the approval score of $x$ in $P -X +x$ is at least as high as the approval score of every clone alternative $x' \in X$ in $P$.
	\end{itemize}
\end{lemma}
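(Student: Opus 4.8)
The plan is to reduce the lemma to a statement about the top indifference class of each individual voter, since under the approval scoring system the score of an alternative is exactly the number of voters who have it in their top indifference class. For a profile $Q$ and a voter $i$, write $\top_i(Q)$ for that top class.

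The crux is a structural observation about how a clone set $X$ can sit inside one voter's weak order $\succeq_i$, say with indifference classes $C_1 \succ_i \dots \succ_i C_k$. Let $j^-$ (resp.\ $j^+$) be the least (resp.\ greatest) index $j$ with $C_j \cap X \neq \emptyset$. I claim that either $j^- = j^+$ --- the whole clone set sits inside a single indifference class of voter $i$, possibly together with some non-clones --- or else $j^- < j^+$, in which case every one of $C_{j^-}, \dots, C_{j^+}$ consists \emph{entirely} of clones. I would prove this by contradiction: if $j^- < j^+$ and some non-clone $c$ belonged to one of the classes $C_{j^-}, \dots, C_{j^+}$, then a clone in $C_{j^-}$ is ranked weakly above $c$ and a clone in $C_{j^+}$ is ranked weakly below $c$, with at least one of these two comparisons strict (since $c$ cannot lie in both $C_{j^-}$ and $C_{j^+}$). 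Hence $c$ is neither strictly above all of $X$, nor strictly below all of $X$, nor indifferent to all of $X$, contradicting the definition of a clone set. I expect this structural fact to be the only real obstacle; everything else is bookkeeping.

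With it in hand I would compute $\top_i(P - X + x)$ case by case. If $\top_i(P) \cap X = \emptyset$, then deleting the clones other than $x$ leaves $\top_i(P)$ intact and does not promote $x$ (which lies in a strictly lower class) into the top class, so $\top_i(P - X + x) = \top_i(P)$. If $\top_i(P) \cap X \neq \emptyset$, then $j^- = 1$, and the structural fact gives two possibilities: in the ``single class'' case $\top_i(P - X + x) = (\top_i(P) \setminus X) \cup \{x\}$; in the ``several full classes'' case $\top_i(P)$ consists only of clones and $x$ lies strictly above every surviving non-clone, so $\top_i(P - X + x) = \{x\}$. In all three cases one reads off: (a) for every non-clone $c$, $c \in \top_i(P - X + x)$ if and only if $c \in \top_i(P)$; and (b) whenever $\top_i(P)$ contains \emph{some} clone, $\top_i(P - X + x)$ contains $x$.

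Finally, summing (a) over all voters gives the first bullet, that every $c \in C \setminus X$ has the same approval score in $P$ and in $P - X + x$. For the second bullet, fix any clone $x' \in X$; by (b), every voter with $x' \in \top_i(P)$ also has $x \in \top_i(P - X + x)$, so the set of voters approving $x'$ in $P$ embeds into the set of voters approving $x$ in $P - X + x$, whence the approval score of $x$ in $P - X + x$ is at least that of $x'$ in $P$.
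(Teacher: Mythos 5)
Your proof is correct, and it lives at the same level as the paper's: both arguments reduce the approval score to per-voter membership in the top indifference class and then sum over voters. The difference is in how the clone-set definition is exploited. The paper argues directly and very briefly: for a non-clone $c$, the condition ``$c \pref_i d$ for all $d \in C$'' is equivalent to ``$c \pref_i d$ for all $d \in (C\setminus X)\cup\{x\}$'' because the clone condition makes $c$'s relation to $x$ representative of its relation to all of $X$; and if some clone $x'$ is top for voter $i$ in $P$, then $x \pref_i d$ for all non-clones $d$, so $x$ is top for $i$ in $P-X+x$. You instead first establish a stronger structural characterization (the clone set either sits inside a single indifference class or occupies a block of consecutive classes made up entirely of clones) and then read off the exact top class of each voter in $P-X+x$ by case analysis. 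Your route yields more information than is needed (the precise shape of $\top_i(P-X+x)$, not just the two membership facts (a) and (b)), at the cost of an extra lemma and a three-way case split; the paper's argument is more economical but proves only the two one-directional facts it needs. Both are sound, and your structural fact is itself a reusable observation, but for this lemma the direct verification from the definition suffices.
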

\begin{proof}
	
	For the first point, observe that for $c \in C \setminus X$, $c$ is ranked in the top indifference class of a voter $i$ in $P$ iff $c \pref_i d$ for all $d \in C$ iff (by definition of clone set) $c \pref_i d$ for all $d \in (C \setminus X) \cup \{x\}$ iff it is ranked in the top indifference class of $i$ in $P - X + x$.
	
	For the second point, fix $x \in X$ and let $x' \in X$. Then if $x'$ is ranked in the top indifference class of a voter $i$ in $P$, then $x' \pref_i d$ for all $d \in C$ and in particular for all $d \in C \setminus X$. Thus, by definition of clone set, we also have $x \pref_i d$ for all $d \in C \setminus X$, and hence $x$ is ranked in the top indifference class of $i$ in $P - X + x$. So the number of voters with $x'$ in their top indifference class in $P$ is weakly lower than the number with $x$ in their top indifference class in $P - X + x$.
\end{proof}

By induction on $m$, we prove the following statement:

\begin{quote}
	For every profile $P$ defined on a set $C$ of $m$ alternative including a non-empty clone set $X \subseteq C$, the following hold: [$H$ stands for hypothesis]
	\begin{enumerate}
		\item[\quad$H_1(P,X)$:] for all $c \in C \setminus X$, $c \in f(P)$ if and only if $c \in f(P - X + x)$ for all $x \in X$.
		\item[\quad$H_2(P,X)$:] we have $f(P) \cap X \neq \emptyset$ if and only if $x \in f(P - X + x)$ for all $x \in X$.
	\end{enumerate}
\end{quote}

Note that the statement is trivially true if $|X| = 1$ since then $P - X + x = P$. The statement is also obvious when $|X| = |C|$ since then $P - X + x$ is a profile in which only 1 alternative exists. Now, the base case $m = 2$ is easy to see, since then either $|X| = 1$ or $|X| = 2 = |C|$.

So let $m \ge 3$, assume we have shown the statement for $m - 1$, and let $P$ be a profile with $m$ alternatives $C$ including clone set $X \subseteq C$ with $2 \le |X| \le m - 1$.

Let us first note a simple fact that follows because Approval-IRV is a neutral rule (invariant under renaming alternatives). For every non-clone alternative $c \in C \setminus X$, we have
\[
c \in f(P - X + x) \text{ for some $x \in X$} \iff c \in f(P - X + x) \text{ for all $x \in X$},
\]
and we have that
\[
x \in f(P - X + x) \text{ for some $x \in X$} \iff x \in f(P - X + x) \text{ for all $x \in X$}.
\]
This just follows because for two clones $x, x' \in X$, by definition of clone sets, the profiles $P - X + x$ and $P - X + x'$ are identical up to the permutation that exchanges $x$ and $x'$. These equivalences mean that we can use the inductive hypotheses in the ``all $x$'' version but only need to prove them in the ``some $x$'' version.

We first prove $H_1(P, X)$. Let $c \in f(P	) \setminus X$ be a non-clone alternative that wins in $P$. We need to show that $c \in f(P - X + x)$ for some $x \in X$.
Note that by definition of elimination scoring rules, $c \in f(P)$ means that there is an alternative $\ell$ with lowest score in $P$ such that $c \in f(P - \ell)$.
\begin{itemize}
	\item Consider first the case that $\ell$ is not a clone alternative, $\ell \not \in X$. Take any $x \in X$. By \Cref{lem:clone-scores}, $\ell$ is also a lowest-scoring alternative in $P - X + x$. Thus by definition of elimination scoring rules, $f((P - X + x) - \ell) \subseteq f(P - X + x)$, and hence it suffices to show that $c \in f((P - X + x)- \ell) = f((P - \ell) - X + x)$. But this follows from $H_1(P-\ell, X)$ because $c \in f(P - \ell)$.
	\item Consider next the case that $\ell \in X$, and take any $x \in X \setminus \{\ell\}$, which exists since $|X| \ge 2$. Then applying $H_1(P - \ell, X \setminus \{\ell\})$ to $c \in f(P - \ell)$, we get that $c \in f((P - \ell) - (X \setminus \{\ell\}) + x) = f(P - X + x)$ where the last equality follows because the two profiles are the same since $\ell \in X$.
\end{itemize}

Conversely, suppose that $c \in f(P - X + x)$ for all $x \in X$. 
\begin{itemize}
	\item Suppose that there exists a clone alternative $x' \in X$ which is a lowest-scoring alternative in $P$. Noting that $|X| \ge 2$, choose any other clone $x \in X \setminus \{x'\}$ and note that $c \in f(P - X + x)$ by assumption. Since $x'$ is lowest-scoring in $P$, by definition of elimination scoring rules, $f(P - x') \subseteq f(P)$. By $H_1(P - x', X \setminus \{x'\})$, it follows from $c \in f(P - X + x) = f((P - x') - (X \setminus \{x'\}) + x)$ that $c \in f(P - x')$ and hence $c \in f(P)$.
	\item Otherwise, only non-clone alternatives are lowest-scoring in $P$. Then by \Cref{lem:clone-scores}, the same is true in $f(P - X + x)$. Since $c \in f(P - X + x)$ and $|X| \le m - 1$, there must be a lowest-scoring alternative $\ell \not \in X$ such that $c \in f((P - X + x) - \ell) = f((P - \ell) - X + x)$. By $H_1(P - \ell, X)$, it follows that $c \in f(P - \ell)$. Because $\ell$ must also be lowest-scoring in $P$ (due to \Cref{lem:clone-scores}), we have $f(P - \ell) \subseteq f(P)$, and hence $c \in f(P)$.
\end{itemize}

We next prove $H_2(P, X)$, using analogous reasoning. Suppose that $f(P) \cap X \neq \emptyset$. Let $x \in f(P) \cap X$ be a winning clone alternative. By definition of elimination scoring rules, $x \in f(P)$ means that there is an alternative $\ell$ with lowest score in $P$ such that $x\in f(P - \ell)$. 
We show that $x \in f(P - X + x)$.
\begin{itemize}
	\item Consider first the case that $\ell$ is not a clone alternative, $\ell \not \in X$. 
	By \Cref{lem:clone-scores}, $\ell$ is also a lowest-scoring alternative in $P - X + x$. Thus by definition of elimination scoring rules, $f((P - X + x) - \ell) \subseteq f(P - X + x)$, and hence it suffices to show that $x \in f((P - X + x)- \ell) = f((P - \ell) - X + x)$. But this follows from $H_2(P-\ell, X)$ because $x \in f(P - \ell)$.
	\item Consider next the case that $\ell \in X$. Clearly $\ell \neq x$ since $x \in f(P - \ell)$.
	By $H_2(P - \ell, X \setminus \{\ell\})$, since $x \in f(P - \ell)$, we get that $x \in f((P - \ell) - (X \setminus \{\ell\}) + x) = f(P - X + x)$ where the last equality follows because the two profiles are the same since $\ell \in X$.
\end{itemize}

Conversely, suppose that $x \in f(P - X + x)$ for all $x \in X$. We need to show that $f(P) \cap X \neq \emptyset$.
\begin{itemize}
	\item Suppose that there exists a clone alternative $x' \in X$ which is a lowest-scoring alternative in $P$. 
	Noting that $|X| \ge 2$, choose any other clone $x \in X \setminus \{x'\}$ and note that $x \in f(P - X + x)$ by assumption. 
	Since $x'$ is lowest-scoring in $P$, by definition of elimination scoring rules, $f(P - x') \subseteq f(P)$. By $H_2(P - x', X \setminus \{x'\})$, it follows from $x \in f(P - X + x) = f((P - x') - (X \setminus \{x'\}) + x)$ that $f(P - x') \cap (X \setminus \{x'\}) \neq \emptyset$ and hence also $f(P) \cap X \neq \emptyset$.
	\item Otherwise, only non-clone alternatives are lowest-scoring in $P$. Then by \Cref{lem:clone-scores}, the same is true in $f(P - X + x)$. Since $x \in f(P - X + x)$ and $|X| \le m - 1$, there must be a lowest-scoring alternative $\ell \not \in X$ such that $x \in f((P - X + x) - \ell) = f((P - \ell) - X + x)$. By $H_2(P - \ell, X)$, it follows that $f(P - \ell) \cap X \neq \emptyset$. Because $\ell$ must also be lowest-scoring in $P$ (due to \Cref{lem:clone-scores}), we have $f(P - \ell) \subseteq f(P)$, and hence also $f(P) \cap X \neq \emptyset$. \qedhere
\end{itemize}
\end{proof}

\subsection{Proof of \Cref{thm:clonemajority}}
\label{app:clonemajority-proof}

\clonemajority*
\begin{proof}
	
	Let $f$ be an elimination scoring rule satisfying independence of clones and respect for cohesive majorities. We want to show that for all order types $\tau$, their scoring vector is $s(\tau) = (1,0,\dots,0)$. We first prove it for linear order types $\tau = (1,1,\dots,1,1)$, then for all other order types. Because there is only one linear order type of length $m$, we denote $s(m) = s(\tau)$ for $|\tau| = m$. We know that $s(m)_1 > 0$ otherwise $s(m) = (0,0,\dots,0,0)$ and $a$ is not the only winner in the profile with one ranking where $a$ is first $\{a\} \succ \dots$, contradicting respect for cohesive majorities. Thus, we can assume without loss of generality that for all $m$, $s(m)_1 = 1$ (as there is only one linear order type for each number of candidates).  In \Cref{lem:linearclones}, we prove that $s(\tau) = (1,0,\dots,0)$ for all linear orders $\tau$.
	
	\begin{lemma} \label{lem:linearclones}
		Let $f$ be an elimination scoring rule satisfying independence of clones and respect for cohesive majorities. Then any linear order type $\tau$ is associated to the scoring vector $s(\tau) = (1,0,\dots,0)$.
	\end{lemma}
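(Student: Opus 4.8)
Monotonicity of the scoring system gives $1 = s(\tau)_1 \ge s(\tau)_2 \ge \dots \ge s(\tau)_{|\tau|} = 0$, so it suffices to show $s(\tau)_2 = 0$ for every linear order type $\tau$; writing $s(k)$ for the scoring vector of the (unique) linear type of length $k$ and recalling the normalization $s(k)_1 = 1$, the goal is $s(k)_2 = 0$ for all $2 \le k \le m$. The argument proceeds by induction on $k$ (with $k = 2$ trivial, since then $s(2) = (1,0)$), using independence of clones as the workhorse and respect for cohesive majorities to obtain a priori bounds; the inductive hypothesis is used to guarantee that as soon as a computation reaches a state with fewer than $k$ alternatives it coincides with IRV.

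Fix $k \ge 3$, assume $s(k') = (1,0,\dots,0)$ for all $k' < k$, and suppose for contradiction that $\sigma := s(k)_2 > 0$. I would first use respect for cohesive majorities to bound $\sigma$ away from $1$: a suitable profile contains a strict majority all ranking some $c$ in first position, and if the weight $\sigma$ were large enough then $c$ would be the unique lowest-scoring alternative in the first round and hence eliminated, a contradiction; so we may assume, say, $\sigma < \tfrac12$. The heart of the proof is a clone construction. Given the value $\sigma$, and the unknown but monotone weights $s(k)_3, \dots, s(k)_{k-1} \le \sigma$, one builds a profile $P$ of linear orders containing a two-element clone set $\{x_1,x_2\}$ and a distinguished ``victim'' $d$ such that: (i) in $P$, the first round is scored by $s(k)$ and, because $d$ is ranked second by a number $M = M(\sigma)$ of voters chosen large enough that $\sigma M$ dominates $d$'s deliberately small first-place count, $d$ survives the first round and a different alternative is eliminated; (ii) in the de-cloned profile $\hat P = P - x_2$, which has one alternative fewer and hence runs pure IRV from the first round by the inductive hypothesis, $d$ is eliminated first and the eventual winner is a non-clone $w$. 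All later rounds in both $P$ and $\hat P$ are IRV rounds (inductive hypothesis again), so one can trace through and show that this discrepancy in the first elimination forces $f(P)$ to violate independence of clones for $\{x_1,x_2\}$ --- either a non-clone other than $w$ wins in $P$, or $w$ stops winning, or a clone wins although $x$ lost in $\hat P$ --- which is the required contradiction. Hence $s(k)_2 = 0$ and $s(k) = (1,0,\dots,0)$, completing the induction.

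I expect the main difficulty to be twofold. First, making the cloning ``bite'': in a careless $P$ the lower clone $x_2$ simply ends up with the smallest score and is eliminated first, which collapses $P$ back to $\hat P$ and yields no violation; the construction must be shaped (scaling multiplicities like $1/\sigma$, and placing every alternative's lower ranks so that the unknown higher-position weights $s(k)_3, s(k)_4, \dots$ cannot ``rescue'' the alternative that IRV eliminates) so that a genuinely wrong first elimination occurs in $P$, uniformly in $(\sigma, s(k)_3, \dots)$. Second, short types: a profile with only a handful of alternatives is too weak to force the conclusion, since in a two-alternative final round the scoring vector plays no role, so in the small cases ($k = 3$ and $k = 4$) one cannot treat $k$ in isolation --- the profile witnessing a violation must carry additional alternatives (handled so that they are eliminated, as IRV would, before the decisive $s(k)$-round), and $k = 3$ and $k = 4$ effectively have to be argued together. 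Getting the bookkeeping of this embedding right, while keeping all the inequalities robust over the whole family of admissible scoring systems, is the crux.
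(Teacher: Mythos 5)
Your overall architecture for large $k$ matches the paper's: a profile on exactly $k$ alternatives with a clone pair, a first round scored by the unknown $s(k)$, all later rounds forced to be IRV by the inductive hypothesis, and a contradiction with independence of clones (with respect for cohesive majorities supplying a preliminary bound like $s(k)_2\le\tfrac12$). The genuine gap is at the bottom of the induction, exactly where you locate "the crux" but offer no mechanism. You insist on profiles of \emph{linear} orders throughout, and for $k=3$ this cannot work in isolation: on a $3$-candidate linear profile the majority criterion only forces $s(3)_2\le\tfrac12$ (a majority-top candidate with score $>n/2$ can only be eliminated first if the total score $n(1+s(3)_2)$ exceeds $3n/2$), and a short computation with a clone pair $\{x_1,x_2\}$ and a third candidate $d$ shows that a wrong first elimination of $d$ likewise requires $s(3)_2>\tfrac12$; so any $s(3)=(1,x,0)$ with $0<x\le\tfrac12$ is consistent with both axioms as long as only three candidates are in play. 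Your proposed fix --- embedding the decisive $3$-candidate round in a larger profile whose extra alternatives are "eliminated as IRV would" --- is not available to you: those earlier rounds are scored by $s(4),s(5),\dots$, which your upward induction has not yet constrained (your hypothesis only covers \emph{fewer} than $k$ alternatives), so you cannot control them, and you would in effect have to rule out every compatible assignment of the larger scoring vectors simultaneously. Whether IRV is even characterized by these two axioms among elimination scoring rules \emph{on linear-order profiles alone} is not something you establish, and the paper pointedly avoids relying on it.

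The paper closes the base case by essential use of weak orders: for $m=3$ it introduces the auxiliary unknown $y=s(\tau_{21})_1$ (the top score of the dichotomous type $\{b,c\}\succ\{a\}$), uses respect for cohesive majorities on profiles containing ballots such as $\{a,b\}\succ\{c\}$ and $\{b,c\}\succ\{a\}$ --- exploiting that the weak-order majority axiom counts voters who rank $a$ top \emph{possibly among others} --- to prove $y\le 1$, and then shows in both cases $y<1$ and $y=1$ that $s(\tau_{111})_2=0$. Only after that does the $m=4$ step (which, like yours, uses a clone pair and only linear ballots) go through, because it needs $s(3)=(1,0,0)$ in the post-elimination rounds; the induction to $m+1$ then proceeds as you describe. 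So your plan is recoverable only if you drop the linear-orders-only restriction at $k=3$ (and consequently at $k=4$) and import weak-order ballots together with the weak-order form of the majority axiom; as written, the small-$k$ step would fail.
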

	\begin{proof}
		We prove it by induction on the number of candidates $m$. For $m=2$ it is true because the only order type $\tau = (1,1)$ has scoring vector $(1,0)$. For $m=3$, we will use two order types: $\tau_{111} = (1,1,1)$ and $\tau_{21} = (2,1)$. Denote $x$ and $y$ such that $s(\tau_{111}) = (1,x,0)$ and $s(\tau_{21}) = (y,0)$ with $x \in [0,1]$ and $y \ge 0$. We want to prove that $x=0$.
		
		We will prove (1) $y \le 1$, (2) if $y < 1$ then $x=0$ and (3) if $y=1$ then $x=0$. This will prove that $x=0$. Note that in this lemma we will not determine
		the exact value of $y$.
		
		Let us prove that $y \le 1$. Assume for contradiction that $y > 1$. Let $q \in \mathbb N$ with $q > \frac{1}{y-1}$ and consider the profile $P$ with $q+1$ orders $\{a\} \succ \{b\} \succ \{c\}$ and $q$ orders $\{b,c\} \succ \{a\}$. By respect for cohesive majorities, $a$ should be the winner. But the scores are $S(a) = q+1$ and $S(b) \ge S(c) = qy$. Because $q > \frac{1}{y-1}$, $a$ is eliminated first, a contradiction.
		
		Let us prove that if $y < 1$, then $x=0$. Assume for contradiction that $y < 1$ and $x > 0$. Let $q \in \mathbb N$ with $q > \frac{1}{x}$ and $q > \frac{1}{1-y}$ and consider the profile $P$:
		\begin{align*}
			q &: \{c\} \succ \{b\} \succ \{a\}  \\
			q &: \{a,b\} \succ \{c\} \\
			1 &: \{a\} \succ \{c\} \succ \{b\}
		\end{align*}
		In this profile, $a$ is ranked top by more than half of the voters, so $f(P) \subseteq \{a,b\}$ by respect for cohesive majorities. However, the scores are $S(a) = qy + 1$, $S(b) = qy + qx$ and $S(c) = q+x$. Because $q > \frac{1}{x}$ we have $S(b) > S(a)$ and because $q > \frac{1}{1-y}$, $S(c) > S(a)$. Thus, $a$ is eliminated first. The scores are now $S(c) = q+1$ and $S(b) = q$, so $c$ wins, a contradiction. Thus, if $y < 1$, then $x=0$.
		
		Let us prove that if $y = 1$, then $x=0$ again. Assume for contradiction that $y = 1$ but $x > 0$. Let $q \in \mathbb N$ with $q > \frac{1}{x}$ and $q > 2$ and consider the profile $P$:
		\begin{align*}
			q+1 &: \{b,c\} \succ \{a\} \\
			q &: \{a\} \succ \{c\} \succ \{b\}  \\
			1 &: \{a,b\} \succ \{c\} \\
			1 &: \{a\} \succ \{b\} \succ \{c\}
		\end{align*}
		In this profile, $a$ is ranked top by more than half of the voters, so $f(P) \subseteq \{a,b\}$ by respect for cohesive majorities. However, the scores are $S(a) = q+2$, $S(b) = q+2 + x > S(a)$ and $S(c) = q+1 + qx$. Since $q > \frac{1}{x}$, we have $S(c) > S(a)$ and $a$ is eliminated first. Then the scores are $S(c) = q$ and $S(b) = 2$, so $c$ wins, which is a contradiction. 
		
		Therefore, we know that $x=0$ and $s(\tau_{111}) = (1,0,0)$. Let us now prove it for $m = 4$ (i.e. $\tau = (1,1,1,1)$). Assume for contradiction that $s(4)_2 > 0$. Let $q \in \mathbb N$ with $q > \frac{1}{s(4)_2}$ and consider the following profile:
		\begin{align*}
			q+1 &: \{b\} \succ \{c\} \succ \{a\} \succ \{a'\} \\
			q &: \{c\} \succ \{a\} \succ \{a'\} \succ \{b\} \\
			q &: \{a\} \succ \{a'\} \succ \{c\} \succ \{b\} \\
			q &: \{a'\} \succ \{a\} \succ \{c\} \succ \{b\}
		\end{align*}
		In this profile, the scores are $S(c) \ge q + (q+1)s(4)_2$, $S(b) = q+1$ and $S(a) \ge S(a') \ge q + qs(4)_2$. Because $q > \frac{1}{s(4)_2}$, we have $S(c) > S(b)$ and $S(a') > S(b)$, so $b$ is eliminated first. The scoring vectors are now $(1,0,0)$, so the scores are $S(c) = 2q+1$ and $S(a) = S(a') = q$ so one of $a$ or $a'$ is eliminated, and $c$ wins the majority vote in the final round. Note that in this profile $a$ and $a'$ are clones, so $c$ should also win in the profile in which we remove the clone $a'$ of $a$. However, in this profile, the scores are $S(b) = q+1$, $S(c) = q$ and $S(a) = 2q$, so $c$ is eliminated first, a contradiction. This proves that $s(4)_2 =0$ and thus $s(4) = (1,0,0,0)$.
		
		We know prove by induction that for all $m$, $s(m) = (1,0,\dots,0)$. Assume it is true for $m \ge 4$, let us prove it for $m+1$. Assume for contradiction that $s(m+1)_2 > 0$. Let $q \in \mathbb N$ with $q > \frac{1}{s(m+1)_2}$ and consider the profile $P$ on $C = \{a,a',b,c_1, \dots,c_{m-2}\}$:
		\begin{itemize}
			\item For each $i \in [1,m-2]$, $q$ linear orders $\{c_i\} \succ \{b\} \succ \{a\} \succ \{a'\} \succ \{c_{i+1}\}  \succ \dots \succ \{c_{i+m-3}\}$ where the subscripts of $c_i$ have to be taken modulo $m-2$ (for instance, the linear order starting with $c_2$ ends with $c_1$).
			\item $q-1$ linear orders $\{b\} \succ \{a\} \succ \{a'\} \succ \{c_1\} \succ \dots \succ \{c_{m-2}\}$.
			\item $q$ linear orders $\{a\} \succ \{a'\} \succ \{b\} \succ \{c_1\} \succ \dots \succ \{c_{m-2}\}$.
			\item $q$ linear orders $\{a'\} \succ \{a\} \succ \{b\} \succ \{c_1\} \succ \dots \succ \{c_{m-2}\}$.
		\end{itemize}
		Note that in this profile, each $c_i$ is ranked last in at least $q$ orders. The scores are $S(a) \ge S(a') = q + qs(m+1)_2 + qs(m+1)_3 + q(m-2)s(m+1)_4$, $S(b) = (q-1) + q(m-2)s(m+1)_2 + 2qs(m+1)_3$ and $S(c_i) \le q + (3q-1)s(m+1)_4 + q(m-4)s(m+1)_5$.
		
		We know that $s(m+1)_2 \ge s(m+1)_3 \ge s(m+1)_4 \ge s(m+1)_5$, so we can make the following observation for all candidates $c_i$:
		\begin{align*}
			S(b)
			&\ge(q-1) + q(m-2)s(m+1)_2 + 2qs(m+1)_3 \\
			&= q-1 + qs(m+1)_2+ q(m-3)s(m+1)_2 +2qs(m+1)_3 \\
			&\ge q-1 + qs(m+1)_2 + q(m-3)s(m+1)_3 +2qs(m+1)_3 \\
			&= q-1 +  qs(m+1)_2 + q(m-1)s(m+1)_3 \\
			&= q-1 +  qs(m+1)_2 + q(m-4)s(m+1)_3 + 3qs(m+1)_3 \\
			&\ge  q-1 +  qs(m+1)_2 + q(m-4)s(m+1)_5 + 3q s(m+1)_4\\
			&\ge  q-1 +  qs(m+1)_2 + q(m-4)s(m+1)_5 + (3q-1) s(m+1)_4 \\
			&> q + q(m-4)s(m+1)_5 + (3q-1)s(m+1)_4 \tag{since $q > \frac{1}{s(m+1)_2}$} \\
			&\ge S(c_i).
		\end{align*}
		
		Moreover, we have:
		
		\begin{align*}
			S(a) \ge S(a')
			&\ge q + q(m-2)s(m+1)_4 + (q-1)s(m+1)_3 + qs(m+1)_2 \\
			&= q + q(m-4)s(m+1)_4 + 2qs(m+1)_4 + (q-1)s(m+1)_3 + qs(m+1)_2 \\
			&\ge q + q(m-4)s(m+1)_4 + (3q-1)s(m+1)_4 + qs(m+1)_2 \\
			&> q + q(m-4)s(m+1)_5 + (3q-1)s(m+1)_4 \tag{since $s(m+1)_2 > 0$} \\
			&\ge S(c_i).
		\end{align*}
		
		Therefore, one of the $c_i$ is eliminated first. By induction hypothesis, all scoring vectors are now $(1,0,\dots,0)$. The scores are now $S(a) = S(a') = q$, $S(b) = 2q-1$ and $S(c_i) = q$ for all $c_i$ that are not eliminated. Moreover, the score of any candidate other than $b$ is upper bounded by $q$ as long as $b$ is not eliminated, except for $a$ (or $a'$), which obtains a score of $2q$ once its clone $a'$ (or $a$) is eliminated. Therefore, the $c_i$ are successively eliminated until $a$ (or $a'$) and $b$ remain. The scores are $S(a) = 2q$ and $S(b) = q-1 + q(m-2)$. Because $m \ge 4$, $S(b) > S(a)$ and $b$ wins.
		
		Note that in this profile, $a$ and $a'$ are clones. Therefore, by independence of clones, $b$ should also win in the profile in which we remove the clone $a'$ of $a$. However, in this profile, all scoring vectors are $(1,0,\dots,0)$ by our induction hypothesis. Thus, the scores are $S(a) = 2q$, $S(c_i) = q$ for all $c_i$ and $S(b) = q-1$. Thus, $b$ is eliminated first, a contradiction.

		This proves that $s(m+1)_2 =0$. Because we know that $s(m+1)_i \le s(m+1)_2$ for all $i \ge 3$, we have $s(m+1) = (1,0,\dots,0)$ and the induction hypothesis concludes the proof.
	\end{proof}

	We now prove the result for all other order types. For this, we proceed in several steps. We first show that some specific order types with length $|\tau| \le 3$ or defined for $m \le 5$ candidates have approval scores as their scoring vector. Then, we proceed by induction to prove the result for all other order types.

	In this proof, if a ballot is indifferent between all candidates, i.e., its order type is of length $|\tau| = 1$, we say that $s(\tau) = (0)$ and no candidates get points. Moreover, in every profile, when we use some $q\in \mathbb N$, we assume that $q$ is large enough (e.g., $q > 100$), such that what matters most is the coefficient of $q$ (for instance, we will consider that $2q > q + 5$).

	\textbf{Step 1.} We first focus on order types $\tau_{21}$ and $\tau_{31}$ and show that (1) $s(\tau_{21})_1 \le 1$, (2) $s(\tau_{21})_1 \in [0, \frac12] \cup \{1\}$, (3) $s(\tau_{21})_1 = s(\tau_{31})_1$ and  (4) $s(\tau_{21})_1 \ge \frac{5}{9}$. Combining these gives $s(\tau_{21}) = s(\tau_{31}) = (1,0)$.
	
	\textbf{Step 1.1.} We prove that $s(\tau_{21})_1 \le 1$ using respect for cohesive majorities. Assume for a contradiction that $s(\tau_{21})_1 > 1$.  Let $q \in \mathbb N$ be such that $q > \frac{1}{s(\tau_{21})_1-1}$ and consider the profile $P$ with $q$ orders $\{c,b\} \succ \{a\}$ and $q+1$ linear orders $\{a\} \succ \{c\} \succ \{b\}$. Here, $a$ is in the top indifference class of more than half voters, but the scores are $S(a) = q+1$ and $S(c) = S(b) = qs(\tau_{21})_1$. Since $q > \frac{1}{s(\tau_{21})_1-1}$, $S(c) = S(b) > S(a)$ and $a$ is eliminated first, which contradicts respect for cohesive majorities.

	\textbf{Step 1.2.} We prove that $s(\tau_{21})_1 \in [0, \frac12] \cup \{1\}$ using respect for cohesive majorities. Assume for a contradiction that $\frac12 < s(\tau_{21})_1 < 1$ and let $q \in \mathbb N$ be such that $q > \max(\frac{1}{1-s(\tau_{21})_1}, \frac{1}{2s(\tau_{21})_1-1})$. Consider the following profile $P$:
	\begin{align*}
		q+1 &: \{a\} \succ \{c\} \succ \{b\} \\
		3q  &: \{a,b\} \succ \{c\} \\
		2q  &: \{b,c\} \succ \{a\} \\
		2q  &: \{c\} \succ \{b\} \succ \{a\} 
	\end{align*}
	In this profile, $a$ is in the top indifference class of more than half of the voters. Respect for cohesive majorities imposes that $f(P) \subseteq \{a,b\}$. The scores are $S(a) = q+1 + 3qs(\tau_{21})_1$, $S(b) = 5qs(\tau_{21})_1$ and $S(c) = 2q+ 2qs(\tau_{21})_1$. Since $q > \frac{1}{1-s(\tau_{21})_1}$, we have $S(a) < S(c)$ and since $q > \frac{1}{2s(\tau_{21})_1-1}$ we have $S(a) < S(b)$. Thus, $a$ is eliminated first and $c$ wins the majority vote against $b$, which contradicts respect for cohesive majorities.
	
	\textbf{Step 1.3.} We prove that  $s(\tau_{21})_1 = s(\tau_{31})_1$ using independence of clones. First assume that $s(\tau_{21})_1 < s(\tau_{31})_1$. Let $q,q' \in \mathbb N$ such that $qs(\tau_{21})_1 <q' <qs(\tau_{31})_1$. Consider the following profile $P$:
	\begin{align*}
		q    &: \{c,c',a\} \succ \{b\} \\
		2q'  &: \{b\} \succ \{a\} \succ \{c'\} \succ \{c\} \\
		q'+2 &: \{c\} \succ \{c'\} \succ \{b\} \succ \{a\} \\
		q'+1 &: \{c'\} \succ \{c\} \succ \{b\} \succ \{a\} \\
		q'   &: \{a\} \succ \{b\} \succ \{c\} \succ \{c'\}
	\end{align*}
	In this profile, the scores are $S(c) > S(c') > S(a) = q'+qs(\tau_{31})_1 > 2q'$ and $S(b) = 2q'$. Thus, $b$ is eliminated first. The new scores are $S(c) > S(c') = q'+1$ and $S(a) = 3q'$. Thus, $c'$ and $c$ are eliminated, and $a$ wins. Now, observe that $c$ and $c'$ are clones in $P$. Therefore, independence of clones imposes that $a$ should also be winning in the profile $P'$ in which we remove $c'$. In $P'$, the scores are $S(c) > S(a) = qs(\tau_{21})_1 + q' < 2q'$ and $S(b) = 2q'$. Therefore, $a$ is eliminated first, which contradicts independence of clones.  
	
	We now assume that $s(\tau_{21})_1 > s(\tau_{31})_1$. Let $q,q' \in \mathbb N$ such that $qs(\tau_{21})_1 >q' > qs(\tau_{31})_1$. Using the same profiles $P$ and $P'$ as above, we have that $a$ is eliminated first in $P$, but in $P'$, $b$ is eliminated first, and $a$ is the winner. This contradicts again independence of clones. We conclude that $s(\tau_{21})_1 = s(\tau_{31})_1$.
	
	\textbf{Step 1.4.} We prove that $s(\tau_{21})_1 \ge \frac{5}{9}$. Assume for a contradiction that $s(\tau_{21})_1 < \frac{5}{9}$ and consider the following profile $P$.
	\begin{align*}
		4  &: \{a,b\}\succ \{d\} \succ\{c\} \\
		5  &: \{a,c\} \succ \{d\} \succ \{b\} \\
		13 &: \{a,b,c\} \succ \{d\} \\
		5  &: \{b,d\} \succ \{c\} \succ \{a\} \\
		4  &: \{c,d\} \succ \{b\}\succ \{a\} \\
		1  &: \{b\} \succ \{d\} \succ \{c\} \succ \{a\} \\
		1  &: \{c\} \succ \{d\} \succ \{b\} \succ \{a\} \\
		10 &: \{d\} \succ \{c\} \succ \{b\} \succ \{a\}
	\end{align*}
	In this profile, $a$ is on the top indifference class of more than half of the voters, therefore by respect for cohesive majorities we have $f(P) \subseteq \{a,b,c\}$. The scores are $S(a) = 9s(\tau_{211})_1 + 13s(\tau_{31})_1$, $S(c) \ge S(b) = 9s(\tau_{211})_1 + 13s(\tau_{31})_1 + 1 > S(a)$ and $S(d) \ge 10 + 9s(\tau_{211})_1$. Note that we are only lower bounding as we do not know if $(\tau_{211})_2 = 0$. Since $s(\tau_{31})_1 = s(\tau_{21})_1 < \frac{5}{9} < \frac{10}{13}$, we have that $S(a) < S(d)$. Therefore, $a$ is eliminated first. The new scores are $S(b) = 18s(\tau_{21})_1 + 5$, $S(c) =  17s(\tau_{21})_1 + 6$ and $S(d) = 10 + 9s(\tau_{21})_1$, since $s(\tau_{21})_1 < \frac{5}{9}$, we have $S(b) < S(d)$, and $S(b) < S(c)$. So $b$ is eliminated. Finally, $d$ wins the majority vote against $c$. This contradicts respect for cohesive majorities.
	
	Thus, since $s(\tau_{21})_1 \in [0, \frac12] \cup \{1\}$ and $s(\tau_{21})_1 \ge \frac{5}{9} > \frac{1}{2}$, we can conclude that $s(\tau_{21}) = (1,0)$. Moreover, $s(\tau_{31}) = (1,0)$.

	\textbf{Step 2.} We now prove that $s(\tau_{211})_1 = 1$.
	
	\textbf{Step 2.1.} Let us prove that $s(\tau_{211})_1 \le 1$. Assume that $s(\tau_{211})_1 > 1$ and take $q \in \mathbb N$ such that $q > \frac{1}{s(\tau_{211})_1 - 1}$. Consider the following profile $P$:
	\begin{align*}
		q+4  &: \{a\} \succ \{d\}\succ \{c\} \succ \{b\} \\
		3q   &: \{a,b,c\} \succ \{d\}\\
		2q+1 &: \{b,d\} \succ \{c\}\succ\{a\} \\
		2q-1 &: \{c,d\} \succ \{b\}\succ\{a\}
	\end{align*}
	In this profile, $a$ is in the top indifference class of more than half of the votes, so by respect for cohesive majorities we should have $f(P) \subseteq \{a,b,c\}$. The scores are $S(a) = 4q+4$, $S(b) > S(c) \ge 3q + (2q-1)s(\tau_{211})_1 > 5q -1 >  S(a)$ and $S(d) = 4qs(\tau_{211})_1$. Note that we are only lower bounding $S(c)$ and $S(b)$ as we do not know if $s(\tau_{211})_2 = 0$. Since $q > \frac{1}{s(\tau_{211})_1 - 1}$, $S(d) > S(a)$. Thus, $a$ is eliminated first. The new scores are $S(b) > S(c) =  5q-1$ and $S(d) =5q+4$. Thus, $c$ is eliminated next, and $d$ wins the majority vote against $b$, which contradicts respect for cohesive majorities.
	
	\textbf{Step 2.2.} Let us prove that $s(\tau_{211})_1 \ge 1$ with independence of clones. Assume that $s(\tau_{211})_1 < 1$. Take $q \in \mathbb N$ such that $q > \frac{1}{1 - s(\tau_{211})_1}$ and consider the profile $P$:
	\begin{align*}
		q   &: \{a,b\} \succ \{c\} \succ \{c'\} \\
		3   &: \{a\} \succ \{c\} \succ \{c'\} \succ \{b\} \\
		4   &: \{b\} \succ \{c\} \succ \{c'\} \succ \{a\} \\
		q+2 &: \{c,c',a\} \succ \{b\} \\
		q+2 &: \{c,c',b\} \succ \{a\}
	\end{align*}
	In this profile, $S(b) > S(a) = qs(\tau_{211})_1 + q + 5$ and $S(c) \ge S(c') \ge 2q+4$. Since $q > \frac{1}{1 - s(\tau_{211})_1}$, we have $S(c) > S(a)$. Thus, $a$ is eliminated first. The new scores are $S(b) = q+4$, $S(c) =  q+5$ and $S(c') = q+2$ so $c'$ is eliminated next, then $c$ wins the majority vote against $b$. Now observe that $c$ and $c'$ are clones in $P$, so by independence of clones, $c$ should also win in the profile $P'$ without $c'$. However, in $P'$, the scores are $S(b) > S(a) = 2q+5$ and $S(c) = 2q+4$, so $c$ is eliminated first. This contradicts independence of clones, and proves that $s(\tau_{211})_1 = 1$.
	
	\textbf{Step 3.} We now prove that $s(\tau_{12}) = s(\tau_{22}) = s(\tau_{13}) = (1,0)$.
	
	\textbf{Step 3.1.} We first show that $s(\tau_{12})_1 = 1$. First, we show that $s(\tau_{12})_1 \ge 1$ using respect for cohesive majorities. Take $q \in \mathbb N$ such that $q > \frac{1}{1-s(\tau_{12})_1}$ and consider the profile $P$ with $q$ orders $\{a\} \succ \{b,c\}$ and $q-1$ orders $\{b,c\} \succ \{a\}$. In this profile, $a$ is ranked first in more than half of the votes, but the scores are $S(a) = qs(\tau_{12})_1$ and $S(b) = S(c) = q$. By hypothesis on $q$, $S(b) = S(c) > S(a)$ and $a$ is eliminated first, a contradiction.

	Assume now that $s(\tau_{12})_1 > 1$. Take $q \in \mathbb N$ such that $ q > \frac{1}{s(\tau_{12})_1-1}$ and consider the following profile $P$:
	\begin{align*}
		q+2 &: \{a,b\} \succ \{c\} \\
		q   &: \{c\} \succ \{a,b\} \\
		4   &: \{b,c\} \succ \{a\} \\
		3   &: \{a\} \succ \{c\} \succ \{b\}
	\end{align*}
	In this profile, $a$ is in the top indifference class of more than half of the votes, thus by respect for cohesive majorities $f(P) \subseteq \{a,b\}$. The scores are $S(a) = q+5$, $S(b) = q+6$ and $S(c) = 4 + qs(\tau_{12})_1$. Since $q > \frac{1}{s(\tau_{12})_1-1}$, $S(c) > S(a)$, thus $a$ is eliminated first, and $c$ wins the majority vote against $b$. This contradicts respect for cohesive majorities. Therefore, we necessarily have $s(\tau_{12}) = (1,0)$.
	
	\textbf{Step 3.2.} We now show that $s(\tau_{22})_1 = 1$. Assume that $s(\tau_{22})_1 < 1$. Take $q \in \mathbb N$ such that $q > \frac{1}{1-s(\tau_{22})_1}$ and consider the following profile $P$:
	\begin{align*}
		q+2 &: \{b,b'\} \succ \{c\} \succ \{a\} \\
		q   &: \{a,c\} \succ \{b,b'\} \\
		3   &: \{a\} \succ \{b\} \succ \{b'\} \succ \{c\} \\
		4   &: \{c\} \succ \{b\} \succ \{b'\} \succ \{a\}
	\end{align*}	
	In this profile, $S(c) > S(a) = qs(\tau_{22})_1+3$ and $S(b) = S(b') = q+2$. Since $q > \frac{1}{1-s(\tau_{22})_1}$, $S(b) > S(a)$ and $a$ is eliminated first, then $b'$ is eliminated, and $b$ wins the majority vote against $c$. Since $b$ and $b'$ are clones, this means that in the profile $P'$ in which we remove $b'$, $b$ should still win. If we remove $b'$, the scores are $S(c) > S(a) = q+3$ and $S(b) = q+2$, so $b$ is eliminated first. This contradicts independence of clones.
	
	Assume now that $s(\tau_{22})_1 > 1$. Take $q \in \mathbb N$ such that $q > \frac{1}{s(\tau_{22})_1-1}$ and consider the following profile $P$:
	\begin{align*}
		q   &: \{b,b'\} \succ \{a,c\} \\
		q-2 &: \{a,c\} \succ \{b\} \succ \{b'\} \\
		3   &: \{a\} \succ \{b\} \succ \{b'\} \succ \{c\} \\
		4   &: \{c\} \succ \{b\} \succ \{b'\} \succ \{a\}
	\end{align*}
	In this profile, $S(c)  > S(a) = q+1$ and $S(b) = S(b') \ge qs(\tau_{22})_1$. Since $q > \frac{1}{s(\tau_{22})_1-1}$, $S(b) > S(a)$ and $a$ is eliminated, then $b'$ is eliminated and $b$  wins the majority vote against $c$. Since $b$ and $b'$ are clones, by independence of clones, $b$ should also win in the profile $P'$ in which we remove $b'$. If we remove $b'$, the scores are $S(c) > S(a) = q+1$ and $S(b) = q$, so $b$ is eliminated first. This contradicts independence of clones. Thus, we necessarily have $s(\tau_{22}) = (1,0)$.

	\textbf{Step 3.3.} We prove that $s(\tau_{13}) = (1,0)$. Assume first that $s(\tau_{13})_1 < 1$. Let $q > \frac{1}{1-s(\tau_{13})_1}$ and $P$ be the profile with $q$ orders $\{a\} \succ \{b,c,d\}$ and $q-1$ orders $\{b,c,d\} \succ \{a\}$. By respect for cohesive majorities, $a$ should be the winner, but in this case $a$ is eliminated first (as $s(\tau_{31}) = (1,0)$). This proves that $s(\tau_{13})_1 \ge 1$. Now assume that  $s(\tau_{13})_1 > 1$. Let $q \in \mathbb N$ be such that $q > \frac{1}{s(\tau_{13})_1 - 1}$ and consider the following profile $P$:
	\begin{align*}
		4   &: \{a\} \succ \{d\}\succ \{b\} \succ \{c\} \\
		1   &: \{a,b\}\succ \{c,d\} \\
		q+2 &: \{a,b,c\}\succ \{d\}  \\
		6   &: \{b,c,d\} \succ \{a\} \\
		q   &: \{d\} \succ \{b,c,d\}
	\end{align*}
	In this profile, $a$ is in the top indifference class of more than half of the votes, so respect for cohesive majorities implies that $f(P) \subseteq \{a,b,c\}$. The scores are $S(a) = q+7$, $S(b) = q+9$, $S(c) = q+8$ and $S(d) = qs(\tau_{13})_1+6$. Because of hypothesis on $q$, $S(d) > S(a)$ so $a$ is eliminated first. The scores are now $S(d) = q+4$, $S(b) = q+3$ and $S(c) = q+2$. $c$ is eliminated and $d$ wins the majority vote against $b$. This contradicts respect for cohesive majorities. Therefore, $s(\tau_{13}) = (1,0)$.
	
	\textbf{Step 4.} We show that $s(\tau_{211})_2 = 0$. Assume for a contradiction that $s(\tau_{211})_2 > 0$. Take $q \in \mathbb N$ such that $q > \frac{1}{s(\tau_{211})_2}$ and consider the profile $P$:
	\begin{align*}
		q    &: \{a,b\} \succ \{c\} \succ \{d\} \\
		q    &: \{a,b\} \succ \{d\} \succ \{c\} \\
		4    &: \{a\} \succ \{c\} \succ\{d\} \succ\{b\} \\
		2q-2 &: \{c,d\} \succ \{a,b\} \\
		5    &: \{b,c,d\} \succ \{a\}
	\end{align*}
	In this profile, $a$ is in the top indifference class of more than half of the vote, so by respect for cohesive majorities we should have $f(P) \subseteq \{a,b\}$. The scores are $S(a) = 2q+4$, $S(b) = 2q+5$, $S(c) = S(d) = 2q+3 + qs(\tau_{211})_2$. Since $q > \frac{1}{s(\tau_{211})_2}$, $S(d) > S(a)$. Thus $a$ is eliminated first. The new scores are $S(b) = 2q$, $S(c) = 2q+2$ and $S(d) = 2q-2$ so $d$ is eliminated next, and $c$ wins the majority vote against $b$. This contradicts respect for cohesive majorities. We conclude that $s(\tau_{211})_2 = 0$. By combining this with Step 2, we obtain that $s(\tau_{211}) = (1,0,0)$.
	
	\textbf{Step 5.} From the previous four steps, we obtained that $\tau_{21}$, $\tau_{31}$, $\tau_{12}$, $\tau_{22}$ and $\tau_{211}$ are associated to approval score vectors. We continue to focus on specific order types before the induction step. In this step, we prove the result for $\tau_{112}$, $\tau_{212}$ and $\tau_{121}$.

	\textbf{Step 5.1.} We show that $s(\tau_{112}) = (1,0,0)$. We first prove that $s(\tau_{112})_1 \ge 1$. For this, assume by contradiction that $s(\tau_{112})_1 < 1$. Take $q \in \mathbb N$ such that $q > \frac{1}{1-s(\tau_{112})_1}$ and consider the profile $P$ with $q$ orders $\{a\} \succ \{b\} \succ \{c,d\}$ and $q-1$ orders $\{b,c,d\} \succ \{a\}$. Because of hypothesis on $q$, we have $S(b) \ge S(c) = S(d) > S(a)$, therefore $a$ is eliminated first. However, $a$ is in the top indifference class of more than half of the voters, so this contradicts respect for cohesive majorities.
	
	We now prove that $s(\tau_{112})_1 \le 1$. For this, assume by contradiction that $s(\tau_{112})_1 > 1$. Take $q \in \mathbb N$ such that $q > \frac{1}{s(\tau_{112})-1}$ and consider the profile $P$:
	\begin{align*}
		4   &: \{a\} \succ \{d\} \succ \{b\} \succ \{c\} \\
		1   &: \{a,b\} \succ \{c\} \succ \{d\} \\
		q+2 &: \{a,b,c\} \succ \{d\} \\
		6   &: \{b,c,d\} \succ \{a\} \\
		q   &: \{d\}\succ \{c\} \succ \{a,b\}
	\end{align*}
	Then, we can use exactly the same reasoning as in Step 3.3 but with $\tau_{112}$ instead of $\tau_{13}$ to obtain a contradiction.

	Finally, we prove that $s(\tau_{112})_2 = 0$. Assume that $s(\tau_{112})_2 > 0$. Take $q \in \mathbb N$ such that $q > \frac{1}{s(\tau_{112})_2}$ and consider the profile $P$:
	\begin{align*}
		q    &: \{a\} \succ \{b\} \succ \{c,d\} \\
		q    &: \{a\} \succ \{c\} \succ \{b,d\} \\
		q    &: \{a\} \succ \{d\} \succ \{c,b\} \\
		3q-1 &: \{b,c,d\} \succ \{a\}
	\end{align*}
	In this profile $a$ is in the top indifference class of more than half of the vote, thus it should win the election by respect for cohesive majorities. The scores are $S(a) = 3q$ and $S(b) = S(c) = S(d) = 3q-1 + qs(\tau_{112})_2$. By our hypothesis on $q$, $a$ is eliminated first, which contradicts respect for cohesive majorities. We conclude that $s(\tau_{112}) = (1,0,0)$.
	
	\textbf{Step 5.2.} We now show that $s(\tau_{212}) = (1,0,0)$.
	
	We first prove that $s(\tau_{212})_1 \ge 1$. Assume that $s(\tau_{212})_1 < 1$. Take $q \in \mathbb N$ such that $q > \frac{1}{1-s(\tau_{212})_1}$ and consider the profile $P$:
	\begin{align*}
		q   &: \{a,b\} \succ \{d\} \succ \{c,c'\} \\
		q+3 &: \{d,c,c'\} \succ \{a,b\} \\
		4   &: \{a\} \succ \{c\} \succ \{c'\} \succ \{b\} \succ \{d\} \\
		5   &: \{b\} \succ \{c\} \succ \{c'\} \succ \{a\} \succ \{d\}
	\end{align*}
	In this profile, the scores are $S(b) > S(a) = qs(\tau_{212})_1+4$, $S(d) \ge S(c) = S(c') = q+3$. Since $q > \frac{1}{1-s(\tau_{212})_1}$, $a$ is eliminated first, then $c'$ and $d$ are eliminated, and $c$ wins the majority vote against $b$. Note that $c$ and $c'$ are clones in this profile, so $c$ should also win in the profile $P'$ in which we remove $c'$. In $P'$, we have $S(b) > S(a) = q+4$ and $S(c) = S(d) = q+3$, so $c$ and $d$ are eliminated first. This contradicts independence of clones. Therefore,  $s(\tau_{212})_1 \ge 1$.
	
	We now prove that $s(\tau_{212})_1 \le 1$. Assume that $s(\tau_{212})_1 > 1$. Take $q \in \mathbb N$ such that $q > \frac{1}{s(\tau_{212})_1-1}$ and consider the profile $P$:
	\begin{align*}
		q   &: \{c,c'\} \succ \{d\} \succ \{a,b\} \\
		q+2 &: \{d\} \succ \{c\} \succ \{c'\} \succ \{a\} \succ \{b\} \\
		q+1 &: \{a\} \succ \{c\} \succ \{c'\} \succ \{b\} \succ \{d\} \\
		q+2 &: \{b\} \succ \{c\} \succ \{c'\} \succ \{a\} \succ \{d\} 
	\end{align*}
	In this profile, the scores are $S(d) \ge S(b) > S(a) = q+1$ and $S(c) = S(c') = qs(\tau_{212})_1$. Because of the hypothesis on $q$, $S(c) > S(a)$, so $a$ is eliminated first. Then, $c'$, $b$ and $d$ are eliminated and $c$ is the winner. By independence of clones, $c$ should also be the winner if we remove its clone $c'$. However, in this case, because $s(\tau_{112}) = (1,0,0)$, $c$ is eliminated first with score $S(c) =q$. This contradicts independence of clones. Therefore, $s(\tau_{212})_1 = 1$.
	
	We finally prove that $s(\tau_{212})_2 = 0$. Assume that $s(\tau_{212})_2 > 0$. Take $q \in \mathbb N$ such that $q > \frac{1}{s(\tau_{212})_2}$ and consider the profile $P$:
	\begin{align*}
		q   &: \{c,c'\} \succ \{d\} \succ \{a,b\} \\
		q-2 &: \{d\} \succ \{a\} \succ \{c\} \succ \{c'\} \succ \{b\} \\
		q-1 &: \{a\} \succ \{c\} \succ \{c'\} \succ \{b\} \succ \{d\} \\
		q   &: \{b\} \succ \{a\} \succ \{c\} \succ \{c'\}  \succ \{d\} 
	\end{align*}
	In this profile, the scores are $S(c) = S(c') = S(b) =  q$, $S(a) = q-1$,  and $S(d) = q- 2+ qs(\tau_{212})_2 $. By our hypothesis on $q$, we have $S(d) > S(a)$ and $a$ is eliminated first. By independence of clones, this implies that if we remove the clone $c'$ of $c$, $a$ should not be the winner. However, in this new profile $P'$ without $c'$,  the scores are $S(c) = S(b) = q$, $S(a) = q-1$ and $S(d) = q-2$, so $d$ is eliminated first. Then $S(c) = S(b) = q$ and $S(a) = 2q-3$, so $c$ or $b$ is eliminated next. In both cases, $a$ wins the majority vote, which contradicts independence of clones. We can conclude that $s(\tau_{212}) = (1,0,0)$.
	
	\textbf{Step 5.3.} We now focus on $\tau_{121} = (1,2,1)$. Assume first that $s(\tau_{121})_1 < 1$. Let $q \in \mathbb N$ with $q > \frac{1}{1-s(\tau_{121})_1}$ and consider the profile $P$ with $q$ orders $\{a\} \succ \{b,c\} \succ \{d\}$ and $q-1$ orders $\{b,c,d\} \succ \{a\}$. In this profile, $a$ is on top of more than half of the votes, so it should win by respect for cohesive majorities. However, the scores are $S(a) = qs(\tau_{211})_1$ and $S(b) = S(c) \ge S(d) = q-1$. By our hypothesis on $q$, $S(d) > S(a)$, and $a$ is eliminated first, a contradiction.

	Assume now that $s(\tau_{121})_1 > 1$. Let $q \in \mathbb N$ with $q > \frac{1}{s(\tau_{121})_1-1}$ and consider the profile $P$.
	\begin{align*}
		q+1 &: \{d\} \succ \{b\} \succ \{a,a'\} \\
		q   &: \{b\} \succ \{a,a'\} \succ \{d\} \\ 
		q+2 &: \{a,a'\} \succ \{b\} \succ \{d\} 
	\end{align*}
	In this profile, the scores are $S(a) = S(a') > S(d) = q+1$ and $S(b) = qs(\tau_{121})_1$. By hypothesis on $q$, we have $S(b) > S(d)$, and $d$ is eliminated first. Then $a'$ (or $a$) is eliminated and $b$ wins the majority vote against $a$ (or $a'$). Note that $a$ and $a'$ are clones in $P$, so $b$ should also be a winner in the profile $P'$ without $a'$. In $P'$, the scores are $S(a) > S(d) = q+1$ and $S(b) = q$, so $b$ is eliminated first. This contradicts independence of clones.
	
	We now prove that $s(\tau_{121})_2 = 0$.  Let $q \in \mathbb N$ with $q > \frac{1}{s(\tau_{121})_2}$ and consider the following profile $P$.
	\begin{align*}
		q    &: \{a\} \succ \{b,c\} \succ \{d\} \\
		q    &: \{a\} \succ \{b,d\} \succ \{c\} \\
		2q-1 &: \{c,d,b\} \succ \{a\}
	\end{align*}
	In this profile, $a$ is in the top indifference class of more than half of the voters, so it should be the sole winner by respect for cohesive majorities. The scores are $S(b) > S(d) = S(c) = 2q-1 + qs(\tau_{121})_2$ and $S(a) = 2q$. By hypothesis on $q$, $a$ is eliminated first. This contradicts respect for cohesive majorities and proves that $s(\tau_{121}) = (1,0,0)$.	
	
	\textbf{Step 6.} We now proceed to induction steps. First, we focus on dichotomous orders and show that for all dichotomous orders $\tau = (k,k')$ we have $s(\tau) = (1,0)$. We prove it by induction on $m = k + k'$. We know this is true for 
	$\tau_{21}$, $\tau_{12}$, $\tau_{13}$, $\tau_{31}$, and $\tau_{22}$,
	so this is true for $m= 3$ and $m=4$.  Assume by induction that it is true up to some $m \ge 4$. Let $\tau = (k,k')$ with $k+k' = m+1 \ge 5$. This means that either $k \ge 3$ or $k' \ge 3$.
	
	\textbf{Step 6.1.} Assume first that $k \ge 3$, we will show that $s(\tau)_1 = 1$. Assume for a contradiction that $s(\tau)_1 < 1$. Take $q \in \mathbb N$ such that $q > \frac{1}{1-s(\tau)_1}$ and consider the profile $P$:
	\begin{itemize}
		\item $q$ orders $\{c_1, \dots, c_{k-2},a, a'\} \succ \{b_1, \dots, b_{k'}\}$.
		\item For all $i \in [2, k-2]$, $1$ linear order $\{c_i\} \succ \{c_1\} \succ \dots \succ \{a\} \succ \{a'\}$.
		\item $1$ linear order $\{a\} \succ \{a'\} \succ \{c_1\} \succ \dots $ 
		\item $1$ linear order $\{a'\} \succ \{a\} \succ \{c_1\} \succ \dots $ 
		\item For all $j \in [1,k']$, $q-1$ linear orders $\{b_j\} \succ \{c_1\} \succ \dots  \succ \{a\} \succ \{a'\}$
	\end{itemize}
	In this profile, the scores are $S(a) = S(a') = qs(\tau)_1 + 1$, $S(c_1) = qs(\tau)_1$, $S(c_i) = qs(\tau)_1 + 1$ for $i \in [2,k-2]$ and $S(b_j) = q-1$ for $j \in [1,k']$. Since $q > \frac{1}{1-s(\tau)_1}$, $q-1 > qs(\tau)_1$ and $c_1$ is eliminated first. We do not care who wins except that it is not $c_1$. Note that in this profile, $a$ and $a'$ are clones, so in the profile $P'$ without $a'$, $c_1$ should not be a winner. In $P'$, by induction hypothesis all order types have scoring vector $(1,0,\dots,0)$. Thus, the scores are $S(a) = q+1$, $S(c_1) = q$, $S(c_i) = q+1$ for $i \in [2,k-2]$ and $S(b_j) = q-1$ for $j \in [1,k']$. Therefore, $b_j$ are successively eliminated. After this, the score of $c_1$ is $S(c_1) = q + (q-1)k'$ and the score of all other candidates is $q+1$. Since $k' > 0$, they are successively eliminated and $c_1$ is the winner. This contradicts independence of clones.
	
	Assume now for contradiction that $s(\tau)_1 > 1$. Take $q \in \mathbb N$ such that $q > \frac{1}{s(\tau)_1-1}$ and consider the profile $P$ described above, but with $q+1$ of each order of the last category instead of $q-1$. In this profile, the scores are $S(a) = S(a') = qs(\tau)_1 + 1$, $S(c_1) = qs(\tau)_1$, $S(c_i) = qs(\tau)_1 + 1$ for $i \in [2,k-2]$ and $S(b_j) = q+1$ for $j \in [1,k']$. Since $q > \frac{1}{s(\tau)_1-1}$, $S(c_1) > S(b_j)$ for all $b_j$, and some $b_j$ is eliminated first. From this point, all scoring vectors are $(1,0,\dots,0)$ by induction hypothesis. The score of $c_1$ is now $S(c_1) = 2q+1$ and the score of any other candidate is $q+1$. As long as $c_1$ is not eliminated, no candidate can have a score higher than $q+2$, so $c_1$ survives until the end and is the winner of the election. By independence of clones, it should also be the winner in $P'$ in which we remove the clone $a'$ of $a$. In this profile, all scoring vectors are $(1,0,\dots,0)$ by induction hypothesis. The scores are $S(c_1) = q$ and $q+1$ for all other candidates. Thus, $c_1$ is eliminated first, which contradicts independence of clones. 
	This implies that $s(\tau)_1 = 1$.
	
	\textbf{Step 6.2.} If $k \le 2$ but $k' \ge 3$, we use a similar reasoning. Let $q \in \mathbb N$, and consider the following profile $P$:
	\begin{itemize}
		\item $q$ orders $\{c_1, \dots, c_{k}\} \succ \{b_1, \dots, b_{k'-2},a,a'\}$.
		\item $q$ linear orders $\{c_1\} \succ \{a\} \succ \{a'\} \succ \dots $.
		\item If $k = 2$, $q+1$ linear orders $\{c_2\} \succ \{c_1\} \succ \dots \succ \{a\} \succ \{a'\}$.
		\item $2q+2 $ linear orders $\{a\} \succ \{a'\} \succ \{c_1\} \succ \dots $ 
		\item $2q+2$ linear orders $\{a'\} \succ \{a\} \succ \{c_1\} \succ \dots $ 
		\item For all $j \in [1,k'-2]$, $2q\pm1$ linear orders $\{b_j\} \succ \{c_1\} \succ \dots  \succ \{a\} \succ \{a'\}$
	\end{itemize}
	If we assume $s(\tau)_1 < 1$, we take $q > \frac{1}{1-s(\tau)_1}$ and the profile in which each order of the last category appears $2q-1$ times. In this profile, the scores are $S(c_1) = q + qs(\tau)_1$, $S(b_j) = 2q-1$ for all $j \in [1,k'-2]$, $S(a) = S(a') =2q+2$ and if $k = 2$, $S(c_2) > S(c_1)$. Because of hypothesis on $q$, $S(b_j) > S(c_1)$ for all $b_j$ and $c_1$ will be eliminated first. In this profile, $a$ and $a'$ are clones, so in the profile $P'$ in which we remove $a'$, $c_1$ should not be a winner. In $P'$, the scores are $S(c_1) = 2q$, $S(a) = 2q+2$, $S(b_j) = 2q-1$ for all $j \in [1,k'-2]$ and if $k = 2$, $S(c_2) = 2q+1$. Thus, the candidates $b_j$ are eliminated, then $c_2$ if $k = 2$. At least two candidates are eliminated since $m \ge 4$, and at least one of them is a $b_j$ since $k' \ge 3$, so the score of $c_1$ is now $S(c_1) \ge 2q + (q+1) + (2q-1) = 5q$, and the score of $a$ is $S(a) = 4q+4$. $c_1$ is the winner, which contradicts independence of clones. 
	
	If we now assume that $s(\tau)_1 > 1$, we take $q > \frac{1}{s(\tau)_1 - 1}$ and the profile $P$ in which each order of the last category appears $2q+1$ times. This time, some $b_j$ is eliminated first, giving $2q+1$ additional points to $c_1$, and all the other candidates $b_j$ and $c_2$ (if $k = 2$) are eliminated successively, then $a'$ is eliminated, and the scores are $S(a) = 4q+4$ and $S(c_1) \ge 5q+1$ for the same reasons as above, thus $c_1$ is the winner. However, if we remove the clone $a'$ of $a$, $c_1$ is eliminated first with the lowest score of $S(c_1) = 2q$. This contradicts independence of clones. We conclude that $s(\tau) = (1,0)$ for all dichotomous order types $\tau$.
	
	\textbf{Step 7.} We now proceed to the ultimate step. Note that for all remaining order types $\tau$ where $k = |\tau|$ is the length of the order, one of the following is true: (1) $\tau(1) \ge 3$, (2) $\tau(k) \ge 3$, (3) there is some $j \notin \{1,k\}$ such that $\tau(j) \ge 2$ or (4) there is some  $j \notin \{1,k-1\}$ such that $\tau(j) = \tau(j+1) = 1$. Note that any order type of size $k = |\tau| \ge 4$ satisfies either (3) or (4). Moreover, we already prove the results for dichotomous order types. The only order types of size $k=3$ which do not satisfy any of the above conditions are $\tau_{111}, \tau_{211}, \tau_{112}$ and $\tau_{212}$, and we already prove the results for all of these. Moreover, we also proved the result for $\tau_{121}$ and thus for all order types on $m \ge 4$ candidates.
	We prove the result for the remaining order types by induction on the number of candidates $m$. Let us assume that the result is true up to some $m \ge 4$, let's prove it is true for $m+1$. Let $\tau$ be an order type for $m+1$ candidates. We will use a similar idea as in the previous steps to prove that $s(\tau) = (1,0,\dots,0)$. Let $q \in \mathbb N$ and $P$ be the following profile on $C = \{a,a', b,d , c_1, \dots c_{m-3}\}$:
	\begin{itemize}
		\item $q$ orders of the type $\tau$. We have $b$ in the top indifference class ($b \in C_1$) and $d$ in the last indifference class ($d \in C_k$). $a$ and $a'$ are in a position to be clones, which is possible because $\tau$ satisfies one of the 4 conditions detailed above. We either put them both in indifference class $j \in [1,k]$ (if $\tau$ satisfies one of the first three conditions) or we put $a$ alone in indifference class $j \notin \{1,k-1\}$ and $a'$ alone in indifference class $j+1$ (if $\tau$ satisfies the fourth condition).
		\item $q$ linear orders $\{b\} \succ \{d\} \succ \dots \succ \{a\} \succ \{a'\}$.
		\item $2q\pm1$ linear orders $\{d\} \succ \{b\} \succ \dots \succ \{a\} \succ \{a'\}$.
		\item For all $c_j$ that are in the top indifference class of orders of type $\tau$, $q+2$ linear orders $\{c_j\} \succ \{b\} \succ \dots \succ \{a\} \succ \{a'\}$.
		\item For all $c_j$ that are not in the top indifference class of the orders of type $\tau$, $2q+2$ linear orders $\{c_j\} \succ \{b\} \succ \dots \succ \{a\} \succ \{a'\}$.
		\item If $a$ and $a'$ are in the top indifference class of the orders of type $\tau$, $2$ orders of this kind, otherwise, $2q+2$ linear  orders $\{a\} \succ \{a'\} \succ \{b\} \succ \dots$
		\item If $a$ and $a'$ are in the top indifference class of the orders of type $\tau$, $2$ orders of this kind, otherwise $2q+2$ linear  orders $\{a'\} \succ \{a\} \succ \{b\} \succ \dots$.
	\end{itemize}
	
	\textbf{Step 7.1.} We assume first that $s(\tau)_1 > 1$. Then, we take the profile $P$ with $2q+1$ orders $\{d\} \succ \{b\} \succ \dots \succ \{a\} \succ \{a'\}$. Moreover, we take $q > \frac{1}{s(\tau)_1-1}$. The scores are $S(a) \ge S(a') \ge 2q+2$, $S(b) = q + qs(\tau)_1$, $S(d) = 2q+1$ and for all $c_j$, $S(c_j) \ge 2q+2$. Because of hypothesis on $q$, $S(b) > S(d)$. Moreover, $S(a) > S(d)$ and $S(c_j) > S(d)$ for all $j \in [1, m-3]$. Thus, $d$ is eliminated first. From this point all order types have scoring vector $(1,0,\dots,0)$ by induction hypothesis. The new score of $b$ is $S(b) = 4q+1$. As long as $b$ is not eliminated, the score of any other candidate is at most $2q+2$, except for $a$ which get a score $4q+4$ once its clone $a'$ is eliminated. In the end, only $a$ and $b$ remain, with $S(b) \ge q + q + 2q+1 + (m-3)(q+2) \ge 5q+3$ (since $m \ge 4$) and $S(a) = 4q+4$. Thus, $b$ wins the majority vote. By independence of clones, $b$ should also win in the profile without the clone $a'$ of $a$. By induction hypothesis, all orders have scoring vector $(1,0,\dots,0)$ in this profile. The scores are $S(b) = 2q$, $S(d) = 2q+1$, $S(c_j) = 2q+2$ for all $j \in [1,m-3]$ and $S(a) = 4q+4$. Thus, $b$ is eliminated first, which contradicts independence of clones.
	
	\textbf{Step 7.2.} We now assume that $s(\tau)_1 < 1$. Then, we take the profile $P$ with $2q-1$ orders $\{d\} \succ \{b\} \succ \dots \succ \{a\} \succ \{a'\}$. Moreover, we take $q > \frac{1}{1-s(\tau)_1}$. In this profile, the scores are $S(a) \ge S(a') \ge 2q+2$, $S(b) = qs(\tau)_1 + q$, $S(d) = 2q-1$ and $S(c_i) \ge qs(\tau)_1 + q+2 > S(b)$. Because of hypothesis on $q$, we have that $S(d) > S(b)$. Thus, $b$ is eliminated first. By independence of clones, this implies that in the profile $P'$ without the clone $a'$ of $a$, $b$ should not be a winner. By induction hypothesis, all orders have scoring vector $(1,0,\dots,0)$ in this profile, and thus the first candidate eliminated is $d$ with score $S(d) = 2q-1$. The score of $b$ is then $S(b) = 4q-1$, the score of $a$ is $S(a) = 4q+4$ and all other candidates have score at most $2q+2$ until $b$ is eliminated. Thus, the $c_j$ are successively eliminated until only $a$ and $b$ remain. The scores are $S(a) = 4q+4$ and $S(b) \ge q+q+(2q-1)+(m-3)(q+2) \ge 5q+1$ (since $m \ge 4$). $b$ wins the majority vote, which contradicts independence of clones. Therefore $s(\tau)_1 = 1$.
	
	\textbf{Step 7.3.} We finally show that $s(\tau)_2 = 0$. Assume that $s(\tau)_2 > 0$. Let $q > \frac{1}{s(\tau)_2}$. If $\tau(1) = 1$, consider the profile $P$ with $C = \{a,c_1,\dots,c_m\}$:
	\begin{itemize}
		\item For each $i \in[1,m]$, $q$ orders of the type $\tau$ with $C_1 = \{a\}$ as the top indifference class and $C_2 = \{c_i, \dots,c_{i+\tau(2)-1}\}$ as the second indifference class (subscripts should be considered modulo $m$).
		\item $mq-1$ dichotomous orders $\{c_1, \dots, c_m\} \succ \{a\}$.
	\end{itemize}
	In this profile, $a$ appears at the top of more than half of the votes, so by respect for cohesive majorities it should be the sole winner. The scores are $S(a) = mq$ and $S(c_i) \ge mq-1 + qs(\tau)_2 > mq$ by hypothesis on $q$. Thus, $a$ is eliminated first, which contradicts respect for cohesive majorities. 
	
	Now, if $\tau(1) \ge 2$, consider the following profile $P$ on $C = \{a,a',d,b, c_1, \dots,c_{m-3}\}$:
	\begin{itemize}
		\item $q$ orders of type $\tau$ with $a$ and $a'$ in the top indifference class $C_1$, $b$ in the second indifference class $C_2$ and $d$ in the last indifference class $C_k$ (which is not the second one because $\tau$ is not dichotomous).
		\item $q-2$ linear orders $\{b\} \succ \{d\} \succ \{a\} \succ \{a'\} \succ \dots$
		\item $q-1$ linear orders $\{d\} \succ \{b\} \succ \{a\} \succ \{a'\} \succ \dots$
		\item For all $c_i$ that are not in the top indifference class $C_1$ of the orders with type $\tau$, $q$ linear order $\{c_i\} \succ \{d\} \succ \{b\} \succ \{a\} \succ \{a'\} \succ \dots$
	\end{itemize}
	In this profile, the scores are $S(a) = S(a') = q$, $S(b) = q-2 + qs(\tau)_2$, $S(d) = q-1$ and $S(c_i) \ge q$ for $i \in [1,m-3]$. By our hypothesis on $q$, $S(b) > S(d)$ and $d$ is eliminated first. By our induction hypothesis, all scoring vectors are now $(1,0,\dots,0)$. The score of $b$ is now $S(b) = 2q-3$ and the score of any other candidate is at most $q$ until $b$ is eliminated. Thus, all candidates are successively eliminated until only $b$ remains and wins. By independence of clones, $b$ should also win in the profile $P'$ in which we remove the clone $a'$ of $a$. In this profile, $b$ has the lowest score $S(b) = q-2$, so it is eliminated first, which contradicts independence of clones. Therefore, $s(\tau)_2 = 0$.
	
	We proved that $s(\tau)_1 = 1$ and $s(\tau)_2 = 0$, and by definition $s(\tau)_1 \ge s(\tau)_2 \ge s(\tau)_3 \ge \dots \ge s(\tau)_k \ge 0$, so $s(\tau) = (1,0,\dots,0)$. The induction concludes that for all order types $\tau$, we have $s(\tau) = (1,0,\dots,0)$, which means that $f$ is actually Approval-IRV.
\end{proof}

\subsection{Proof of \Cref{thm:monocharacterization}}
\label{app:monocharacterization-proof}

\monocharacterization*
\begin{proof}
	\Cref{thm:AVIRVmono} already shows that this rule is indifference monotonic. The consistency with IRV is clear. Let's now show that no other elimination scoring rule satisfies these two axioms.

	For this, we prove that for all order types $\tau = (\tau(1), \dots, \tau(k))$, the associated scoring vector is $s(\tau) = (1,0,\dots,0)$.
	
	First, we use consistency with IRV to show that it is the case for all linear orders. Because there is only one possible linear order type $\tau = (1, \dots, 1)$ for each number of candidates $m$, we denote $s(m) = s(\tau)$ for $|\tau| = m$. We know that $s(m)_1 > 0$ otherwise $s(m) = (0,0,\dots,0,0)$ and $a$ is not the only winner in the profile with one ranking where $a$ is first $\{a\} \succ \dots$. Thus, we can assume without loss of generality that for all $m$, $s(m)_1 = 1$ (as there is only one linear order type for each number of candidates). We now prove that for all $m$, $s(m)_2 = 0$.
	
	We prove it by induction on $m$. It is clearly true for $m=2$ as $\tau = (1,1)$ so $s(2) = (1,0)$. Assume it is true for $m \ge 2$, we prove it for $m+1$.
	
	Assume for contradiction that $s(m+1)_2 > 0$. Let $q \in \mathbb N$ with $q > \frac{1}{s(m+1)_2}$ and consider the profile $P$ on  $C = \{a,c_1,\dots,c_m\}$:
	\begin{itemize}
		\item For each $i \in [1,m]$, $q$ linear orders $\{c_i\} \succ \{a\} \succ \{c_{i+1}\} \succ \dots \succ \{c_{i+m-1}\}$, where the index of the $c_i$ have to be taken modulo $m$.
		\item $q-1$ linear orders $\{a\} \succ \{c_1\} \succ \dots \succ \{ c_m\}$.
	\end{itemize}
	In this profile, observe that each $c_i$ appears at least $q$ times last in a linear order. With IRV, $a$ is eliminated in the first round, as it has score $S(a) = q-1$ while all other candidates have score $s(c_i) = q$. However, here the scores are $S(a) = (q-1) + mqs(m+1)_2$, and $S(c_i) \le q + (m-1)qs(m+1)_2 + q\cdot 0$. We only upper bound because we know that $s(m+1)_i \le s(m+1)_2$ for all $i \ge 3$. However, because we assumed $q > \frac{1}{s(m+1)_2}$, we have $S(a) > S(c_i)$ for all $c_i$. Thus, one $c_i$ is eliminated, assume $c_1$ without loss of generality. By induction hypothesis, all scoring vectors are of the form $(1,0,\dots,0)$ from this point. The new score of $a$ is $S(a) = 2q-1$ and the score of all other candidates is upper bounded by $q$ until $a$ is eliminated. Therefore, $a$ is never eliminated, and wins the election, a contradiction. This shows that $s(m+1)_2 = 0$. Since $s(m+1)_i \le s(m+1)_2$ for all $i \ge 3$, then the scoring vector of the linear order on $m+1$ candidates is $s(m+1) = (1,0,\dots,0)$.

	We can now show that this is true for all order types. The proof is done by induction on the number of candidates $m$. It is clearly true for $m = 2$ as the only possible order is linear. Assume that it is true up to $m-1$ and let us show it for $m \ge 3$.
	
	To show it is the case for all order types on some number of candidates $m \ge 3$, we will do another induction, this time on the number of candidates that are not alone in their indifference class (i.e., on $p = \sum_{i : \tau(i) > 1} \tau(i)$). For $p = 0$, the order is linear, so we already know that its scoring vector is $(1,0, \dots, 0)$. Assume it is true up to some $p \ge 0$, and let us prove this is also true for $p+1$. 
	
	Assume by contradiction that there is an order type $\tau = (\tau(1), \dots, \tau(k))$ with $\sum_{i : \tau(i) > 1} \tau(i) = p+1$ but the scoring vector is such that $s(\tau)_1 \ne 1$ or $s(\tau)_2 > 0$. Let $j$ be the minimal index such that $\tau(j) > 1$. Define the order type $\tau' = (\tau(1), \dots, \tau(j)-1, 1, \tau(j+1), \dots, \tau(k))$. Equivalently, we have $\tau'(i) = \tau(i)$ if $i < j $, $\tau'(j) = \tau(j)-1$, $\tau'(j+1) = 1$ and $\tau'(i) = \tau(i-1)$ for $i > j+1$. We have $\sum_{i : \tau'(i) > 1} \tau'(i) \le \sum_{i : \tau(i) > 1} \tau(i) - 1 \le p$, so by induction on $p$ its associated scoring vector is $s(\tau') = (1,0,\dots,0)$.
	
	Let us assume first that $s(\tau)_1 < 1$.  Take $q \in \mathbb N$ such that $q > \frac{3}{1-s(\tau)_1}$, and consider the following profile $P'$ with candidate set $C = \{a,b,d, c_1,\dots,c_{m-3}\}$.
	
	\begin{itemize}
		\item $q$ orders with order type $\tau'$ such that $a \in C_{j+1}$ (alone in its indifference class), $b \in C_1$, and $d \notin C_1$ (this is possible since $\tau$ has at least two indifference classes, so $\tau'$ has at least three).
		\item $q+2$ linear orders $\{b\} \succ \{d\} \succ \{a\} \succ \dots$.
		\item $2q$ linear orders $\{a\} \succ \{d\} \succ \dots$.
		\item $2q-1$ linear orders $\{d\} \succ \{a\} \succ \dots$.
		\item For all $j \in [1,m-3]$, $2q$ linear orders $\{c_j\} \succ \{d \} \succ \{ a \} \succ \dots$.
	\end{itemize}
	
	Since $\tau'$ satisfies the inductive hypothesis on $p$ and the other orders are linear orders, all order types in this profile are associated with scoring vectors of the form $(1,0,\dots,0)$. Then, the score of every $c_j$ is $S(c_j) \ge 2q$. The score of $a$ is $S(a) = 2q$, the score of $b$ is $S(b) = 2q+2$ and the score of $d$ is $S(d) = 2q-1$. Therefore, $d$ is eliminated first. By induction on $m$, for all the following steps all scoring vectors are also of the form $(1,0,\dots,0)$. Thus, the new score of $a$ is $S(a) = 4q-1$, the one of $b$ is $S(b) = 2q+2$ and the score of all other candidates is $S(c_i) \le 3q$. After each elimination, the score of $a$ increases and the one of all other candidates is always upper bounded by $3q$ as long as $a$ is not eliminated. Thus, $a$ is the winner of this election.
	
	Now, consider the profile $P$ in which we applied $a$-hover transformation to every order of type $\tau'$, and thus obtained orders of type $\tau$ (by merging $a$ with the indifference class above it). By indifference monotonicity, $a$ is still a winner in $P$. In $P$, the score of $b$ is $S(b) = q+2+qs(\tau)_1$, the score of $d$ is  $S(d) \ge 2q-1$, the score of $a$ is $S(a) \ge 2q$ and the score of all other candidates $c_j$ is $S(c_j) \ge 2q$. Because $q > \frac{3}{1-s(\tau)_1}$, we have $S(d) > S(b)$, so $b$ is eliminated first. Therefore, $d$ get ranked first in $q+2$ additional orders. By induction on $m$, for all the following steps all scoring vectors are of the form $(1,0,\dots,0)$. The new score of $d$ is $S(d) \ge 3q+1$, while the score of all other candidates is at most $3q$ as long as $d$ is not eliminated. Thus, $d$ is the sole winner instead of $a$. This shows by contradiction that $s(\tau)_1 \ge 1$.
	
	Assume now that $s(\tau)_1 > 1$. Take $q\in \mathbb N$ such that $q > \frac{1}{s(\tau)_1-1}$ and denote $P'$ the following profile.
	
	\begin{itemize}
		\item $q$ orders with order type $\tau'$ with $a \in C_{j+1}$ (alone in its indifference class), $b \in C_1$ and $d$ at the bottom of the order: If $j < k$, then $d \in C_{k+1}$ and if $j=k$, $d \in C_k$ (in the second case, $C_{k+1} = \{a\}$).
		\item $q-2$ linear orders $\{b\} \succ \{a\} \succ \{d\} \succ\dots$.
		\item $2q$ linear orders $\{a\} \succ\{b\} \succ \dots$.
		\item $2q-1$ linear orders $\{d\} \succ \{b\}\succ \{a\} >\dots$.
		\item For all $j \in [1,m-3]$, $2q$ linear rankings $\{c_j\} \succ \{b\} \succ \{a\} \succ \dots$.
	\end{itemize}
	
	In this profile, all scoring vectors are $(1,0,\dots,0)$. The score of $a$ is $S(a) = 2q$, the score of $b$ is $S(b) = 2q-2$, the score of $d$ is $S(d) = 2q-1$ and the score of all $c_i$ is $S(c_i) \ge 2q$. Thus, $b$ is eliminated first. Now, by induction on $m$ all scoring vectors are $(1,0,\dots,0)$. The score of $a$ is $S(a) \ge 3q-2$, the score of $d$ is $S(d) \le 3q-3$ and the score of all other $c_j$ is  $S(c_i) \le 3q$. Moreover, since $d$ is in the last possible position in the first $q-2$ orders (originally with order type $\tau'$), the score of $d$ is necessarily smaller than the one of all $c_j$. Thus, $d$ is eliminated next, and the score of $a$ is now at least $S(a) \ge 5q-3$. The score of all other candidates $c_j$ is upper bounded $3q$ as long as $a$ is not eliminated. The candidates $c_j$ get successively eliminated and $a$ is the winner.
	
	Consider the profile $P$ obtained from $P'$ by applying $a$-hover transformations to all orders with type $\tau'$ (thus giving an order with order type $\tau$). By indifference monotonicity, $a$ is also a winner in this profile. Note that in the rankings with order type $\tau$, $d$ is always in the last indifference class, therefore getting score $0$ from the voters with this order type. In $P$, the score of $b$ is $S(b) = qs(\tau)_1 + q-2$, the score of $d$ is $S(d) = 2q-1$, the score of $a$ is $S(a) \ge 2q$ and the score of all other candidates $c_j$ is $S(c_j) \ge 2q$. Since $q > \frac{1 }{s(\tau)_1-1}$, $S(b) > S(d)$, thus $d$ is eliminated first. By induction on $m$, we can now assume that all order types are associated with scoring vector $(1,0,\dots,0)$. Therefore, $b$ has score $S(b) = 4q-3$, and $a$ and all $c_j$ all have score upper bounded by $3q-2$ as long as $b$ is not eliminated. Moreover, the elimination of $a$ or any $c_j$ increases the score of $b$. Therefore, $b$ wins instead of $a$ in $P$. This shows by contradiction that $s(\tau)_1 = 1$. 
	
	We now show that $s(\tau)_2 = 0$. Assume by contradiction that $s(\tau)_2 > 0$. Take $q \in \mathbb N$ such that $q > \frac{1}{\alpha_2}+2$ and consider the following profile $P'$.
	
	\begin{itemize}
		\item $q-2$ orders with type $\tau'$ with $a \in C_{j+1}$ (alone in its indifference class), $b \in C_1$ and $d$ in the second indifference class ($a$ excluded): If $j \ne 1$, then $d \in C_2$ and otherwise $d \in C_3$ (because $C_2 = \{a\}$).
		\item $q+1$ linear orders $\{b\} \succ \{d\} \succ  \{a \} \succ \dots$
		\item $2q$ linear orders $\{a\} \succ \{d\} \succ  \dots$
		\item $2q-2$ linear orders $\{d\} \succ \{a\} \succ \dots$
		\item For all $j \in [1,m-3]$, $2q$ linear orders $\{c_j\} \succ \{d\} \succ \{a\} \succ \dots$.
	\end{itemize}
	
	In this profile, all order types are associated with scoring vectors of the form $(1,0,\dots,0)$. The score of $a$ is $S(a) = 2q$, the score of $b$ is $S(b) = 2q-1$, the score of $d$ is $S(d) = 2q-2$ and the score of all $c_j$ is $S(c_j) = 2q$. Thus, $d$ is eliminated first. By induction on $m$, all order types are now associated with scoring vectors of the form $(1,0,\dots,0)$. The score of $a$ is now $S(a) = 4q-2$, the score of $b$ is $S(b) = 2q-1$, and all other candidates have score lower than $3q-2$ but greater than $2q$. $b$ is eliminated next, then the candidates $c_j$. After each elimination (of $b$ or some $c_j$), the score of $a$ increases, and the score of all other candidates remains bounded by $3q-2$ as long as $a$ is not eliminated. Therefore, $a$ is the winner in $P'$.
	
	If we apply $a$-hover transformations (by moving $a$ in the indifference class above) for all orders with type $\tau'$, we obtain orders with type $\tau$, and a new profile $P$. By indifference monotonicity, $a$ should be a winner in $P$. Observe that $d$ is always in the second indifference class in orders with order type $\tau$. In $P$, the score of $b$ is $S(b) = 2q-1$, the score of $d$ is $S(d) = (q-2)\alpha_2+2q-2$, the score of $a$ is $S(a) \ge 2q$ and the score of other candidates $c_j$ is  $S(c_j) \ge 2q$. Thus, since $q > \frac{1}{\alpha_2}+2$, $S(d) >  S(b)$ and $b$ is eliminated first. Now, by induction on $m$, all order types are associated with scoring vectors of the form $(1,0,\dots,0)$. The score of $d$ is $S(d) \ge 3q-1$. The score of all other candidates (including $a$) is upper bounded by $3q-2$ as long as $d$ is not eliminated. Therefore, $a$ and the candidates $c_j$ will be successively eliminated and there eliminations can only make the score of $d$ increase as the other scores will still be upper bounded by $3q-2$. $d$ wins the election in $P$, which contradicts indifference monotonicity. We proved that $s(\tau)_2 =  0$. 
	
	Since by definition of scoring vectors we have $s(\tau)_2 \ge s(\tau)_j $ for all $j \ge 2$, this implies that the order type $\tau$ is associated to the scoring vector $s(\tau) = (1,0,\dots,0)$. Induction on $p$ concludes that this is true for all order types on $m$ candidates. Induction on $m$ concludes that it is true for all order types.
\end{proof}

\section{Further Details on Experiments}

In this appendix, we give further details on our experiments. In particular, we describe the various datasets we used in \Cref{app:experiments-datasets}, and we show and discuss the complete set of experimental results we obtained for the single-winner rules (Approval-IRV and Split-IRV) in \Cref{app:experiments-results}.

\subsection{Datasets}
\label{app:experiments-datasets}

In our experiments, we generated profiles of weak orders in two steps: first, we sample a profile of linear orders (rankings), and second, we introduce indifferences. 

We now describe the several models we used to obtain profiles of linear orders.
\begin{itemize}
	\item \textbf{Impartial Culture:} rankings are drawn i.i.d. uniformly at random among all $m!$ possible rankings.
	\item \textbf{Mixture of $k$ Mallows:} we first randomly sample $k$ central rankings $\sigma_1, \dots, \sigma_k$ uniformly at random. Then, for each voter we chose uniformly at random one of the $k$ central rankings, and we sample a ranking deviating from this central ranking using a Mallows model with dispersion parameter $\phi\in [0,1]$. In our experiments, we take $\phi = 0.5$ and $k=4$. We recall that for a Mallows model of parameter $\phi$ and central ranking $\sigma$, the probability to sample a ranking $\rho$ is $\phi^{\text{KT}(\sigma,\rho)}/C$ where KT is the Kendall-Tau distance and $C$ a normalization constant.
	\item \textbf{$d$-dimensional Euclidean:} voters and candidates are associated to ideal positions $p(v), p(c)$ in a $d$-dimensional metric space (here $\mathbb R^d$), and voters rank candidates in order of increasing distance between the voter's position $p(v)$ and candidates' positions $p(c)$. In our experiments, we take $d \in \{1,2\}$, and we sample the positions uniformly at random from the unit square or the unit disc.
	\item \textbf{Sampling from real data:} using a dataset of real rankings (potentially with weights), we randomly sample rankings from the dataset with replacement (based on the weights if they exist, otherwise uniformly at random).
\end{itemize}

For the last method, we need real datasets. We used data from two sources, described below.
\begin{itemize}
	\item \textbf{French presidential elections:} in 2017 and 2022, French voters could participate in online surveys in which they could try different alternative voting methods, such as approval or IRV \citep{bouveret_2018_1199545,delemazure_2024_10568799}. We use the preferences given by the participants for the IRV method. As the authors note, the sample of participants is heavily biased towards the left, so in order to obtain a distribution of the votes more faithful to the population-wide distribution of opinions, we re-weighted the voters based on their official vote at the election, as proposed by the authors.
	\item \textbf{Irish election:} we use the datasets of elections held in Dublin, Ireland, in 2002 for which the IRV method with truncated ballot was used, as available in PrefLib \citep{MaWa13a}.
\end{itemize}

In both cases, voters were allowed to only give truncated rankings, for instance by giving only their top-4 candidates, and implicitly reporting indifference between all the others. In our experiments, we only kept the voters who gave full rankings. (This is the reason we did not use datasets from U.S. elections \citep[e.g.,][]{otis2022data}, which contain only very truncated rankings.) The following table describes the number of voters $n$ and candidate $m$ for each election. Note that the number of voters $n$ we provide is the one after removing all incomplete orders.

\begin{center}
\begin{tabular}{ccccc}
	\toprule
	& France 2017 & France 2022 & Dublin Meath & Dublin West \\
	\midrule
	$n$  & 404 &5\,126&3\,165& 4\,810\\
	$m$ & $11$ &$12$& $14$ & $9$ \\
	\bottomrule
\end{tabular}
\end{center}

Finally, the two methods we used to introduce indifferences in the complete rankings of these datasets are already described in \Cref{sec:expe}.

\subsection{Results}
\label{app:experiments-results}

\begin{figure}[t]
	\centering
	\includegraphics[width=1\linewidth]{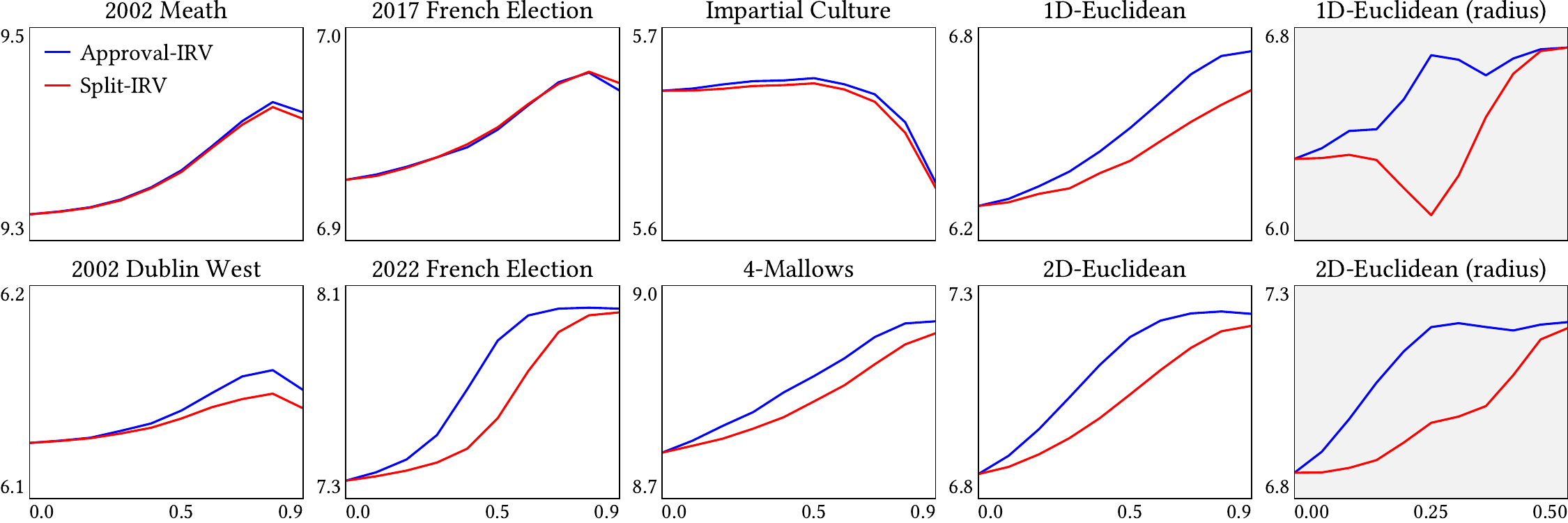}
	\caption{Average Borda score of the winner (normalized by dividing by $n$) for various datasets.}
	\label{fig:exp-borda}
\end{figure}

\begin{figure}[t]
	\centering
	\includegraphics[width=1\linewidth]{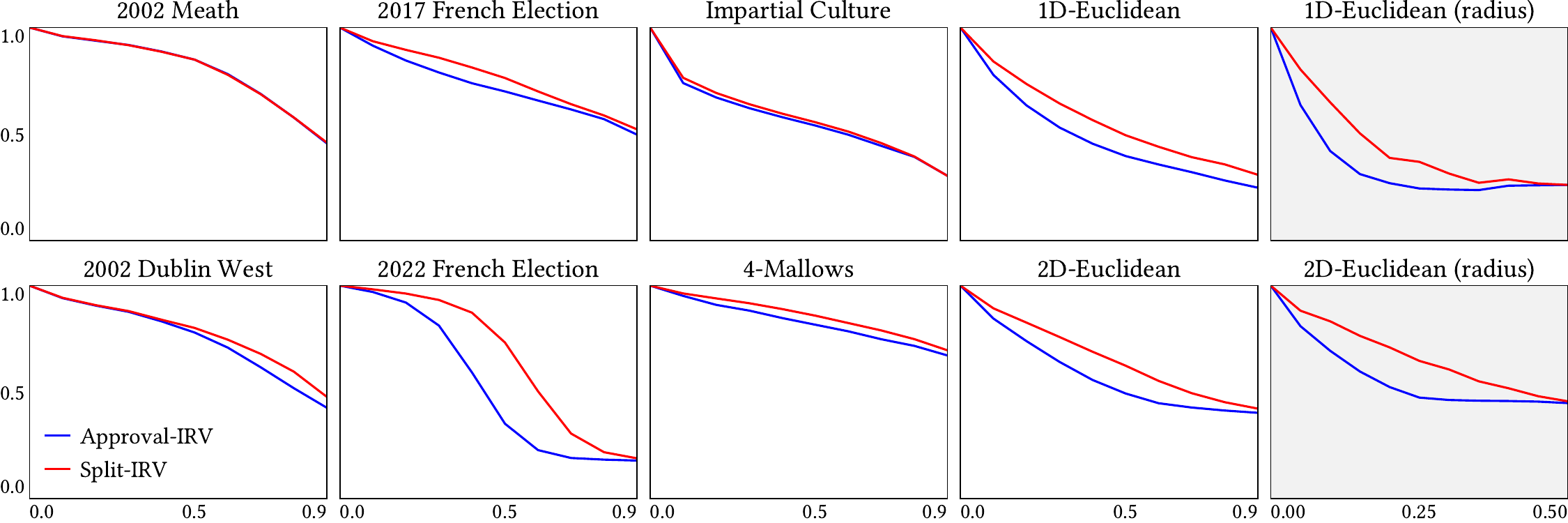}
	\caption{Frequency of agreement between the rule and linear-order IRV for various datasets.}
	\label{fig:exp-similarity}
\end{figure}

\begin{figure}[t]
	\centering
	\includegraphics[width=1\linewidth]{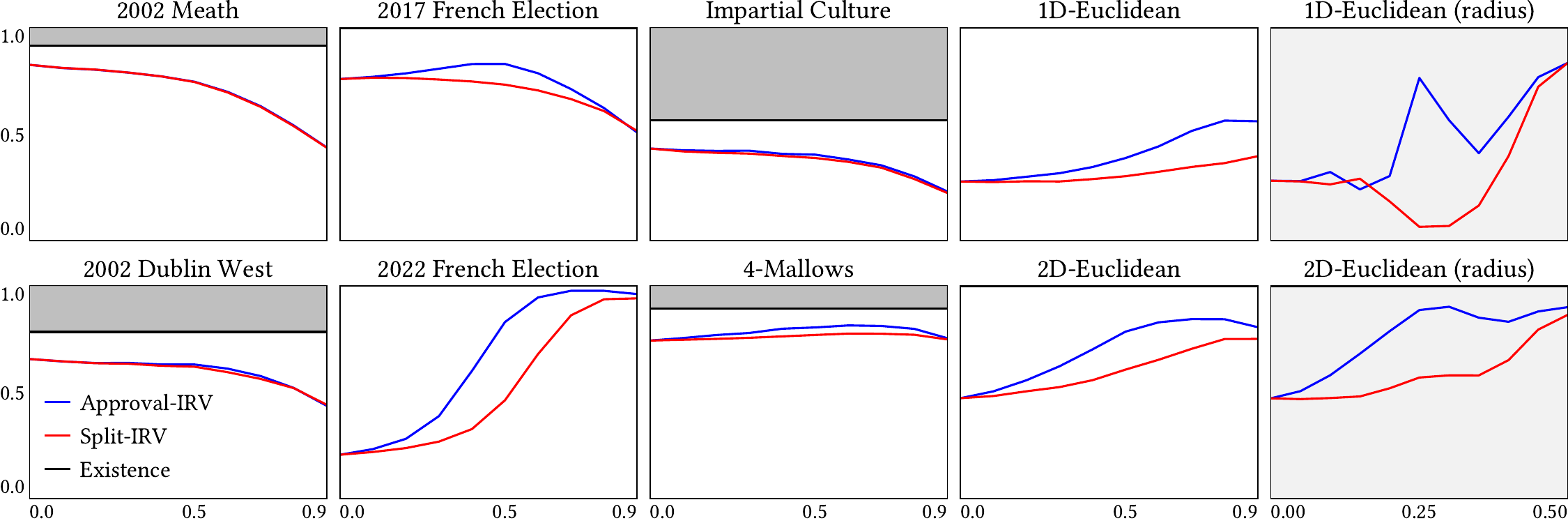}
	\caption{Frequency of finding the Condorcet winner, and frequency of such candidate existing for various datasets.}
	\label{fig:exp-condorcet}
\end{figure}

\begin{figure}[t]
	\footnotesize
	\setlength{\tabcolsep}{2pt}
	\begin{subfigure}{0.47\textwidth}
		\centering
		\includegraphics[width=\linewidth]{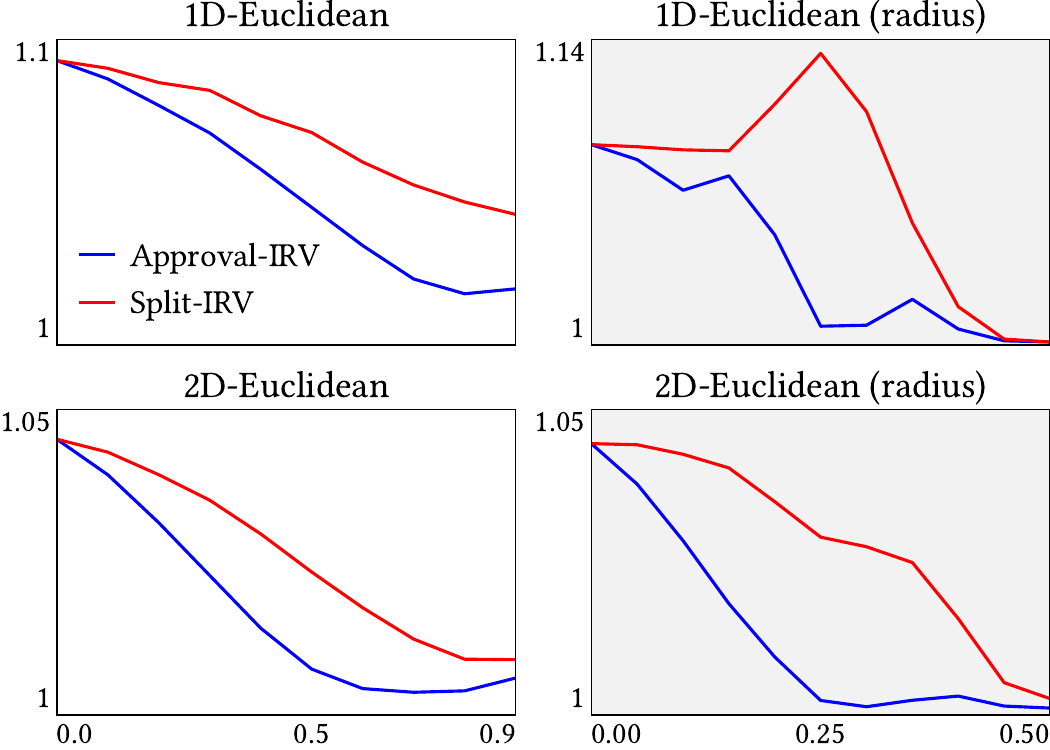}
		\caption{Average distortion of the winner.}
		\label{fig:exp-distortion}
	\end{subfigure}
	\quad
	\begin{subfigure}{0.47\textwidth}
		\centering
		\includegraphics[width=\linewidth]{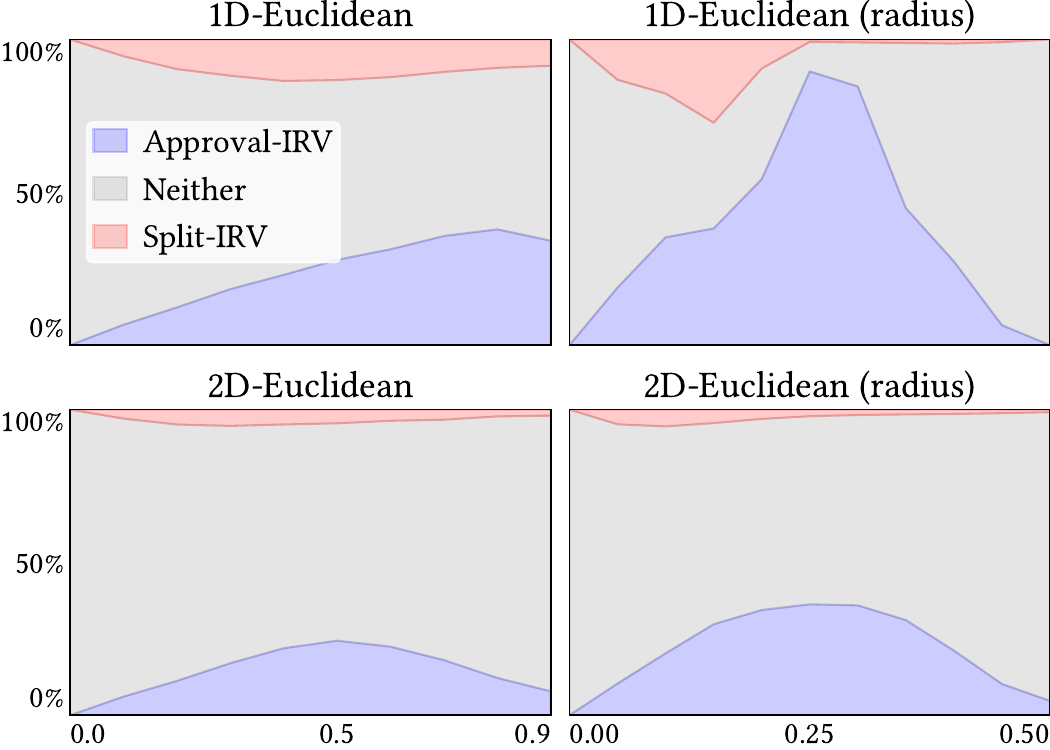}
		\caption{Rule with lowest distortion.}
		\label{fig:exp-best}
	\end{subfigure}	
	\caption{The average distortion of the winner (Figure a)  and the frequency of returning the candidate with the lowest distortion for each rule (Figure b).}
	\label{tabfig:expe-distortion}
\end{figure}

As explained in \Cref{sec:expe-eval}, we compare Approval-IRV and Split-IRV based on the winner they select. We present here the results on 5 additional datasets. Moreover, in addition to the average Borda score of the winning candidate chosen by each rule (\Cref{fig:exp-borda}) and the frequency of agreement between the rules and linear-order IRV (\Cref{fig:exp-similarity}), we provide in \Cref{fig:exp-condorcet} the frequency of finding the Condorcet winner for each rule. Of course, this frequency is upper bounded by the frequency with which a Condorcet winner exists, and so we also indicate that frequency for each dataset, noting that this value is not always $100\%$.

For Euclidean datasets, we show in \Cref{tabfig:expe-distortion} the average \emph{distortion} \citep{anshelevich2018approximating} of the winner selected by the rules. For a given Euclidean profile in which voters and candidates are associated to positions $p(v), p(c) \in \mathbb R^d$, the \emph{cost} of a candidate is defined as $\text{cost}(c) = \sum_v \|p(v)-p(c)\|$ (where $\|\cdot\|$ is the Euclidean norm), and the \emph{distortion} of the rule for a given profile $P$ in which the rule selects the candidate $w \in C$ is defined as 
\begin{align*}
	\text{dist}(f,P) = \frac{\text{cost}(w)}{\min_{c\in C} \text{cost}(c)},
\end{align*}
which is the factor by which the cost of $w$ is higher compared to the cost of the cost-optimal candidate.
The distortion of any candidate $w$ is lower bounded by $1$, and the lower it is, the closer it is to optimal. The \emph{average distortion} is simply the average of this value over all sampled profiles for a given dataset.

The conclusions we can draw from the additional datasets are similar to the observations that we already described in \Cref{sec:expe-eval}. We only add that Approval-IRV returns the Condorcet winner more frequently than Split-IRV, when such a candidate exists. We also observe a surprising ``spike'' in the curves obtained for the 1D-Euclidean model with indifferences when the radius is $r \approx 0.25$, and think it could be interesting to investigate this.

\subsection{Map of Elections}
\label{app:map}
Here, we provide a version of \Cref{fig:exp-map-1} from \Cref{sec:expe-eval} (showing Borda score differences on the map of elections) with a broader value range, to be able to identify instances where Approval-IRV performs much better than Split-IRV.

\begin{figure}[h]
	\includegraphics[width=0.55\linewidth]{./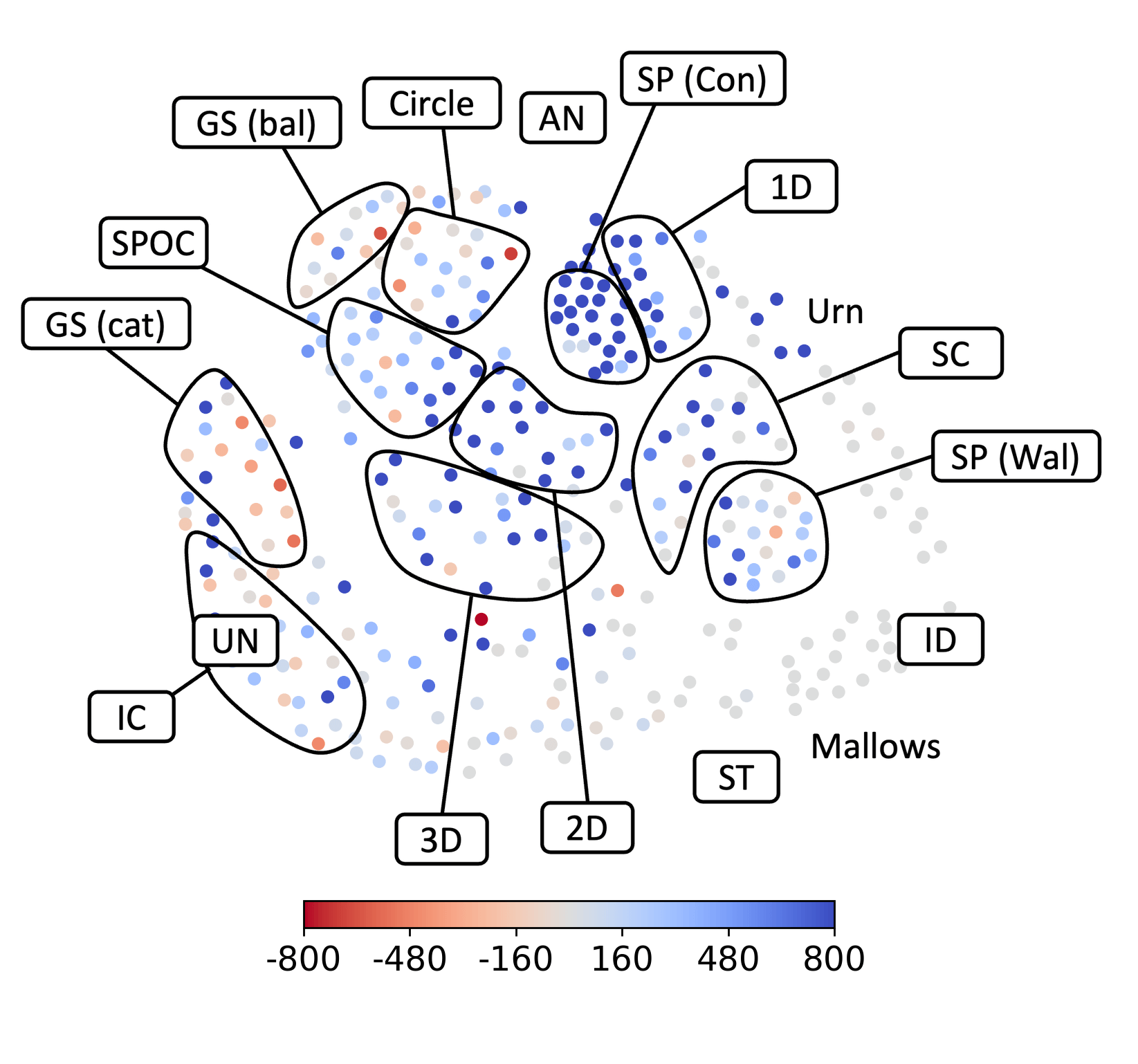}
	\vspace{-15pt}
	\caption{Map of elections showing the difference in Borda score (with respect to the underlying linear order profile) between the Approval-IRV and Split-IRV winner, capped to lie in $[-800, 800]$, summed over all probabilities of merging $p = 0.1, 0.2, \dots, 0.9$ in the coin-flip model and over 50 random sampled weak order profiles for each $p$.}
	\label{fig:exp-map-2}
\end{figure}
\end{document}